\newtheorem{theorem}{Theorem}[section]
\newtheorem{lemma}{Lemma}[section]
\newtheorem{example}{Example}[section]
\newtheorem{remark}{Remark}[section]
\newtheorem{definition}{Definition}[section]
\newtheorem{proof}{Proof.}
\newtheorem{corollary}{Corollary}[section]
\begin{document}


\title{Second Law-Like Inequalities with Quantum Relative Entropy: An Introduction}
\author{Takahiro Sagawa
\footnote{\scriptsize The Hakubi Center, Kyoto University, Yoshida-ushinomiya cho, Sakyo-ku, Kyoto, 606-8302, Japan}
\footnote{\scriptsize Yukawa Institute for Theoretical Physics, Kyoto University, Kitashirakawa-oiwake cho, Sakyo-ku, Kyoto, 606-8502, Japan}
\footnote{\scriptsize Present address for Ver.~4: Department of Applied Physics, The University of Tokyo, 7-3-1 Hongo, Bunkyo-ku, Tokyo 113-8656, Japan}
}

\date{February 5, 2012\footnote{Several corrections have been made after the first submission; The final version (Ver.~4) has been uploaded on April 17, 2023.}}

\maketitle

\begin{abstract}
We review the fundamental properties of the quantum relative entropy for finite-dimensional Hilbert spaces.  In particular, we focus on several inequalities that are related to the second law of thermodynamics, where the positivity and the monotonicity of the quantum relative entropy play key roles; these properties are directly applicable to derivations of the second law (e.g., the Clausius inequality).  Moreover, the positivity is closely related to the quantum fluctuation theorem, while the monotonicity leads to a quantum version of the Hatano-Sasa inequality for nonequilibrium steady states.  On the basis of the monotonicity, we also discuss  the data processing inequality for the quantum mutual information, which has a similar mathematical structure to that of the second law.  Moreover, we derive a generalized second law with quantum feedback control.  In addition, we review a proof of the monotonicity in line with Petz~\cite{Petz3}.
\end{abstract}

\tableofcontents

\section{Introduction}
The quantum relative entropy~\cite{Wehrl,Ohya} was introduced by Umegaki~\cite{Umegaki}, analogous to the classical relative entropy introduced by Kullback and Leibler~\cite{Kullback}. 
In the the early days~\cite{Lamford,Araki1}, a prominent result on the quantum relative entropy was a proof of the monotonicity~\cite{Lindblad2,Lindblad3,Uhlmann}, which is based on the celebrated Lieb's theorem~\cite{Lieb1} and its application to the strong subadditivity of the von Neumann entropy~\cite{Lieb2,Lieb3}.
Ever since, the quantum relative entropy has been widely applied to quantum information theory~\cite{Ohya2,Nielsen-Chuang,Vedral,Hayashi,Petz1}.  Moreover, the quantum relative entropy is useful to describe the mathematical structure of the second law of thermodynamics~\cite{Callen}.  In fact, it has been shown that the quantum relative entropy is closely related to recent progresses in nonequilibrium statistical mechanics.

The fluctuation theorem is a remarkable result in modern nonequilibrium statistical mechanics, which characterizes a symmetry of fluctuations of the entropy production in thermodynamic systems~\cite{Evans,Gallavotti,Jarzynski1,Kurchan0,Crooks1,Crooks2,Maes1,Jarzynski2,Seifert2,Kawai,Marin,Parrondo,Vaikuntanathan,Kurchan,Tasaki1,Yukawa,Mukamel,Jarzynski3,Roeck,Monnai,Esposito1,Talkner1,Talkner2,Esposito2,Saito,Gaspard,Huber,Utsumi1,Utsumi2,Esposito3,Andireux,Hanggi1,Hanggi2,Nakamura,Ohzeki,Campisi,Lutz,Horowitz1,Horowitz4}. 
From the  fluctuation theorem, we can straightforwardly derive the second law of thermodynamics, which states that the entropy production is nonnegative on average.
It has been understood  that the derivation of the second law based on the fluctuation theorem  is closely related to the positivity of the relative entropy~\cite{Kawai,Marin,Parrondo,Vaikuntanathan,Tasaki1,Esposito3}.  
In fact, without invoking the fluctuation theorem, we can directly derive the second law by using the positivity of the relative entropy~\cite{Tasaki1,Esposito4,Hasegawa}.
If the microscopic dynamics of a thermodynamic system is explicitly described by  quantum mechanics, the fluctuation theorem is referred to as the quantum fluctuation theorem~\cite{Tasaki1,Yukawa,Mukamel,Jarzynski3,Roeck,Monnai,Esposito1,Talkner1,Talkner2,Esposito2,Saito,Gaspard,Huber,Utsumi1,Utsumi2,Esposito3,Andireux,Hanggi1,Hanggi2,Nakamura,Ohzeki,Campisi,Lutz,Horowitz1,Horowitz4}, which is one of the main topics of this article.

On the other hand, the monotonicity implies that  the quantum relative entropy is non-increasing under any time evolution that occurs with unit probability in quantum open systems~\cite{Wehrl,Ohya,Nielsen-Chuang,Hayashi,Petz1,Ruskai,Plenio}.  It has been known that  the second law of thermodynamics can also be derived from the monotonicity of the classical or quantum relative entropy~\cite{Spohn} (see also~\cite{Bergmann,Qian,Yukawa2,Breuer}).
Moreover, the monotonicity can be applied to describe transitions between nonequilibrium steady states (NESSs).   In such situations, the monotonicity leads to a second law-like inequality, which we refer to as a quantum version of the Hatano-Sasa inequality~\cite{Yukawa2}. 
In the classical regime, such an  inequality can also be derived from a generalized fluctuation theorem called the Hatano-Sasa equality.
 It was first discussed for an overdamped Langevin system~\cite{Hatano}, and has been applied  to other situations and systems~\cite{Speck,Sasa,Esposito5,Esposito6,Kurchan2}. 
The quantum Hatano-Sasa equality has been discussed in Refs.~\cite{Lutz,Horowitz4}.

The monotonicity of the quantum relative entropy is also useful in quantum information theory~\cite{Nielsen-Chuang,Hayashi,Petz1}.  In particular, the data processing inequality for the quantum mutual information can be directly derived from the monotonicity as is the case for the second law of thermodynamics.  Moreover, on the basis of the monotonicity, we can derive several important inequalities  such as the Holevo bound, which identifies the upper bound of the accessible classical information that is encoded in quantum states~\cite{Holevo,Yuen_Ozawa,Fuchs_Caves}.  On the other hand, a ``dual'' inequality of the Holevo bound is related to a quantity which we refer to as the QC-mutual information  (or the Groenewold-Ozawa information)~\cite{Groenewold,Ozawa1,Buscemi,Sagawa-Ueda1}.  It  plays a key role in the formulation of a generalized second law of thermodynamics with quantum feedback control~\cite{Lloyd1,Lloyd2,Nielsen,Zurek1,Kieu,Allahverdyan,Quan1,Sagawa-Ueda1,Jacobs,SWKim,Dong,Morikuni,Abreu2,Lahiri,Lu,Funo0,Funo,Park}.

This article aims to be an introduction to the quantum relative entropy for finite-dimensional Hilbert spaces, which is organized as follows.

In Sec.~2, we review the basic properties of quantum mechanics.   In Sec.~2.1, we introduce quantum states and observables.  In Sec.~2.2, we discuss dynamics of quantum systems, which is the main part of this section.  In particular, we introduce the concept of completely-positive (CP) maps, and prove that any CP map has a Kraus representation in line with the proof by  Choi~\cite{Choi1}.

In Sec.~3, we introduce the von Neumann entropy and the quantum relative entropy.  In particular,  we discuss the monotonicity of the quantum relative entropy under completely-positive and trace-preserving (CPTP) maps in Sec.~3.3.  The strong subadditivity of the von Neumann entropy is shown to be a straightforward consequence of the monotonicity of the quantum relative entropy.

In Sec.~4, we discuss the quantum mutual information and related quantities.  In Sec.~4.1, we introduce the quantum mutual information, and prove the data processing inequality on the basis of the monotonicity of the quantum relative entropy.  In Sec.~4.2, we discuss the Holevo's $\chi$-quantity, and prove the Holevo bound.  In Sec.~4.3, we introduce the QC-mutual information and discuss its properties.  In particular, we prove  a ``dual'' inequality of the Holevo bound. We note that all of the three quantities reduce to the classical mutual information for classical cases.

In Sec.~5, we discuss some derivations of the second law of thermodynamics and its variants.  In Sec.~5.1, we discuss the relationship between thermodynamic entropy and the von Neumann entropy.
In Sec.~5.2, we discuss a derivation based on the positivity of the quantum relative entropy.  In Sec.~5.3, we discuss the quantum fluctuation theorem in a general setup, by introducing the stochastic entropy production.  The quantum fluctuation theorem directly leads to the second law, which is  equivalent to the derivation based on the positivity of the quantum relative entropy.  
In Sec.~5.4, we discuss a derivation of the second law based on the monotonicity.  We also discuss relaxation processes toward NESSs, and derive a quantum version of the Hatano-Sasa inequality for transitions between NESSs.

In Sec.~6, we discuss derivations of a generalized second law of thermodynamics with quantum feedback control, which involves the QC-mutual information.  Our derivation is based on the positivity of the quantum relative entropy in Sec.~6.1, while on the monotonicity in  Sec.~6.2.

In Sec.~7, as concluding remarks, we discuss the physical meanings and the validities of the foregoing derivations of the second law of thermodynamics in detail.

In Appendix~A, we briefly summarize the basic concepts in the linear algebra.  In Appendix~B, we prove the monotonicity of the quantum relative entropy in line with  Petz~\cite{Petz3}.


\section{Quantum States and Dynamics}

First of all, we review basic concepts in quantum mechanics. 
We consider quantum systems described by finite-dimensional Hilbert spaces.
  Let $L (\bm H, \bm H')$ be the set of linear operators from  Hilbert space $\bm H$ to $\bm H'$.  In particular, we write $L(\bm H) := L(\bm H, \bm H)$ (see also Table~1). 
 We note that  the linear algebra with the bra-ket notation is briefly summarized in Appendix~A.


\subsection{Quantum States and Observables}

In quantum mechanics, both quantum states and observables (physical quantities) can be described by operators on a Hilbert space.
Let $\bm H$ be a Hilbert space that characterizes a quantum system.  
We define $Q(\bm H) \subset L(\bm H)$ such that  any $\rho \in Q(\bm H)$ satisfies 
\begin{equation}
\rho \geq 0 \ \ \  {\rm and} \ \ \ {\rm tr} [\rho] = 1,
\end{equation}  
where $\rho \geq 0$ means that $\langle \psi | \rho | \psi \rangle \geq 0$ for any $| \psi \rangle \in \bm H$ (or equivalently, $\rho$ is positive),\footnote{In the present article, we follow the terminologies in Ref.~\cite{Nielsen-Chuang} and say that a hermitian operator $X$ is positive if $X \ge 0$ and positive definite if $X>0$.  See appendix A for details.} and ${\rm tr}[\rho] $ means the trace of $\rho$. 
We call $\rho \in Q(\bm H)$  a density operator, which describes a quantum state.  If the rank of $\rho$ is one so that $\rho = | \psi \rangle \langle \psi |$ for $| \psi \rangle \in \bm H$, $\rho$ is called a pure state and $| \psi \rangle$ is called a state vector.\footnote{In terms of the algebraic quantum theory using $C^\ast$-algebras, the definition of pure states depends on the choice of the algebra that is generated by observables.  The above definition is valid only if the set of observables equals the set of all Hermitian operators in $L(\bm H)$.  In other words, we assumed that there is no superselection rule.}  We note that any state vector satisfies $\langle \psi | \psi \rangle = 1$.

\begin{table}
\begin{center}
{\begin{tabular}{cc}
Symbols & Meanings \\
\hline
$L(\bm H, \bm H')$ & The set of linear operators from $\bm H$ to $\bm H'$ \\
$L(\bm H)$ & $L(\bm H, \bm H)$  \\
$Q(\bm H)$ & $\{ \rho \in L(\bm H) : \rho \geq 0 , \ {\rm tr}[\rho] = 1 \}$
\end{tabular}}
\caption{Symbols and their meanings.}
\end{center}
\end{table}

On the other hand, any Hermitian operator $X \in L(\bm H)$ is called an observable, which describes a physical quantity such as a component of the spin of an atom.
Any observable $X$ is assumed to be measurable without any error in principle. 
Let $X := \sum_k x_k | \varphi_k \rangle \langle \varphi_k |$ be the spectrum decomposition, where $\{ | \varphi_k \rangle \}$ is an orthonormal basis of $\bm H$.
By the error-free measurement of observable $X$, the measurement outcome is given by one of $x_k$'s.  For simplicity, we assume that $x_k \neq x_{k'}$ for $k \neq k'$; this assumption will be removed in Sec.~2.2.3.
We note that $k$ is also referred to as an outcome.  The probability of obtaining $x_k$ is given by
\begin{equation}
p(k) := \langle \varphi_k | \rho | \varphi_k \rangle,
\label{Born}
\end{equation}  
where $\rho \in Q(\bm H)$ is the density operator of the measured quantum state.  Equality~(\ref{Born}) is called the Born rule.
The sum of the probabilities satisfies
\begin{equation}
\sum_k p(k) = \sum_k \langle \varphi_k | \rho | \varphi_k \rangle = {\rm tr}[\rho] = 1.
\end{equation}
We note that $\rho \geq 0$ and  ${\rm tr}[\rho] = 1$ respectively   confirm  $p(k) \geq 0$ and  $\sum_k p(k) = 1$.
The average of the outcomes is then given by
\begin{equation}
\langle X \rangle := \sum_k p(k) x_k = {\rm tr}[X\rho],
\label{Born_average}
\end{equation} 
which is a useful formula.
If $\rho = | \psi \rangle \langle \psi |$ is a pure state, the Born rule (\ref{Born}) reduces to
\begin{equation}
p(k) = | \langle \varphi_k | \psi \rangle|^2,
\end{equation}
and Eq.~(\ref{Born_average}) to
\begin{equation}
\langle X \rangle = \langle \psi | X | \psi \rangle.
\end{equation}

\

If there are two quantum systems $A$ and $B$ described by Hilbert spaces $\bm H_A$ and $\bm H_B$, their composite system $AB$ is described by the tensor product space $\bm H_A \otimes \bm H_B$.   
Let $\rho^{AB} \in Q(\bm H_A \otimes \bm H_B)$ be a density operator of the composite system.  The partial states corresponding to $A$ and $B$ are respectively given by
\begin{equation}
\rho^A = {\rm tr}_B [\rho^{AB}], \ \rho^B = {\rm tr}_A [\rho^{AB}],
\label{states}
\end{equation}  
where ${\rm tr}_A$ and ${\rm tr}_B$ describe the partial traces in $\bm H_A$ and $\bm H_B$, respectively.
For any observable $X^A \in L(\bm H_A)$ and the identity $I^B \in L(\bm H_B)$, we have
\begin{equation}
{\rm tr}_{AB} [(X^A\otimes I^B ) \rho^{AB}] = {\rm tr}_A[X^A \rho^A],
\end{equation}
which is consistent with Eq.~(\ref{states}).
If $\rho^{AB} = \rho^A \otimes \rho^B$ is satisfied, $\rho^{AB}$ is called a product state.  
If a pure state is not a product state, it is called an entangled state.

We next show that any quantum state can be written as a pure state of an extended system including an auxiliary system.
Let $K$ be a set of indexes and  $\rho = \sum_{k \in K} p_k | \psi_k \rangle \langle\psi_k | \in Q(\bm H)$ be a state, where $| \psi_k \rangle$'s are not necessarily mutually-orthogonal.  We introduce an auxiliary system $R$ described by a Hilbert space $\bm H_R$ with an orthonormal basis $\{ | r_k \rangle \}_{k \in K}$.  
By defining a pure state
\begin{equation}
| \Psi \rangle := \sum_{k \in K} \sqrt{p_k} | \psi_k \rangle | r_k \rangle \in \bm H \otimes \bm H_R,
\label{purification}
\end{equation}
we have
\begin{equation}
\rho = {\rm tr}_R [| \Psi \rangle \langle \Psi |].
\end{equation}
The state vector $| \Psi \rangle$ in Eq.~(\ref{purification}) is called a purification of $\rho$.  We note that a purification is not unique.


\subsection{Quantum Dynamics}

\subsubsection{Unitary Evolution}

The time evolution of an isolated quantum system is given by a unitary evolution.  A density operator $\rho \in Q(\bm H)$ evolves as 
\begin{equation}
\rho \mapsto U \rho U^\dagger,
\end{equation}
where $U \in L(\bm H)$ is a unitary operator satisfying $U^\dagger U =UU^\dagger = I$.  Any unitary evolution preserves the trace (i.e., ${\rm tr}[U \rho U^\dagger] = {\rm tr}[\rho]$) and  the positivity of $\rho$.
In the continuous-time picture, the time evolution of a density operator is given by the von Neumann equation:
\begin{equation}
\frac{d\rho(t)}{dt} = - {\rm i} [H, \rho(t)] :=  - {\rm i} (H \rho(t) - \rho(t) H),
\end{equation}
where $H \in L(\bm H)$ is a Hermitian operator called the Hamiltonian of the system.  We set $\hbar = 1$ in this article.  The unitary evolution from time $0$ to $t$ is described by  unitary operator $U (t) = e^{-{\rm i}Ht}$.
If we control the system by changing some external classical parameters  such as a magnetic field, the Hamiltonian of the system can depend on time.  In such a case, the von Neumann equation is given by 
\begin{equation}
\frac{d\rho(t)}{dt} = - {\rm i} [H(t), \rho(t)],
\end{equation}
which leads to
\begin{equation}
\begin{split}
U(t) &= \sum_{n=0}^\infty (- {\rm i})^n \int_0^t dt_1 \int_0^{t_1} dt_2 \cdots \int_0^{t_{n-1}} dt_n H(t_1) H(t_2) \cdots H(t_n) \\
&=: {\rm T}\exp \left( -{\rm i} \int_0^t H(t')  dt' \right),
\end{split}
\end{equation}
where ``T'' represents the time-ordered product.

\subsubsection{Completely Positive Maps}

For an open quantum system that interacts with another quantum system, the time evolution is not given by a unitary evolution in general.  
Moreover, the input state (the initial state) and the output state (the final state) can be described by different Hilbert spaces with each other.
The time evolution is generally described by a liner map $\mathcal E : L(\bm H) \to L(\bm H')$, where $\bm H$ and $\bm H'$ are the Hilbert spaces that describe the input and output systems, respectively.

For example, we consider a time evolution in which the input state is in $Q(\bm H)$ and the output state is in $Q(\bm H')$.  If another quantum system described by $\bm H''$ comes to interact with the input system and we access the total output state, then the output system becomes larger than the input one; the output system is described by $\bm H' = \bm H \otimes \bm H''$.  
 
 The general condition for $\mathcal E$ is given by the complete positivity.  If $\mathcal E$ occurs with unit probability, it needs to be trace-preserving.  We will discuss these two properties in detail.

\begin{definition}
$\mathcal E$ is called positive if $\mathcal E (X) \geq 0$ for any $X \in L(\bm H)$ such that $X \geq 0$.  Moreover,  $\mathcal E$ is called completely positive (CP) if $\mathcal E \otimes \mathcal I_n$ is positive for any $n \in \mathbb N$, where $\mathcal I_n$ is the identity operator on $L(\mathbb C^n)$.  
\end{definition}

The positivity is enough to confirm the positivity of the output state in $Q(\bm H')$. 
On the other hand, the complete positivity confirms the positivity of the density operator of the total system including the environment.
In fact, if a time evolution was not CP, one might observe a negative probability in the total system.
 We note that  an important example of $\mathcal E$ that is positive but not CP is the transposition of operators with a matrix representation, which has been used for characterizing entanglements~\cite{Peres,Plenio1}. 

 The definition of the complete positivity can be rewritten as follows:  $\mathcal E$ is CP if $L(\bm H')$-valued matrix $ ( \mathcal E (X_{kl}) )_{1 \leq k,l \leq n}$ is positive  for any $L(\bm H)$-valued positive matrix $X_n := ( X_{kl} )_{1 \leq k,l \leq n}$ with  $X_{kl} \in L(\bm H)$.  The equivalence of the two definitions is confirmed  as follows:  $X_n$ can be written as 
\begin{equation}
X_n = \sum_{kl} X_{kl} \otimes | e_k \rangle \langle e_l | \in L(\bm H \otimes \mathbb C^n),
\end{equation}
where $\{ | e_k \rangle \}_{k=1}^n$ is an orthonormal basis of $\mathbb C^n$.  We then have 
\begin{equation}
(\mathcal E \otimes \mathcal I_n) (X_n) = \sum_{kl} \mathcal E (X_{kl}) \otimes | e_k \rangle \langle e_l | \in L(\bm H' \otimes \mathbb C^n).
\end{equation}
We note that $\mathcal E$ is called  $n$-positive  if $\mathcal E \otimes \mathcal I_n$ is positive for a $n$.

Any time evolution in quantum systems needs to be CP.  On the other hand, we introduce the second important property of  $\mathcal E$:
 
 \begin{definition}
 We call $\mathcal E  : L(\bm H) \to L(\bm H')$ trace-preserving (TP) if ${\rm tr} [\mathcal E (X)]  = {\rm tr}[X]$ for any $X \in L(\bm H)$.
 \end{definition}
 
This property  confirms the conservation of the probability as ${\rm tr} [\mathcal E (\rho)] = 1$ for $\rho \in Q(\bm H)$.
If $\mathcal E$ is both CP and TP, it is called CPTP (completely positive and trace-preserving). 
Any time evolution that occurs with unit probability needs to be CPTP.

 The followings are simple examples.
 
 \begin{example}
 Any unitary evolution $\mathcal E (\rho) := U \rho U^\dagger$  is CPTP.
 \end{example}
 
\begin{example}
Let $\rho \in L(\bm H_A \otimes \bm H_B)$.  The partial trace $\mathcal E (\rho) := {\rm tr}_B [\rho]$ is CPTP.  In fact, $\mathcal E$ is CP, because, for any positive operator $\sigma \in L(\bm H_A \otimes \bm H_B \otimes \mathbb C^n)$, $(\mathcal E \otimes \mathcal I_n) (\sigma) = {\rm tr}_B [\sigma]$ is also positive.   $\mathcal E$ is  TP, because ${\rm tr}_A [{\rm tr}_B [\rho]] = {\rm tr}_{AB} [\rho]$.
\end{example}
 
 \begin{example}
 Let $\rho_A \in Q(\bm H_A)$.  A map $\mathcal E : Q(\bm H) \to Q(\bm H \otimes \bm H_A)$ defined by $\mathcal E(\rho) := \rho \otimes \rho_A$ is CPTP.
\end{example}

We have the following example by combining the above three examples.
 
 \begin{example}
 Let $\rho_A \in Q(\bm H_A)$ be a state  and $U \in Q(\bm H \otimes \bm H_A)$ be a unitary operator.  A map $\mathcal E : Q(\bm H) \to Q(\bm H)$ defined by $\mathcal E (\rho) := {\rm tr}_A [U \rho \otimes \rho_A U^\dagger]$ is CPTP.
 \end{example}
 
 The above example is a typical description of the dynamics in  open quantum systems.  It will be shown in Sec.~2.2.4 that any CPTP map from $L(\bm H)$ to $L(\bm H)$ can be written in this form.  The next example is a generalization of the above example to the cases that the input system and the output system are different.
 
 \begin{example}
 Let $\bm H$ and $\bm H'$ be Hilbert spaces corresponding to the input and the output systems, respectively.  We introduce auxiliary systems described by $\bm H_A$ and $\bm H_B$ and assume that $\bm H \otimes \bm H_A \simeq \bm H' \otimes \bm H_B$.  Let $\rho_A \in Q(\bm H_A)$ be a state and $U \in L(\bm H \otimes \bm H_A )  \simeq L(\bm H' \otimes \bm H_B)$ be a unitary operator.  Then a linear map  $\mathcal E : L(\bm H) \to L(\bm H')$ defined by $\mathcal E(\rho) := {\rm tr}_B [U \rho \otimes \rho_A U^\dagger]$ is CPTP.
 \end{example}
 
We next consider quantum measurement processes~\cite{Neumann,Davies-Lewis,Stinespring,Kraus,Ozawa2,Nielsen-Chuang,Shimizu}.  Suppose that we perform a measurement on a quantum system and obtain outcome $k$ with probability $p(k)$.  We assume that the number of possible outcomes is finite.  Let $\rho \in Q(\bm H)$ be the pre-measurement state and $\rho_k \in Q(\bm H')$ be the post-measurement state with outcome $k$. We define a linear map $\mathcal E_k : Q(\bm H) \to Q(\bm H')$ such that 
\begin{equation}
\rho_k = \frac{1}{ p(k)}\mathcal E_k (\rho),
\end{equation}
where the probability of obtaining outcome $k$ is given by
\begin{equation}
p(k) = {\rm tr} [\mathcal E_k (\rho)].
\label{probability}
\end{equation}
The map $\mathcal E_k$ needs to be CP, but it is not TP if $p(k) \neq 1$.   In general, a time evolution that does not occur with unit probability does not need to be TP.
We note that $\mathcal E_k$ needs to satisfy 
\begin{equation}
{\rm tr}[\mathcal E_k (X) ] \leq {\rm tr}[X]
\end{equation}
for any positive $X \in L(\bm H)$, since $p(k) \leq 1$ must hold for any $\rho \in Q(\bm H)$.  The ensemble average  of $\rho_k$'s over all outcomes is given by $\sum_k p(k) \rho_k = \sum_k \mathcal E_k (\rho) =: \mathcal E (\rho)$, where  $\mathcal E := \sum_k \mathcal E_k$ is a CPTP map. We note that $\{ \mathcal E_k \}$ is called an instrument, which characterizes the measurement process.\footnote{Rigorously speaking, an instrument is a map from $K'$ to $\sum_{k \in K'} \mathcal E_k$, where $K'$ is an element of  a $\sigma$-algebra over $K = \{ k \}$.}

We note that $\mathcal E : L(\bm H) \to L(\bm H')$ is called a unital map if it satisfies $\mathcal E(I) = I'$, where $I$ and $I'$ are the identities on $\bm H$ and $\bm H'$, respectively.

 \subsubsection{Kraus Representation}

We next show that any CP map has a useful representation, which is called the Kraus representation.

\begin{theorem}[Kraus representation]
A linear map  $\mathcal E: L(\bm H) \to L(\bm H')$ is CP if and only if it can be written as
\begin{equation}
\mathcal E (\rho) = \sum_k M_k \rho M_k^\dagger,
\label{Kraus}
\end{equation}
where $\rho \in L(\bm H)$, $M_k \in L(\bm H, \bm H')$, and the sum in the right-hand side (rhs) is taken over a finite number of $k$'s.  Equality~(\ref{Kraus}) is called a Kraus representation, and $M_k$'s are called Kraus operators.
\end{theorem}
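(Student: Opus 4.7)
The plan is to prove the two directions separately. The ``if'' direction is essentially a direct verification, while the ``only if'' direction will follow Choi's construction via what is now called the Choi matrix.

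For the ``if'' direction, suppose $\mathcal E (\rho) = \sum_k M_k \rho M_k^\dagger$. For any $n \in \mathbb N$ one has $(\mathcal E \otimes \mathcal I_n)(Y) = \sum_k (M_k \otimes I_n)\, Y\, (M_k^\dagger \otimes I_n)$, and if $Y \geq 0$ is written as $Y = \sum_j |\phi_j\rangle\langle\phi_j|$ then each summand is manifestly positive; hence $\mathcal E$ is CP.

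For the ``only if'' direction, fix an orthonormal basis $\{ | i \rangle \}_{i=1}^d$ of $\bm H$ with $d := \dim \bm H$, and introduce the (unnormalized) maximally entangled vector $|\Omega\rangle := \sum_{i=1}^d |i\rangle \otimes |i\rangle \in \bm H \otimes \bm H$. I would consider the Choi operator $C := (\mathcal E \otimes \mathcal I_d)(|\Omega\rangle\langle\Omega|) = \sum_{i,j} \mathcal E(|i\rangle\langle j|) \otimes |i\rangle\langle j| \in L(\bm H' \otimes \bm H)$. Since $|\Omega\rangle\langle\Omega| \geq 0$ and $\mathcal E$ is (in particular) $d$-positive, $C \geq 0$, so $C$ admits a finite spectral decomposition $C = \sum_k |\psi_k\rangle \langle \psi_k|$ with $|\psi_k\rangle \in \bm H' \otimes \bm H$ (eigenvalues absorbed into the vectors).

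From each $|\psi_k\rangle$ I would extract a Kraus operator as follows: write $|\psi_k\rangle = \sum_i |m_k^{(i)}\rangle \otimes |i\rangle$ with $|m_k^{(i)}\rangle \in \bm H'$, and define $M_k \in L(\bm H, \bm H')$ by $M_k |i\rangle := |m_k^{(i)}\rangle$ and linear extension. By linearity of $\mathcal E$, the identity $\mathcal E(\rho) = \sum_k M_k \rho M_k^\dagger$ reduces to checking it on the basis operators $|i\rangle\langle j|$, i.e.\ to identifying $\sum_k M_k |i\rangle\langle j| M_k^\dagger = \sum_k |m_k^{(i)}\rangle\langle m_k^{(j)}|$ with $\mathcal E(|i\rangle\langle j|)$; the latter is precisely the ``$(i,j)$-block'' read off from the two expressions for $C$. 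The main technical obstacle is to keep straight the identification between vectors $|m\rangle \otimes |i\rangle \in \bm H' \otimes \bm H$ and rank-one operators $|m\rangle\langle i| \in L(\bm H, \bm H')$, under which the Choi operator's two expressions coincide block-by-block; once this is in place the reconstruction is a direct computation. Finiteness of the number of Kraus operators, bounded by $\dim \bm H \cdot \dim \bm H'$, follows from the finite-dimensionality of $\bm H' \otimes \bm H$.
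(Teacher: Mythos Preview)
Your proposal is correct and follows essentially the same approach as the paper: both directions are handled identically, with the ``only if'' direction carried out via Choi's construction---applying $\mathcal E \otimes \mathcal I$ to the rank-one positive operator $|\Omega\rangle\langle\Omega| = \sum_{ij}|i\rangle\langle j|\otimes|i\rangle\langle j|$, spectrally decomposing the resulting positive Choi operator, and reading off the Kraus operators from the components $|m_k^{(i)}\rangle$ of each eigenvector. The paper's notation differs only cosmetically (it writes $E$ for $|\Omega\rangle\langle\Omega|$ and $|x_k^i\rangle$ for your $|m_k^{(i)}\rangle$), and it likewise notes that only $d$-positivity is actually used.
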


\begin{proof}[Choi~\cite{Choi1}] 
(\textit{Proof of $\Rightarrow$}) \
Suppose that $\mathcal E$ is CP.  Let $\{ | e_i \rangle \}_i$ be an orthonormal basis of $\bm H$.
An operator $E := \sum_{ij} | e_i \rangle \langle e_j | \otimes  | e_i \rangle \langle e_j |$ is positive because $\langle \psi | E | \psi \rangle = | \sum_i  \langle e_i | \langle e_i | \psi \rangle |^2 \geq 0$ for any $| \psi \rangle \in \bm H \otimes \bm H$.  Since $\mathcal E$ is CP, $\mathcal E \otimes \mathcal I$ is positive with $\mathcal I$ the identity on $L(\bm H)$. Therefore, the following operator is positive:
\begin{equation}
(\mathcal E \otimes \mathcal I) (E) = \sum_i \mathcal E ( | e_i \rangle \langle e_j | ) \otimes  | e_i \rangle \langle e_j | \in L(\bm H' \otimes \bm H).
\end{equation}
Therefore, it has a spectrum decomposition of the form $(\mathcal E \otimes \mathcal I) (E)  = \sum_k | v_k \rangle \langle v_k |$ with $| v_k \rangle \in \bm H' \otimes \bm H$. We note that  $| v_k \rangle$ can be written as $| v_k \rangle = \sum_i | x_k^i \rangle | e_i \rangle$ with $| x_k^i \rangle \in \bm H'$.  
We then obtain 
\begin{equation}
(\mathcal E \otimes \mathcal I ) (E) = \sum_{kij} |x_k^i \rangle \langle x_k^j | \otimes | e_i \rangle \langle e_j |.
\label{proof1_1}
\end{equation}
In addition, $|x_k^i \rangle \langle x_k^j | = M_k | e_i \rangle \langle e_j | M_k^\dagger$ holds with   $M_k := \sum_i | x_k^i \rangle \langle e_i | \in L(\bm H, \bm H')$.  Therefore, we obtain
\begin{equation}
\mathcal E ( | e_i \rangle \langle e_j |) = \sum_k M_k | e_i \rangle \langle e_j | M_k^\dagger
\end{equation}
 for any $(i,j)$, which implies Eq.~(\ref{Kraus}).
\\ 
(\textit{Proof of $\Leftarrow$}) \ Suppose Eq.~(\ref{Kraus}).  Then $(\mathcal E \otimes \mathcal I_n) (\sigma) = \sum_k (M_k \otimes I_n) \sigma (M_k^\dagger \otimes I_n) $ holds for any $\sigma \in L(\bm H \otimes \mathbb C^n)$, where $\mathcal I_n$ and $I_n$ are the identities on $L(\mathbb C^n)$ and  $\mathbb C^n$, respectively.  We then obtain, for any $| \psi \rangle \in \bm H' \otimes \mathbb C^n$,
\begin{equation}
\langle \psi | (\mathcal E \otimes \mathcal I_n) (\sigma) | \psi  \rangle = \sum_k \langle \psi_k | \sigma | \psi_k \rangle, 
\end{equation}
where $| \psi_k \rangle := M_k^\dagger \otimes I_n | \psi \rangle$.  Therefore, $\langle \psi | (\mathcal E \otimes \mathcal I_n) (\sigma) | \psi  \rangle$ is positive if $\sigma$ is positive, which implies that $\mathcal E \otimes \mathcal I_n$ is positive and therefore $\mathcal E$ is CP.
$\Box$\end{proof}

Theorem~2.1 was first proved by Kraus~\cite{Kraus} based on  Stinespring's theorem~\cite{Stinespring}.  The above proof is based on Choi's proof~\cite{Choi1}.

We note that the Kraus representation is not unique.  We also note that the above proof implies that  the $N$-positivity is enough for $\mathcal E$ to have a Kraus representation with $N$ the dimension of $\bm H$.  The following theorem connects the condition of TP to the Kraus representation.

\begin{theorem}
Let $\mathcal E: L(\bm H) \to L(\bm H')$ be a CP map.  $\mathcal E$ is TP if and only if  Kraus operators satisfy
\begin{equation}
\sum_k M_k^\dagger M_k = I,
\label{Kraus_unity}
\end{equation} 
where $I$ is the identity on $\bm H$.
\end{theorem}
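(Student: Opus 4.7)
The plan is to use Theorem~2.1 to write $\mathcal E$ in Kraus form and then translate the trace condition into a constraint on $\sum_k M_k^\dagger M_k$ via cyclicity of the trace. The central identity in both directions is that, for any $\rho \in L(\bm H)$,
\begin{equation}
{\rm tr}[\mathcal E(\rho)] = {\rm tr}\!\left[\sum_k M_k \rho M_k^\dagger\right] = \sum_k {\rm tr}[M_k^\dagger M_k \rho] = {\rm tr}\!\left[\Big(\sum_k M_k^\dagger M_k\Big) \rho\right].
\end{equation}
Since the sum over $k$ is finite by Theorem~2.1, interchanging summation and trace is unproblematic.

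For the ``if'' direction, I would simply substitute $\sum_k M_k^\dagger M_k = I$ into the identity above to obtain ${\rm tr}[\mathcal E(\rho)] = {\rm tr}[I\rho] = {\rm tr}[\rho]$, which is exactly the TP condition (Definition~2.2).

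For the ``only if'' direction, I would assume TP, set $A := \sum_k M_k^\dagger M_k - I \in L(\bm H)$, and deduce from the identity that ${\rm tr}[A\rho] = 0$ for every $\rho \in L(\bm H)$. The step I expect to require the most care, though it is not really hard, is concluding $A=0$ from this: one tests against the basis $\rho = |e_i\rangle\langle e_j|$ of $L(\bm H)$ to recover every matrix element $\langle e_j | A | e_i \rangle = 0$, so that $A = 0$ and hence $\sum_k M_k^\dagger M_k = I$. Equivalently, one can note that $A$ is Hermitian (each $M_k^\dagger M_k$ is) and use the faithfulness of the Hilbert–Schmidt inner product $\langle X, Y\rangle := {\rm tr}[X^\dagger Y]$, so ${\rm tr}[A^\dagger A] = 0$ forces $A=0$ after choosing $\rho = A$. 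Either route finishes the proof cleanly.
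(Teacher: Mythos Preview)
Your proof is correct and follows essentially the same route as the paper: both use the Kraus representation and cyclicity of the trace to obtain ${\rm tr}[\mathcal E(\rho)] = {\rm tr}\big[(\sum_k M_k^\dagger M_k)\rho\big]$ for all $\rho$, then read off the equivalence. You simply spell out more carefully than the paper does the step that ${\rm tr}[A\rho]=0$ for all $\rho$ forces $A=0$.
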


\begin{proof}
Suppose that $\mathcal E$ is CPTP.  We then have
\begin{equation}
{\rm tr}[\rho] = {\rm tr} [\mathcal E (\rho)] = {\rm tr} \left[ \sum_k M_k^\dagger M_k \rho \right]
\end{equation}
for any $\rho \in L(\bm H)$, which implies Eq.~(\ref{Kraus_unity}).  Conversely, if Eq.~(\ref{Kraus_unity}) is satisfied, ${\rm tr}[\rho] = {\rm tr} [\mathcal E (\rho)] $ holds. 
$\Box$\end{proof}

We consider quantum measurements in terms of  the Kraus representation.
We first note that CP map $\mathcal E_k$ is written as 
\begin{equation}
\mathcal E_k (\rho) = \sum_i M_{ki} \rho M_{ki}^\dagger.
\label{Kraus2}
\end{equation}
Since $\mathcal E := \sum_{k} \mathcal E_k$ is assumed to be CPTP, we have
\begin{equation}
\sum_{ki} M_{ki}^\dagger M_{ki} = I.
\end{equation}
The probability~(\ref{probability}) of outcome $k$ is then written as
\begin{equation}
p(k) = {\rm tr} \left[ \sum_i M_{ki} \rho M_{ki}^\dagger \right] = {\rm tr} [E_k \rho],
\label{POVM_probability}
\end{equation}
where we defined
\begin{equation}
E_k := \sum_{i} M_{ki}^\dagger M_{ki}.
\end{equation}
It is obvious that $E_k$'s satisfy
\begin{equation}
E_k \geq 0
\label{POVM_positive}
\end{equation}
and
\begin{equation}
\sum_{k } E_k = I.
\label{POVM_unity}
\end{equation}
Inequality~(\ref{POVM_positive}) confirms the positivity of the probability (i.e., $p(k) \geq 0$), and Eq.~(\ref{POVM_unity}) confirms that $\sum_k p(k) = 1$.  Any $\{ E_k \} \subset L(\bm H)$ satisfying (\ref{POVM_positive}) and (\ref{POVM_unity}) is called the positive operator-valued measure (POVM).\footnote{Rigorously speaking, a POVM is a map from $K'$ to $\sum_{k \in K'} E_k$, where $K'$ is an element of  a $\sigma$-algebra over $K = \{ k \}$.}

The projection measurement of an observable is a special case of the foregoing general formulation of quantum measurements.
Let $X = \sum_k x_k P_k$ be an observable, where $P_k$'s are projection operators. We assume that $x_k \neq x_{k'}$ for $k \neq k'$.
If the Kraus representation of $\mathcal E_k$ is given by
\begin{equation}
\mathcal E_k (\rho) = P_k \rho P_k, 
\end{equation}
then the measurement is called the projection measurement of $X$.  In this case, the POVM  consists of projection operators $\{ P_k \}$, and the probability of outcome $k$ is given by $p(k) = {\rm tr}[P_k \rho]$, which is a slight generalization of the Born rule (\ref{Born}).  We also refer to the projection measurement of $X$ as that of $\{ P_k \}$.  In particular, if $P_k$ is written as $P_k = | \psi_k \rangle \langle \psi_k |$ for any $k$, the projection measurement of $X$ is referred to as that of  orthonormal basis $\{ | \psi_k \rangle \}$.

\begin{example}
We consider a simple model of a photodetection.  Suppose that $\bm H$ is 2-dimensional and describes a two-level atom.  Let $\{ | 0 \rangle , | 1 \rangle \} \subset \bm H$ be an orthonormal basis, where $| 0 \rangle$ and $| 1 \rangle$ respectively describe the ground state and the excited state.  The atom emits a photon with probability $p$ if it is in the excited state.  We observe the photon number with unit efficiency, where the outcome is given by ``$0$'' or ``$1$.''  In this case, the  Kraus operators are given by
\begin{equation}
M_0 := | 0 \rangle \langle 0 | + \sqrt{1-p}| 1 \rangle \langle 1 |, \ M_1 := \sqrt{p} | 0 \rangle \langle 1 |,
\end{equation}
which leads to the POVM that consists of 
\begin{equation}
E_0 := | 0 \rangle \langle 0 | +  (1-p) | 1 \rangle \langle 1 |,  \  E_1 := p | 1 \rangle \langle 1 |. 
\end{equation}
If $p=1$ holds, this measurement becomes the projection measurement of $X := x_0 | 0 \rangle \langle 0 | + x_1 | 1 \rangle \langle 1 |$, where we can define $x_0 := 0$ and $x_1 := 1$.
\end{example}

\subsubsection{Indirect Measurement Model}

We next show that any quantum measurement described by $\{ \mathcal E_k \}$ with $\mathcal E_k : L(\bm H) \to L(\bm H)$ can be written by a simple model of an indirect measurement.    

\begin{theorem}
Let $\{ \mathcal E_k \}$ be an instrument with $\mathcal E_k : L(\bm H) \to L(\bm H)$.  There exist an auxiliary system $R$ described by $\bm H_R$, unitary operator $U \in L(\bm H \otimes \bm H_R)$, a reference state $| \phi_R \rangle \in \bm H_R$, and projection operators $\{ P_k \} \subset \bm H_R$ satisfying $\sum_k P_k = I$, such that
\begin{equation}
\mathcal E_k (\rho) = {\rm tr}_R [(I \otimes P_k) U \rho \otimes | \phi_R \rangle \langle \phi_R | U^\dagger ( I \otimes  P_k) ] 
\label{measurement}
\end{equation}
for any $\rho$ and $k$, where $I$ is the identity on $\bm H$.
\end{theorem}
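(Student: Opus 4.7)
The plan is to exploit the Kraus representation (Theorem~2.1) of each $\mathcal E_k$ together with the normalization condition (Theorem~2.2) applied to the CPTP total map $\mathcal E = \sum_k \mathcal E_k$, and to build $\bm H_R$ so that its basis records both the outcome label $k$ and the internal Kraus index $i$. Concretely, for each $k$ I would write $\mathcal E_k(\rho) = \sum_i M_{ki}\rho M_{ki}^\dagger$, so that $\sum_{ki} M_{ki}^\dagger M_{ki} = I$. Let $\bm H_R$ be a Hilbert space with an orthonormal basis $\{|k,i\rangle\}$ indexed by all the pairs $(k,i)$ appearing in the Kraus sums, together with one extra distinguished vector $|\phi_R\rangle$ (so I enlarge $\bm H_R$ by one dimension if necessary to host the reference state).

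Next I would define a linear map $V:\bm H\to \bm H\otimes \bm H_R$ by
\begin{equation}
V|\psi\rangle := \sum_{ki} M_{ki}|\psi\rangle\otimes|k,i\rangle,
\end{equation}
and verify that $V$ is an isometry by computing
\begin{equation}
\langle\psi|V^\dagger V|\psi'\rangle = \sum_{ki}\langle\psi|M_{ki}^\dagger M_{ki}|\psi'\rangle = \langle\psi|\psi'\rangle,
\end{equation}
where the trace-preservation relation (\ref{Kraus_unity}) is used. Then I would define $U$ on the subspace $\bm H\otimes\mathbb{C}|\phi_R\rangle$ by $U(|\psi\rangle\otimes|\phi_R\rangle) := V|\psi\rangle$ and extend it to a unitary on the whole $\bm H\otimes\bm H_R$ by completing an orthonormal basis; this is possible because $V$ is an isometry and its image has dimension $\dim\bm H \leq \dim(\bm H\otimes\bm H_R)$.

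For the projective measurement on $R$ I would set $P_k := \sum_i |k,i\rangle\langle k,i|$, which are orthogonal projectors with $\sum_k P_k = I_R$ after including the one-dimensional projector onto $|\phi_R\rangle$ in, say, $P_0$ (this extra contribution is harmless since we only ever act on states whose $R$-part has been rotated away from $|\phi_R\rangle$ by $U$). Then a direct computation gives
\begin{equation}
(I\otimes P_k)U(|\psi\rangle\otimes|\phi_R\rangle) = \sum_i M_{ki}|\psi\rangle\otimes|k,i\rangle,
\end{equation}
and hence
\begin{equation}
(I\otimes P_k)U(\rho\otimes|\phi_R\rangle\langle\phi_R|)U^\dagger(I\otimes P_k) = \sum_{i,i'} M_{ki}\rho M_{ki'}^\dagger \otimes |k,i\rangle\langle k,i'|,
\end{equation}
whose partial trace over $R$ is exactly $\sum_i M_{ki}\rho M_{ki}^\dagger = \mathcal E_k(\rho)$, establishing (\ref{measurement}).

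The only delicate step is the unitary extension: one must check that the partial definition $U(|\psi\rangle\otimes|\phi_R\rangle) = V|\psi\rangle$ preserves inner products (which follows from the isometry property of $V$) and can then be completed to a full unitary of $\bm H\otimes\bm H_R$ by extending any orthonormal basis of the domain to one of the codomain and mapping basis to basis. Apart from this bookkeeping, the result is essentially a repackaging of the Kraus decomposition together with the observation that recording the Kraus index into an auxiliary register turns a CP map into a unitary dilation, and recording the outcome label $k$ into the same register turns the instrument into a projective von Neumann measurement on the auxiliary system.
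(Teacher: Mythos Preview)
Your argument is correct and follows the same route as the paper: write $\mathcal E_k(\rho)=\sum_i M_{ki}\rho M_{ki}^\dagger$, take $\bm H_R$ with orthonormal basis $\{|ki\rangle\}$, check that $|\psi\rangle|\phi_R\rangle\mapsto\sum_{ki}M_{ki}|\psi\rangle|ki\rangle$ is isometric (using $\sum_{ki}M_{ki}^\dagger M_{ki}=I$), extend to a unitary $U$, set $P_k=\sum_i|ki\rangle\langle ki|$, and compute the partial trace. The only difference is cosmetic: the paper lets $|\phi_R\rangle$ be an \emph{arbitrary} vector in the $\bm H_R$ already spanned by $\{|ki\rangle\}$, so that $\sum_k P_k=I_R$ holds automatically and no extra dimension or ad~hoc absorption of $|\phi_R\rangle\langle\phi_R|$ into some $P_0$ is needed.
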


\begin{proof}
Let $\{ M_{ki}  \}_{i} \subset L(\bm H)$ be  the set of Kraus operators of  $\mathcal E_k$, which is given by Eq.~(\ref{Kraus2}).  
We introduce an auxiliary system such that  $\bm H_R$ has an orthonormal basis $\{ | ki \rangle \}_{ki}$.   
We first show that, for any $| \phi_R \rangle \in \bm H_R$, there is a unitary operator $U \in L(\bm H \otimes \bm H_R)$ satisfying 
\begin{equation}
U | \psi \rangle | \phi_R \rangle = \sum_{ki} M_{ki} | \psi \rangle | ki \rangle.
\end{equation}
In fact, $U$ conserves the inner product on $\{ | \varphi \rangle | \phi_R \rangle ; \ | \varphi \rangle \in \bm H \} \subset \bm H \otimes \bm H_R$, that is,
\begin{equation}
\left( \sum_{k'i'}\langle \varphi | \langle k'i' |  M_{k'i'}^\dagger \right) \left( \sum_{ki} M_{ki} | \psi \rangle | ki \rangle \right) = \sum_{ki} \langle \varphi | M_{ki}^\dagger M_{ki} | \psi \rangle = \langle \varphi | \psi \rangle
\end{equation}
holds for any $|\psi \rangle, | \varphi \rangle \in \bm H$.  We then have
\begin{equation}
U \rho \otimes | \phi_R \rangle \langle \phi_R | U^\dagger =  \sum_{kik'i'} M_{ki} \rho M_{k'i'}^\dagger \otimes | ki \rangle \langle k'i' |.
\end{equation}
By defining  $P_{k} := \sum_{i} | ki \rangle \langle ki | \in L(\bm H_R)$, we have  
\begin{equation}
( I \otimes  P_k) U \rho \otimes | \phi_R \rangle \langle \phi_R | U^\dagger ( I \otimes  P_k)  =  \sum_{ii'} M_{ki} \rho M_{ki'}^\dagger \otimes | ki \rangle \langle ki' |,
\end{equation}
and therefore
\begin{equation}
{\rm tr}_R [ ( I \otimes  P_k) U \rho \otimes | \phi_R \rangle \langle \phi_R | U^\dagger ( I \otimes  P_k) ] = \sum_{i} M_{ki} \rho M_{ki}^\dagger  = \mathcal  E_k (\rho),
\end{equation}
which implies (\ref{measurement}).
$\Box$\end{proof}

Physically, $R$ can be regarded as a probe system such as the local oscillator of a homodyne detection, and   $| \phi_R \rangle$ as the initial state of the probe such as a coherent state.  The measured system described by $\bm H$ interacts with the probe system by the unitary evolution.  We next perform the projection measurement with $\{ P_k \}$ on the probe, and obtain the information about $\rho$.  The effect of this indirect measurement is characterized only by instrument $\{ \mathcal E_k \}$.   In the case of a CPTP map, Theorem~2.3 reduces to the following corollary.

\begin{corollary}
Let $\mathcal E : L(\bm H) \to L(\bm H)$ be a CPTP map.  There exist an auxiliary system $R$ described by $\bm H_R$, unitary operator $U \in L(\bm H \otimes \bm H_R)$, and a reference state $| \phi_R \rangle \in \bm H_R$ such that
\begin{equation}
\mathcal E (\rho) = {\rm tr}_R [ U \rho \otimes | \phi_R \rangle \langle \phi_R | U^\dagger]. 
\end{equation}
\end{corollary}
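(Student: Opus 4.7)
The plan is to read this off from Theorem~2.3 applied to the trivial instrument consisting of the single CPTP map $\mathcal{E}$ itself, or equivalently to redo the construction in that proof while dropping the outcome index.

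First I would use Theorem~2.1 to write $\mathcal{E}(\rho) = \sum_i M_i \rho M_i^\dagger$ for some Kraus operators $M_i \in L(\bm H)$, and then invoke Theorem~2.2 to guarantee $\sum_i M_i^\dagger M_i = I$ (this is where the TP condition enters, which is the only thing in the corollary beyond what Theorem~2.3 gives). Next I would introduce an auxiliary Hilbert space $\bm H_R$ with an orthonormal basis $\{|i\rangle\}$ labeled by the Kraus index, pick any fixed reference state $|\phi_R\rangle \in \bm H_R$, and define a linear map on the subspace $\{|\psi\rangle|\phi_R\rangle : |\psi\rangle \in \bm H\} \subset \bm H \otimes \bm H_R$ by
\begin{equation}
V |\psi\rangle |\phi_R\rangle := \sum_i M_i |\psi\rangle |i\rangle.
\end{equation}
The key verification, identical to the one in the proof of Theorem~2.3, is that $V$ is inner-product preserving on this subspace:
\begin{equation}
\left( \sum_j \langle \varphi | M_j^\dagger \langle j | \right)\left( \sum_i M_i |\psi\rangle |i\rangle \right) = \sum_i \langle \varphi | M_i^\dagger M_i | \psi \rangle = \langle \varphi | \psi \rangle,
\end{equation}
using $\sum_i M_i^\dagger M_i = I$.

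With that in hand I would extend $V$ to a full unitary $U \in L(\bm H \otimes \bm H_R)$ (this is the standard step of completing an isometry defined on a subspace to a unitary on the whole space, using any orthonormal extension of the image). Finally I would compute
\begin{equation}
U \rho \otimes |\phi_R\rangle\langle\phi_R| U^\dagger = \sum_{ij} M_i \rho M_j^\dagger \otimes |i\rangle\langle j|,
\end{equation}
and taking the partial trace over $R$ collapses the sum to the diagonal, giving $\sum_i M_i \rho M_i^\dagger = \mathcal{E}(\rho)$, as required.

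The only non-routine point is the extension of the partial isometry $V$ to a genuine unitary on $\bm H \otimes \bm H_R$, but this is standard once one notes that $V$ is well-defined and isometric on the specified subspace; one may need to enlarge $\bm H_R$ if it is too small, though taking $\dim \bm H_R$ equal to the number of Kraus operators already suffices. Alternatively, one can obtain the corollary with essentially no extra work by applying Theorem~2.3 to the one-element instrument and summing $\sum_k (I \otimes P_k)(\cdot)(I \otimes P_k)$ under the partial trace, using $\mathrm{tr}_R[(I\otimes P_k) X (I\otimes P_k)] = \mathrm{tr}_R[X(I\otimes P_k)]$ and $\sum_k P_k = I$ to eliminate the projections.
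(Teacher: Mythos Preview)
Your proposal is correct and essentially identical to the paper's approach: the paper does not give a separate proof but simply notes that the corollary is the specialization of Theorem~2.3 to a one-element instrument, which is exactly your alternative at the end (with $P_k = I$ since there is only one outcome). Your detailed construction is just the proof of Theorem~2.3 with the outcome index $k$ suppressed, so there is nothing genuinely different here.
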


The above corollary implies that any nonunitary evolution $\mathcal E: L(\bm H) \to L(\bm H)$ can be modeled by a unitary evolution of an extended system.

We note that, in Theorem~2.3 and Corollary~2.1, the initial state of the total system described by $\bm H \otimes \bm H_R$ is a product state.  In fact, if a CP map is reproduced by a single indirect measurement model for an arbitrary input state $\rho \in Q(\bm H)$, then the initial state of $R$ should be independent of $\rho$.

\subsubsection{Heisenberg Picture}

We briefly discuss the Schr\"{o}dinger and  the Heisenberg pictures of time evolutions.
 We define an inner product  of $X, Y \in L(\bm H)$ by
 \begin{equation}
 \langle X,Y \rangle_{\rm HS} := {\rm tr}[X^\dagger Y],
 \label{Hilbert_Schmidt}
 \end{equation}
which is called the Hilbert-Schmidt inner product.  We define the adjoint of a linear map $\mathcal E: L(\bm H) \to L(\bm H')$ in terms of the Hilbert-Schmidt inner product.   A linear map $\mathcal E^\dagger: L(\bm H') \to L( \bm H)$ is the adjoint of $\mathcal E$ if it satisfies 
\begin{equation}
\langle \mathcal E^\dagger (X), Y\rangle_{\rm HS} = \langle X, \mathcal E(Y) \rangle_{\rm HS}
\end{equation}
for any $X \in L(\bm H')$ and $Y \in L(\bm H)$.  If $\mathcal E$ is CP and its  Kraus representation is given by Eq.~(\ref{Kraus}),  the adjoint of $\mathcal E$ is written as
\begin{equation}
\mathcal E^\dagger (X) = \sum_k M_k^\dagger X M_k,
\end{equation}
which implies that $\mathcal E^\dagger$ is also CP from Theorem~2.1.  We note that a CP map $\mathcal E$ is TP if and only if $\mathcal E^\dagger$ is unital, because $\mathcal E^\dagger (I) = \sum_k M_k^\dagger M_k$.

Let $\rho \in Q(\bm H)$ be a state,  $X \in L(\bm H')$ be an observable, and $\mathcal E: L(\bm H) \to L(\bm H')$ be  a CP map.   We then have
\begin{equation}
{\rm tr}[X \mathcal E (\rho)] = {\rm tr} [\mathcal E^\dagger (X) \rho],
\end{equation}
where the left-hand side (lhs) is called the Schr\"{o}dinger picture, while the rhs is called the Heisenberg picture.


\section{Quantum Relative Entropy}

We now introduce the  quantum entropies and discuss their basic properties.


\subsection{Von Neumann Entropy}

We first introduce the von Neumann entropy~\cite{Neumann}.
\begin{definition}
The von Neumann entropy of $\rho \in Q(\bm H)$ is defined as
\begin{equation}
S(\rho) := - {\rm tr} [\rho \ln \rho].
\end{equation}
\end{definition}

\begin{remark}
We note the relationship between the von Neumann entropy and the classical Shannon entropy~\cite{Shannon,Cover-Thomas}.
Let $\{ p(a) \}_{a \in A}$ be a probability distribution on a set $A$.  We regard the distribution as a vector whose $a$-th entry is $p(a)$, which we denote  as  $\bm p := ( p(a) )_{a \in A}$.  The Shannon entropy (or the Shannon information) of $\bm p$ is defined as
\begin{equation}
H(\bm p) := - \sum p(a) \ln p(a).
\end{equation}
If the spectrum decomposition of $\rho$ is given by $\rho = \sum p(a) | \varphi_a \rangle \langle \varphi_a |$ with an orthonormal basis $\{ | \varphi_a \rangle \}_{a \in A}$, the von Neumann entropy of $\rho$ reduces to the Shannon entropy of $\bm p$:
\begin{equation}
S(\rho) = H(\bm p).
\end{equation}
\end{remark}

The following theorems describe basic properties of the von Neumann entropy.

\begin{theorem}
Let $\rho_k$'s are density operators whose supports are mutually orthogonal.  Then the von Neumann entropy of $\rho := \sum_k p_k \rho_k$ with $\sum p_k = 1$ satisfies
\begin{equation}
S(\rho) = H(\bm p) + \sum_k p_k S(\rho_k),
\label{decomposition}
\end{equation} 
where $H(\bm p) := -\sum_k p_k \ln p_k$.
\end{theorem}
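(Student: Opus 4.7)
The plan is to reduce the statement to a direct calculation from the spectral decomposition, using the support-orthogonality assumption to combine the spectra of the $\rho_k$'s into a spectrum of $\rho$.

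First I would take a spectral decomposition of each component: write $\rho_k = \sum_i q_{k|i} | \varphi_{k,i} \rangle \langle \varphi_{k,i} |$ with $\{ | \varphi_{k,i} \rangle \}_i$ an orthonormal basis of the support of $\rho_k$ and $q_{k|i} \geq 0$, $\sum_i q_{k|i} = 1$. The key observation is that because the supports of the $\rho_k$'s are mutually orthogonal, the combined family $\{ | \varphi_{k,i} \rangle \}_{k,i}$ is still orthonormal. Hence the expression
\begin{equation}
\rho = \sum_k p_k \rho_k = \sum_{k,i} p_k q_{k|i} | \varphi_{k,i} \rangle \langle \varphi_{k,i} |
\end{equation}
is itself a spectral decomposition of $\rho$, with eigenvalues $p_k q_{k|i}$.

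Then I would apply Remark 3.1 (the reduction of the von Neumann entropy to the Shannon entropy of its spectrum) to obtain
\begin{equation}
S(\rho) = - \sum_{k,i} p_k q_{k|i} \ln ( p_k q_{k|i} ) = - \sum_{k,i} p_k q_{k|i} \ln p_k - \sum_k p_k \sum_i q_{k|i} \ln q_{k|i}.
\end{equation}
Using $\sum_i q_{k|i} = {\rm tr}[\rho_k] = 1$ in the first sum and $S(\rho_k) = - \sum_i q_{k|i} \ln q_{k|i}$ in the second immediately yields $S(\rho) = H(\bm p) + \sum_k p_k S(\rho_k)$.

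The only subtle point is justifying that $\rho$ inherits a valid spectral decomposition from the $\rho_k$'s; this is precisely what the support-orthogonality buys us, since without it cross-terms of the form $\langle \varphi_{k,i} | \varphi_{k',i'} \rangle$ with $k \ne k'$ could be nonzero and the eigenvalues $p_k q_{k|i}$ would not read off directly. Once that orthogonality is noted, the remainder is a one-line manipulation of $\ln(p_k q_{k|i})$, so I do not expect any real obstacle.
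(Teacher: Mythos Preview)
Your proof is correct and is essentially the same as the paper's. The paper carries out the identical computation at the operator level, writing $S(\rho) = -\sum_k {\rm tr}[p_k\rho_k \ln(p_k\rho_k)]$ (which is exactly your observation that the block-diagonal structure from orthogonal supports makes the eigenvalues of $\rho$ equal to $p_k q_{k|i}$) and then splitting $\ln(p_k\rho_k) = \ln p_k + \ln\rho_k$; your version simply makes the spectral decomposition explicit rather than leaving it implicit in the operator logarithm.
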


\begin{proof}
Since the supports of $\rho_k$'s are mutually orthogonal, we have
\begin{equation}
\begin{split}
S(\rho) &= - \sum_k {\rm tr}[ p_k \rho_k \ln (p_k \rho_k)] \\
&= - \sum_k {\rm tr} [ p_k  \rho_k ( \ln p_k + \ln \rho_k ) ] \\
&=  - \sum_k p_k \ln p_k - \sum_k p_k {\rm tr} [\rho_k \ln \rho_k ], 
\end{split}
\end{equation}
which implies Eq.~(\ref{decomposition}).
$\Box$\end{proof}

\begin{theorem}
Let $| \Psi \rangle \in \bm H_A \otimes \bm H_B$ be a state vector of a composite system, whose partial states are given by $\rho^A := {\rm tr}_B [| \Psi \rangle \langle \Psi |]$ and  $\rho^B := {\rm tr}_A [| \Psi \rangle \langle \Psi |]$.  Then
\begin{equation}
S(\rho^A) = S(\rho^B).
\label{purification_entropy}
\end{equation}
\end{theorem}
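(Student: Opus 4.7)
The plan is to invoke the Schmidt decomposition and then observe that $\rho^A$ and $\rho^B$ have identical nonzero spectra, after which the von Neumann entropy is manifestly the same for both (since $S$ depends only on the eigenvalues, as noted in Remark~3.1).

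First I would fix orthonormal bases $\{|e_i\rangle\}$ of $\bm H_A$ and $\{|f_j\rangle\}$ of $\bm H_B$ and expand $|\Psi\rangle = \sum_{ij} c_{ij} |e_i\rangle|f_j\rangle$. Applying the singular value decomposition to the matrix $C = (c_{ij})$ yields $C = U D V$ with $U$, $V$ unitary and $D$ diagonal with nonnegative entries $\sqrt{\lambda_k}$. Absorbing $U$ and $V$ into new orthonormal sets $|a_k\rangle := \sum_i U_{ik} |e_i\rangle \in \bm H_A$ and $|b_k\rangle := \sum_j V_{kj} |f_j\rangle \in \bm H_B$ gives the Schmidt decomposition
\begin{equation}
|\Psi\rangle = \sum_k \sqrt{\lambda_k}\, |a_k\rangle |b_k\rangle,
\end{equation}
with $\lambda_k \geq 0$ and $\sum_k \lambda_k = \langle \Psi|\Psi\rangle = 1$.

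Next I would compute the partial traces directly from this form. Using the orthonormality of $\{|b_k\rangle\}$ I obtain
\begin{equation}
\rho^A = \mathrm{tr}_B\!\left[\sum_{k,k'} \sqrt{\lambda_k \lambda_{k'}}\, |a_k\rangle\langle a_{k'}| \otimes |b_k\rangle\langle b_{k'}|\right] = \sum_k \lambda_k\, |a_k\rangle\langle a_k|,
\end{equation}
and symmetrically $\rho^B = \sum_k \lambda_k\, |b_k\rangle\langle b_k|$. Since $\{|a_k\rangle\}$ and $\{|b_k\rangle\}$ are orthonormal sets, these are already spectrum decompositions, so $\rho^A$ and $\rho^B$ share the same multiset of eigenvalues $\{\lambda_k\}$ (and differ only by zero eigenvalues associated with any dimension mismatch between $\bm H_A$ and $\bm H_B$, which do not contribute to the entropy under the convention $0 \ln 0 = 0$).

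Finally I would conclude $S(\rho^A) = -\sum_k \lambda_k \ln \lambda_k = S(\rho^B)$, which is Eq.~(\ref{purification_entropy}). The only nontrivial ingredient is the Schmidt decomposition, which is a standard consequence of the singular value decomposition and can be cited from the linear algebra appendix; everything else is a direct computation of partial traces and an appeal to the fact that the von Neumann entropy is a function of the spectrum only.
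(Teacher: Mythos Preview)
Your proof is correct and follows essentially the same approach as the paper: invoke the Schmidt decomposition, read off the reduced states as diagonal in the Schmidt bases with the same eigenvalues, and conclude the entropies agree. The paper's version is simply terser, citing the Schmidt decomposition directly rather than deriving it from the singular value decomposition, but the substance is identical.
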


\begin{proof}
Let $| \Psi \rangle = \sum_{k} \sqrt{p_k} | \varphi_k \rangle | \psi_k \rangle$ be the Schmidt decomposition of $| \Psi \rangle$ with $\sum_k p_k = 1$.  Then $\rho^A = \sum_k p_k | \varphi_k \rangle \langle \varphi_k |$ and $\rho^B = \sum_k p_k | \psi_k \rangle \langle \psi_k |$ hold, and therefore $S(\rho^A) = S(\rho^B) = -\sum_k p_k \ln p_k$.
$\Box$\end{proof}


\subsection{Quantum Relative Entropy and Its Positivity}

We now introduce the quantum relative entropy and prove its positivity.
\begin{definition}
Let  $\rho, \sigma \in Q(\bm H)$. The Quantum relative entropy of $\rho$ to $\sigma$ is defined as
\begin{equation}
S(\rho \| \sigma) := {\rm tr} [\rho \ln \rho] - {\rm tr} [\rho \ln \sigma].
\end{equation}
If there exists $| \psi \rangle \in \bm H$ that satisfies $\sigma | \psi \rangle = 0$ and $\langle \psi | \rho | \psi \rangle \neq 0$, the quantum relative entropy is defined as $S(\rho \| \sigma) := + \infty$.
\end{definition}

\begin{remark}
We note the relationship between the quantum relative entropy and the classical relative entropy (the Kullback-Leibler divergence)~\cite{Kullback,Cover-Thomas}.
Let $\bm p := ( p(a) )_{a \in A}$ and $\bm q := ( q(a) )_{a \in A}$ be probability distributions on a set $A$.   The classical relative entropy of  $\bm p$ to $\bm q$ is  defined as  
\begin{equation}
S( \bm p \| \bm q ) := \sum_a p(a) \ln \frac{p(a)}{q(a)}.
\end{equation} 
If two density operators are given by $\rho = \sum_a p(a) | \varphi_a \rangle \langle \varphi_a |$ and $\sigma = \sum_a q(a) | \varphi_a \rangle  \langle \varphi_a |$ with an orthonormal basis $\{ | \varphi_a \rangle \}_{a \in A}$, the quantum relative entropy reduces to the classical one:
\begin{equation}
S(\rho \| \sigma) = S(\bm p \| \bm q).
\end{equation}
\end{remark}

We next prove the positivity of the quantum relative entropy, which plays a key role to derive the second law of thermodynamics in Secs.~5.2 and 5.3.

\begin{theorem}[Positivity of  the quantum relative entropy]
\begin{equation}
S(\rho \| \sigma) \geq 0,
\label{positivity}
\end{equation}
where the equality is achieved if and only if $\rho = \sigma$.  Inequality~(\ref{positivity}) is  called the Klein inequality.
\end{theorem}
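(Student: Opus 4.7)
The plan is to reduce Klein's inequality to the elementary scalar inequality $\ln x \leq x - 1$ (with equality iff $x=1$) by diagonalizing both $\rho$ and $\sigma$ and then doing all the nontrivial work with ordinary probability vectors. The case $S(\rho\|\sigma) = +\infty$ is trivial from the definition, so I assume $\mathrm{supp}(\rho)\subseteq \mathrm{supp}(\sigma)$ throughout.

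First I would take spectral decompositions $\rho = \sum_i p_i |e_i\rangle\langle e_i|$ and $\sigma = \sum_j q_j |f_j\rangle\langle f_j|$ with $p_i,q_j \geq 0$ summing to $1$. Inserting these into the definition and using $\ln\sigma = \sum_j (\ln q_j)|f_j\rangle\langle f_j|$ on the support of $\sigma$ yields
\begin{equation}
S(\rho\|\sigma) = \sum_i p_i \ln p_i - \sum_{i,j} p_i P_{ij} \ln q_j,
\end{equation}
where $P_{ij} := |\langle e_i | f_j\rangle|^2$. The matrix $(P_{ij})$ is doubly stochastic: $\sum_j P_{ij} = 1$ and $\sum_i P_{ij} = 1$, because $\{|e_i\rangle\}$ and $\{|f_j\rangle\}$ are orthonormal bases.

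The next step is to set $r_i := \sum_j P_{ij} q_j$. Double stochasticity gives $\sum_i r_i = 1$, so $(r_i)$ is a probability distribution. Concavity of $\ln$ (Jensen's inequality applied to the probability weights $P_{ij}$ in $j$) gives $\sum_j P_{ij} \ln q_j \leq \ln r_i$, with equality only when $q_j$ is constant over the $j$'s with $P_{ij}>0$. Substituting back,
\begin{equation}
S(\rho\|\sigma) \geq \sum_i p_i \ln \frac{p_i}{r_i}.
\end{equation}
Now the right-hand side is the classical relative entropy of $(p_i)$ with respect to $(r_i)$, and I bound it using $\ln x \leq x - 1$ applied to $x = r_i/p_i$ (for those $i$ with $p_i > 0$):
\begin{equation}
\sum_i p_i \ln \frac{p_i}{r_i} = -\sum_i p_i \ln \frac{r_i}{p_i} \geq -\sum_i p_i\!\left(\frac{r_i}{p_i} - 1\right) = -\sum_i r_i + \sum_i p_i = 0.
\end{equation}
This chain establishes $S(\rho\|\sigma)\geq 0$.

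For the equality case I would trace equality through both inequalities. Equality in $\ln x \leq x - 1$ forces $r_i = p_i$ for every $i$ with $p_i > 0$, and equality in Jensen forces $q_j$ to take a single value on the $j$'s connected to each such $i$. Together with $\sum_i p_i = \sum_j q_j = 1$ and the orthonormality built into $P_{ij}$, a short argument shows $\rho$ and $\sigma$ have the same eigenvalues with the same projectors, i.e., $\rho = \sigma$. The main obstacle I anticipate is exactly this equality-tracking, especially handling the possibility that $\rho$ is rank-deficient (indices with $p_i=0$ drop out of the bounds but must be reconciled with the constraint $\sum r_i = 1$); the cleanest way is to first argue $r_i = p_i$ on $\mathrm{supp}(\rho)$, conclude $r_i = 0$ off it, and then use double stochasticity of $(P_{ij})$ to force the eigenvalue multisets of $\rho$ and $\sigma$ to coincide and the eigenspaces to match.
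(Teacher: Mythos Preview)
Your proof is correct and follows essentially the same approach as the paper's: both diagonalize $\rho$ and $\sigma$, invoke Jensen's inequality via the doubly stochastic matrix $P_{ij}=|\langle e_i|f_j\rangle|^2$ to reduce to a classical relative entropy between $(p_i)$ and $(r_i)=(\langle e_i|\sigma|e_i\rangle)$, and then apply $\ln x \le x-1$. Your treatment of the equality case is slightly more careful than the paper's (you flag the rank-deficient situation explicitly), but the substance is the same.
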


\begin{proof}
Let $\rho = \sum_a p(a) | \psi_a \rangle \langle \psi_a |$ and $\sigma = \sum_a q(a) | \varphi_a \rangle \langle \varphi_a |$, where $\{ | \psi_a \rangle \}$ and $\{ | \varphi_a \rangle \}$ are orthonormal bases.  
We first show that
\begin{equation}
S(\rho \| \sigma ) \geq S(\rho \| \sigma'),
\label{posi1}
\end{equation}
where $\sigma' := \sum_a | \psi_a \rangle \langle \psi_a | \sigma | \psi_a \rangle \langle \psi_a | = \sum_a  q'(a)   | \psi_a \rangle \langle \psi_a |$ with  $q'(a) := \sum_b q(b) | \langle \psi_a | \varphi_b \rangle |^2$.  Inequality~(\ref{posi1})  is equivalent to $-{\rm tr}[\rho \ln \sigma] \geq - {\rm tr}[\rho \ln \sigma']$.  We note that $-{\rm tr}[\rho \ln \sigma] = -\sum_a p(a) \langle \psi_a | \ln \sigma | \psi_a \rangle$ and $- {\rm tr}[\rho \ln \sigma'] = - \sum_a p(a) \ln q'(a)$.  By applying the Jensen inequality to convex function $- \ln x$, we have
\begin{equation}
\begin{split}
- \langle \psi_a | \ln \sigma | \psi_a \rangle &= - \sum_b | \langle \psi_a | \varphi_b \rangle |^2 \ln q(b) \\
&\geq  - \ln \left( \sum_b  | \langle \psi_a | \varphi_b \rangle |^2 q(b) \right) \\
&= - \ln q'(a),
\end{split}
\end{equation}
where we used that $\sum_b | \langle \psi_a | \varphi_b \rangle |^2 = 1$ holds for any $a$. Therefore, we obtain inequality (\ref{posi1}).  The equality in (\ref{posi1}) is achieved if and only if $| \langle \psi_a | \varphi_{f(b)} \rangle |^2 = \delta_{ab}$, where $f(\cdot)$ is a bijection map and $\delta_{ab}$ is the Kronecker delta.  By relabeling the indexes of $\{ | \varphi_b \rangle \}$, we can choose $f(b) = b$ without loss of generality.

We next show 
\begin{equation}
S(\rho \| \sigma') \geq 0,
\label{posi2}
\end{equation}
which is equivalent to the positivity of the classical relative entropy $\sum_a p(a) \ln (p(a) / q'(a))$.  By using inequality $\ln (x^{-1}) \geq 1-x$ for $x>0$, we obtain
\begin{equation}
\sum_a p(a) \ln \frac{p(a)}{q'(a)} \geq \sum_a p(a) \left( 1- \frac{q'(a)}{p(a)} \right) = 0,
\end{equation}
which implies inequality~(\ref{posi2}).  The equality in (\ref{posi2}) is achieved if and only if $p(a) = q'(a)$ for any $a$.

By combining inequalities (\ref{posi1}) and (\ref{posi2}), we obtain inequality (\ref{positivity}).  The equality is achieved if and only if $| \langle \psi_a | \varphi_{b} \rangle |^2 = \delta_{ab}$ and $p(a) = q'(a)$,  which implies $\rho = \sigma$.
$\Box$\end{proof}

By removing the assumptions that ${\rm tr}[\rho] = \sum_a p(a) = 1$ and ${\rm tr}[\sigma] = \sum_a q(a) = 1$ from the above proof, we can straightforwardly obtain a generalization of Eq.~(\ref{positivity}):
\begin{equation}
{\rm  tr} [\rho (\ln \rho - \ln \sigma )] \geq {\rm tr} [\rho - \sigma]
\label{Klein}
\end{equation} 
for any positive operators $\rho$ and $\sigma$.  Inequality~(\ref{Klein}) is also called the Klein inequality.

The subadditivity of the von Neumann entropy is a direct consequence of the positivity of the quantum relative entropy.

\begin{theorem}[Subadditivity of von Neumann entropy]
Let $\rho \in Q(\bm H_A \otimes \bm H_B)$ be a density operator of a composite system, whose partial states are given by  $\rho^A := {\rm tr}_B[\rho^{AB}] \in Q(\bm H_A)$ and $\rho^B :=  {\rm tr}_A[\rho^{AB}] \in Q(\bm H_B)$.  Their von Neumann entropies satisfy
\begin{equation}
S(\rho^{AB}) \leq S(\rho^A) + S(\rho^B),
\label{subadditivity}
\end{equation}
where the equality is achieved if and only if $\rho^{AB} = \rho^A \otimes \rho^B$.
\end{theorem}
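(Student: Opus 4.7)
The plan is to derive subadditivity as a direct application of the Klein inequality (Theorem~3.3) to the pair $\rho^{AB}$ and $\rho^A \otimes \rho^B$. First I would compute the quantum relative entropy $S(\rho^{AB} \,\|\, \rho^A \otimes \rho^B)$ explicitly. The first term ${\rm tr}[\rho^{AB} \ln \rho^{AB}] = -S(\rho^{AB})$ is immediate from the definition of the von Neumann entropy.

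The key algebraic step is to expand $\ln(\rho^A \otimes \rho^B)$. Since the operators $\rho^A \otimes I^B$ and $I^A \otimes \rho^B$ commute and their product equals $\rho^A \otimes \rho^B$, one has
\begin{equation}
\ln(\rho^A \otimes \rho^B) = \ln \rho^A \otimes I^B + I^A \otimes \ln \rho^B.
\end{equation}
(One can verify this either by spectral decomposition on each factor, or by taking the logarithm of the product of commuting positive operators.) Taking the trace against $\rho^{AB}$ and using the definition of the partial traces $\rho^A = {\rm tr}_B[\rho^{AB}]$ and $\rho^B = {\rm tr}_A[\rho^{AB}]$ yields
\begin{equation}
{\rm tr}[\rho^{AB} \ln(\rho^A \otimes \rho^B)] = {\rm tr}_A[\rho^A \ln \rho^A] + {\rm tr}_B[\rho^B \ln \rho^B] = -S(\rho^A) - S(\rho^B).
\end{equation}

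Combining the two expressions gives the identity
\begin{equation}
S(\rho^{AB} \,\|\, \rho^A \otimes \rho^B) = -S(\rho^{AB}) + S(\rho^A) + S(\rho^B),
\end{equation}
and applying the positivity of the quantum relative entropy (Theorem~3.3) immediately yields the desired inequality (\ref{subadditivity}). For the equality condition, Theorem~3.3 states that $S(\rho^{AB} \,\|\, \rho^A \otimes \rho^B) = 0$ if and only if $\rho^{AB} = \rho^A \otimes \rho^B$, which gives the claim directly.

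I do not anticipate a serious obstacle here: the only mildly nontrivial point is justifying the splitting of $\ln(\rho^A \otimes \rho^B)$, which is routine but worth stating carefully since the positivity of $\rho^A$ and $\rho^B$ (ensuring the logarithm is well-defined on their supports) and the commutation of the two tensor factors are both needed. Everything else is a bookkeeping computation using the definition of the partial trace, and the equality case is inherited verbatim from the Klein inequality.
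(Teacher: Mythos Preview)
Your proposal is correct and follows essentially the same approach as the paper's own proof: both establish the identity $S(\rho^A)+S(\rho^B)-S(\rho^{AB}) = S(\rho^{AB}\,\|\,\rho^A\otimes\rho^B)$ and then invoke the positivity of the quantum relative entropy (Theorem~3.3), with the equality condition inherited from Klein. The paper states this as a one-line computation whereas you spell out the justification of $\ln(\rho^A\otimes\rho^B)=\ln\rho^A\otimes I^B + I^A\otimes\ln\rho^B$ and the partial-trace bookkeeping, but the underlying argument is identical.
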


\begin{proof} We have
\begin{equation}
S(\rho^A) + S(\rho^B) - S(\rho^{AB}) = S(\rho^{AB} \| \rho^A \otimes \rho^B) \geq 0,
\end{equation}
where the right equality is achieved if and only if $\rho^{AB} = \rho^A \otimes \rho^B$.
$\Box$\end{proof}


\subsection{Monotonicity of the Quantum Relative Entropy}

We next discuss that the quantum relative entropy is non-increasing under any CPTP map, which is called the monotonicity.  The monotonicity can be applied to a derivation of the second law of thermodynamics and to a lot of theorems in quantum information theory.

\begin{theorem}[Monotonicity of the quantum relative entropy]
Let  $\rho, \sigma \in Q(\bm H)$  be states and $\mathcal E: L(\bm H) \to L(\bm H')$ be a CPTP map.  Then 
\begin{equation}
S(\mathcal E (\rho) \| \mathcal E(\sigma )) \leq S(\rho \| \sigma),
\label{monotonicity}
\end{equation}
which is called the Uhlmann inequality.
\end{theorem}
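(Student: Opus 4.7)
The plan is to reduce the problem to three elementary pieces via the Stinespring-type dilation furnished by Corollary 2.1.  Write any CPTP map as $\mathcal E (\rho) = {\rm tr}_R [U (\rho \otimes | \phi_R \rangle \langle \phi_R |) U^\dagger]$.  Then the desired inequality splits into (a) invariance of $S(\cdot \| \cdot)$ under appending a fixed ancillary state, (b) invariance under simultaneous unitary conjugation of both arguments, and (c) monotonicity under the partial trace.  Chaining them yields $S(\mathcal E (\rho) \| \mathcal E (\sigma)) \leq S(U (\rho \otimes | \phi_R \rangle \langle \phi_R |) U^\dagger \| U (\sigma \otimes | \phi_R \rangle \langle \phi_R |) U^\dagger) = S(\rho \| \sigma)$, which is exactly the Uhlmann inequality.

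Steps (a) and (b) are routine.  For (a) one uses $\ln (\rho \otimes \omega) = \ln \rho \otimes I + I \otimes \ln \omega$ (valid since the tensor factors commute as operators on the product space) together with the fact that ${\rm tr}$ factorises on tensor products.  For (b) one applies the functional-calculus identity $\ln (U \sigma U^\dagger) = U (\ln \sigma) U^\dagger$ together with cyclicity of the trace.

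The core of the argument is (c).  My plan is to route it through the joint convexity of the relative entropy, namely that $(\rho, \sigma) \mapsto S(\rho \| \sigma)$ is a jointly convex functional on $Q(\bm H) \times Q(\bm H)$.  This is a direct consequence of Lieb's concavity theorem, which asserts that $(A, B) \mapsto {\rm tr}[K^\dagger A^t K B^{1-t}]$ is jointly concave on positive operators for each $0 \leq t \leq 1$ and any fixed $K$; taking $K = I$ and differentiating in $t$ at $t = 0$ in $(A, B) = (\rho, \sigma)$ converts joint concavity of ${\rm tr}[\rho^{1-t} \sigma^{t}]$ into joint convexity of $S(\rho \| \sigma)$.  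With joint convexity in hand, one realises the partial trace as a convex average of unitary conjugations: for a unitary $1$-design $\{ U_k \}_{k=1}^N$ on $\bm H_B$, one has $N^{-1} \sum_k (I^A \otimes U_k) \rho^{AB} (I^A \otimes U_k^\dagger) = {\rm tr}_B [\rho^{AB}] \otimes (I^B / d_B)$.  Applying joint convexity to this average on both arguments and then invoking (a) and (b) to strip off the unitaries and the maximally mixed tensor factor yields $S({\rm tr}_B [\rho^{AB}] \| {\rm tr}_B [\sigma^{AB}]) \leq S(\rho^{AB} \| \sigma^{AB})$, which is (c).

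The main obstacle is Lieb's concavity theorem itself, which carries essentially all of the nontrivial operator-theoretic content; its standard proof relies on the operator concavity of $A \mapsto A^p$ for $p \in [0,1]$ together with careful convexity manipulations on tensor products.  An alternative route, the one this paper announces for Appendix~B following Petz, argues directly via the relative modular operator and the associated operator-monotone function machinery, bypassing Lieb's theorem altogether.  The two routes are of comparable depth, but the Lieb-based plan has the appeal of cleanly isolating the analytic difficulty in one classical statement and of giving, as a by-product, the joint convexity of $S(\rho \| \sigma)$, which is itself frequently used downstream.
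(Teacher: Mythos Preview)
Your proposal is correct and follows the classical Lieb--Lindblad--Uhlmann route, which is genuinely different from the Petz argument the paper presents in Appendix~B.  Your decomposition is: Stinespring dilation $\Rightarrow$ reduce to partial-trace monotonicity $\Rightarrow$ obtain the latter from joint convexity via a twirling average, with joint convexity extracted from Lieb's concavity theorem by differentiating ${\rm tr}[\rho^{1-t}\sigma^t]$ at $t=0$.  The paper instead works directly with the relative modular operators $\mathcal D(X)=\sigma X\rho^{-1}$ and $\mathcal D'(X)=\mathcal E(\sigma)X\mathcal E(\rho)^{-1}$, constructs a contraction $\mathcal V$ on the Hilbert--Schmidt space, and proves the operator inequality $\mathcal V^\dagger(-\ln\mathcal D)\mathcal V\geq -\ln\mathcal D'$ using only operator monotonicity/convexity of $-\ln$, Jensen's operator inequality, and the Schwarz operator inequality for unital $2$-positive maps.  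Your route has the virtue of modularity and yields joint convexity of $S(\rho\|\sigma)$ as a by-product; the paper's route avoids Lieb's theorem altogether, is self-contained once the elementary operator-function lemmas are in place, and in fact uses only $2$-positivity of $\mathcal E$ rather than full complete positivity.  The analytic depth is comparable: you have concentrated it in Lieb's theorem, the paper in the relative-modular machinery.

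One small technical remark: Corollary~2.1 as stated covers only the case $\bm H=\bm H'$, whereas the theorem allows $\mathcal E:L(\bm H)\to L(\bm H')$.  You should either invoke the more general dilation of Example~2.5 or first embed $\bm H$ and $\bm H'$ isometrically into a common space; this is routine but should be mentioned.
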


Several proofs of the monotonicity have been known, which are not so simple.  One of the proofs will be shown in Appendix~B.  The following corollary is a special case of the monotonicity.

\begin{corollary}
Let $\rho^{AB}, \sigma^{AB} \in Q(\bm H_A \otimes \bm H_B)$ be density operators of a composite system, whose partial states are given by $\rho^A := {\rm tr}_B[\rho^{AB}] $ and $\sigma^A := {\rm tr}_B[\sigma^{AB}] $.  Then
\begin{equation}
S(\rho^A \| \sigma^A) \leq S(\rho^{AB} \| \sigma^{AB}).
\label{monotonicity2}
\end{equation}
\end{corollary}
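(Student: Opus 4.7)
The plan is to obtain this corollary as an immediate specialization of Theorem~3.4, since the operation that takes $\rho^{AB}$ to $\rho^A$ is itself a CPTP map. The only thing to verify is that this identification is legitimate, and everything needed for that has already been established in the preceding sections.

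Concretely, I would introduce the partial trace map $\mathcal{E}: L(\bm H_A \otimes \bm H_B) \to L(\bm H_A)$ defined by $\mathcal{E}(X) := {\rm tr}_B[X]$. By Example~2.2, $\mathcal{E}$ is CPTP: complete positivity follows because the partial trace of any positive operator on $\bm H_A \otimes \bm H_B \otimes \mathbb{C}^n$ is positive on $\bm H_A \otimes \mathbb{C}^n$, and trace preservation is just ${\rm tr}_A[{\rm tr}_B[X]] = {\rm tr}_{AB}[X]$. With this choice, $\mathcal{E}(\rho^{AB}) = \rho^A$ and $\mathcal{E}(\sigma^{AB}) = \sigma^A$ by the very definitions of $\rho^A$ and $\sigma^A$.

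Applying Theorem~3.4 (the Uhlmann inequality) to this CPTP map with inputs $\rho = \rho^{AB}$ and $\sigma = \sigma^{AB}$ yields
\begin{equation}
S(\rho^A \| \sigma^A) = S(\mathcal{E}(\rho^{AB}) \| \mathcal{E}(\sigma^{AB})) \leq S(\rho^{AB} \| \sigma^{AB}),
\end{equation}
which is exactly the claim.

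There is essentially no obstacle here; the content of the corollary is entirely carried by Theorem~3.4, whose proof (deferred to Appendix~B) is the nontrivial ingredient. The only minor subtlety worth noting in the writeup is that this corollary is strictly weaker than the full monotonicity (it covers only the specific CPTP map given by partial trace), so the proof is properly a one-line deduction rather than an independent argument.
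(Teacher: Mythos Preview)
Your proof is correct and follows exactly the paper's own argument: define the partial trace as a CPTP map (Example~2.2) and apply the monotonicity theorem directly. The only slip is the label---the Uhlmann inequality is Theorem~3.5 in the paper, not Theorem~3.4.
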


\begin{proof}
Apply the monotonicity (\ref{monotonicity}) to a CPTP map $\mathcal E: L(\bm H_A \otimes \bm H_B) \to L(\bm H_A)$ such that $\mathcal E (\rho) := {\rm tr}_B [\rho]$ for $\rho \in L(\bm H_A \otimes \bm H_B)$.
$\Box$\end{proof}

We also have the following corollary, which implies that the von Neumann entropy is non-decreasing for a special class of CPTP maps.

\begin{corollary}
Let $\mathcal E: L(\bm H) \to L(\bm H)$ be a unital CPTP map satisfying $\mathcal E (I) = I$, where $I$ is the identity on $\bm H$.  The von Neumann entropy is then non-decreasing:
\begin{equation}
S(\rho) \leq S(\mathcal E(\rho)).
\label{entropy_increase}
\end{equation}
\end{corollary}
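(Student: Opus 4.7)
The plan is to reduce this to the monotonicity of the quantum relative entropy (Theorem~3.3) by comparing $\rho$ to the maximally mixed state $I/d$, where $d := \dim \bm H$. The key observation is that for any $\tau \in Q(\bm H)$ one has
\begin{equation}
S(\tau \| I/d) = \mathrm{tr}[\tau \ln \tau] - \mathrm{tr}[\tau \ln(I/d)] = -S(\tau) + \ln d,
\end{equation}
so the relative entropy to the maximally mixed state encodes the von Neumann entropy up to an additive constant.

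First I would observe that, because $\mathcal E$ is unital, $\mathcal E(I/d) = I/d$, so the reference state is a fixed point of $\mathcal E$. Next I would apply Theorem~3.3 to $\rho$ and $\sigma := I/d$ under the CPTP map $\mathcal E$, yielding
\begin{equation}
S(\mathcal E(\rho) \,\|\, I/d) = S(\mathcal E(\rho) \,\|\, \mathcal E(I/d)) \leq S(\rho \,\|\, I/d).
\end{equation}
Then I would substitute the identity above for both sides, obtaining $-S(\mathcal E(\rho)) + \ln d \leq -S(\rho) + \ln d$, and rearranging gives inequality~(\ref{entropy_increase}).

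There is essentially no obstacle here beyond invoking Theorem~3.3: the unital hypothesis is used precisely to make $I/d$ a fixed point (so the right-hand side of the monotonicity inequality simplifies), and the trace-preserving hypothesis is used implicitly in the applicability of Theorem~3.3. It is worth noting that the unital condition cannot be dropped — for a general CPTP map the image $\mathcal E(I/d)$ need not equal $I/d$, and indeed a CPTP map that sends every state to a fixed pure state violates (\ref{entropy_increase}), so unitality is the essential ingredient that turns monotonicity of relative entropy into monotonicity of von Neumann entropy.
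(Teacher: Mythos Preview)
Your argument is correct and is exactly the paper's own proof: compare $\rho$ to $I/d$, use unitality to get $\mathcal E(I/d)=I/d$, apply monotonicity of the relative entropy, and translate back via $S(\tau\|I/d)=\ln d - S(\tau)$. One small correction: the result you invoke is the \emph{monotonicity} of the quantum relative entropy, which in this paper is Theorem~3.5, not Theorem~3.3 (the latter is the Klein/positivity inequality).
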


\begin{proof}
Let $d$ be the dimension of $\bm H$.  We than have
\begin{equation}
\begin{split}
S(\rho) &= -S(\rho \| I/d) + \ln d \leq -S(\mathcal E(\rho) \| \mathcal E(I/d)) + \ln d \\
&= -S(\mathcal E(\rho) \| I/d) + \ln d = S(\mathcal E(\rho)),
\end{split}
\end{equation}
where we used the monotonicity of the quantum relative entropy.
$\Box$\end{proof}

Let $\mathcal E(\rho) = \sum_k M_k \rho M_k^\dagger$ be a Kraus representation of $\mathcal E$.
The condition of $\mathcal E (I) = I$ is satisfied if all of the Kraus operators are Hermitian such that $M_k = M_k^\dagger$.  In particular, $\mathcal E (I) = I$ holds if $M_k$'s are projection operators

The strong subadditivity of the von Neumann entropy is easily obtained from the monotonicity of the quantum relative entropy.

\begin{theorem}[Strong subadditivity of the von Neumann entropy]
Let $\rho^{ABC} \in Q(\bm H_A \otimes \bm H_B \otimes \bm H_C)$,  $\rho^{AB} := {\rm tr}_C[\rho^{ABC}]$,  $\rho^{BC} := {\rm tr}_A[\rho^{ABC}]$, and $\rho^{B} := {\rm tr}_{AC}[\rho^{ABC}] $.  Then their von Neumann entropies satisfy
\begin{equation}
S(\rho^{ABC}) + S(\rho^B) \leq S(\rho^{AB}) + S(\rho^{BC}).
\label{strong_subadditivity}
\end{equation}
\end{theorem}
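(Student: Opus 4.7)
The plan is to derive the strong subadditivity as a direct consequence of the monotonicity of the quantum relative entropy (Theorem~3.3), specifically via its partial-trace form (Corollary~3.1). The strategy is to rewrite both sides of (\ref{strong_subadditivity}) as relative entropies of $\rho^{ABC}$ (and its partial trace) against suitable product states, and then observe that the desired inequality is exactly the inequality obtained from tracing out $C$.

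First I would compute the two relative entropies
\begin{equation}
S(\rho^{ABC} \,\|\, \rho^A \otimes \rho^{BC}) \quad \text{and} \quad S(\rho^{AB} \,\|\, \rho^A \otimes \rho^B).
\end{equation}
Using $\ln(\rho^A \otimes \rho^{BC}) = \ln \rho^A \otimes I^{BC} + I^A \otimes \ln \rho^{BC}$ (on the support) together with the marginals $\rho^A = \mathrm{tr}_{BC}[\rho^{ABC}]$ and $\rho^{BC} = \mathrm{tr}_A[\rho^{ABC}]$, the first quantity becomes $S(\rho^A) + S(\rho^{BC}) - S(\rho^{ABC})$, and analogously the second becomes $S(\rho^A) + S(\rho^B) - S(\rho^{AB})$. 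Hence (\ref{strong_subadditivity}) is algebraically equivalent to
\begin{equation}
S(\rho^{AB} \,\|\, \rho^A \otimes \rho^B) \;\leq\; S(\rho^{ABC} \,\|\, \rho^A \otimes \rho^{BC}).
\end{equation}

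Next I would apply Corollary~3.1 (monotonicity under partial trace) to the CPTP map $\mathcal{E} = \mathrm{tr}_C : L(\bm{H}_A \otimes \bm{H}_B \otimes \bm{H}_C) \to L(\bm{H}_A \otimes \bm{H}_B)$, with inputs $\rho^{ABC}$ and $\rho^A \otimes \rho^{BC}$. Since $\mathrm{tr}_C[\rho^{ABC}] = \rho^{AB}$ and $\mathrm{tr}_C[\rho^A \otimes \rho^{BC}] = \rho^A \otimes \rho^B$, the corollary yields precisely the displayed relative-entropy inequality, which rearranges to (\ref{strong_subadditivity}).

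There is no real obstacle here, since the deep content (the monotonicity itself) is assumed; the only points to be careful about are the support conditions when splitting the logarithm of the product state and the verification that the marginals behave as expected under $\mathrm{tr}_C$. The whole argument is a two-line computation plus one application of Corollary~3.1, reflecting the fact that strong subadditivity is essentially a special instance of monotonicity.
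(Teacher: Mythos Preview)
Your proposal is correct and follows essentially the same approach as the paper: both reduce strong subadditivity to the monotonicity of the quantum relative entropy under the partial trace $\mathrm{tr}_C$, via the identity $S(\rho^{XYZ}\,\|\,\tau^X\otimes\rho^{YZ}) - S(\rho^{XY}\,\|\,\tau^X\otimes\rho^{Y}) = [S(\rho^{XY})+S(\rho^{YZ})]-[S(\rho^{XYZ})+S(\rho^Y)]$. The only cosmetic difference is that the paper takes $\tau^A = I_A/d_A$ while you take $\tau^A = \rho^A$; either choice works since the $A$-factor is untouched by $\mathrm{tr}_C$, and your choice has the pleasant feature of making the two relative entropies equal to the mutual informations $I^{A:BC}$ and $I^{A:B}$.
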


\begin{proof}
Let $\sigma^A := I_A / d_A$, where $I_A$ is the identity on $\bm H_A$ and $d_A$  is the dimension of $\bm H_A$.  We then have
\begin{equation}
\begin{split}
&{}  [ S(\rho^{AB}) + S(\rho^{BC})] - [ S(\rho^{ABC}) + S(\rho^B) ] \\
&= [ S(\rho^{AB}) -  S(\rho^{ABC})] - [ S(\rho^B) - S(\rho^{BC})] \\
&= S(\rho^{ABC} \| \sigma^A \otimes \rho^{BC}) - S(\rho^{AB} \| \sigma^A \otimes \rho^B) \\
&\geq 0,
\end{split}
\end{equation}
where we used the monotonicity of the relative entropy for a CPTP map $\mathcal E: L(\bm H_A \otimes \bm H_B \otimes \bm H_C) \to L(\bm H_A \otimes \bm H_B)$ such that $\mathcal E (\rho) = {\rm tr}_C [\rho]$ for $\rho \in L(\bm H_A \otimes \bm H_B \otimes \bm H_C)$.
$\Box$\end{proof}

Historically, the strong subadditivity (\ref{strong_subadditivity}) of the von Neumann entropy was first proved based on the Lieb theorem~\cite{Lieb1,Lieb2,Lieb3}.  Later,  the  monotonicity inequalities (\ref{monotonicity}) and (\ref{monotonicity2}) for the quantum relative entropy  were proved from the strong subadditivity~(\ref{strong_subadditivity})~\cite{Lindblad2,Lindblad3,Uhlmann}. 
On the other hand,  Petz~\cite{Petz2} proved  monotonicity~(\ref{monotonicity2}) without invoking the strong subadditivity.  Nielsen and Petz~\cite{Nielsen_Petz}  pedagogically discussed this proof.
 In a similar manner, Petz~\cite{Petz3} showed a direct proof of monotonicity~(\ref{monotonicity}), which we will discuss in Appendix~B.

\begin{remark}
In contrast to the quantum case, it is easy to prove the monotonicity of the classical relative entropy~\cite{Cover-Thomas}.  Let $\bm p := ( p(a) )_{a \in A}$ and $\bm q := (q(a))_{a \in A}$ be probability distributions on $A$.  
The classical counterpart of a quantum CPTP map is a Markov maps, which is given by transition probabilities $\{ r(b|a) \} _{a \in A, b \in B}$ with $\sum_b r(b|a) = 1$ such that
\begin{equation}
p'(b) := \sum_a r(b|a) p(a), \  q'(b) := \sum_a r(b|a) q(a).
\end{equation}
We write $\mathcal E (\bm p) := (p'(b))_{b \in B}$ and $\mathcal E (\bm q) := (q'(b))_{b \in B}$. Our goal is to show
\begin{equation}
S( \bm p \| \bm q) \geq S(\mathcal E(\bm p) \| \mathcal E (\bm q))
\label{monotonicity_classical}
\end{equation}
for the classical relative entropy.  Let $p(a,b) := r(b|a)p(a)$ and $q(a,b) := r(b|a)q(a)$.  We then have
\begin{equation}
\begin{split}
S( \bm p \| \bm q) &= \sum_{a,b} p(a,b) \ln \frac{p(a)}{q(a)} = \sum_{a,b} p(a,b) \ln \frac{p(a,b)}{q(a,b)} \\
&= S(\mathcal E(\bm p)  \| \mathcal E (\bm q)) + \sum_{a,b} p(a,b) \ln \frac{p(a|b)}{q(a|b)},
\end{split} 
\end{equation}
where $p(a|b) := p(a,b) / p'(b)$ and $q(a|b) := q(a,b) / q'(b)$.  By noting that \\ $ \sum_{a,b} p(a,b) \ln ( p(a|b) / q(a|b) ) \geq 0$ holds from the positivity of the  classical relative entropy, we obtain inequality~(\ref{monotonicity_classical}).  We note that the strong subadditivity of the Shannon entropy straightforwardly follows from the monotonicity of the classical relative entropy.
\end{remark}


\section{Quantum Mutual Information and Related Quantities}

In this section, we discuss the basic properties of the quantum mutual information and related quantities.
In particular, we introduce two important quantities that are closely related to the quantum mutual information:  the Holevo $\chi$-quantity and the QC-mutual information (the Groenewold-Ozawa information).  We discuss their information-theoretic meanings.

\subsection{Quantum Mutual Information}

We first introduce the quantum mutual information.

\begin{definition}
Let $\rho^{AB} \in Q(\bm H_A \otimes \bm H_B)$ be a quantum state and $\rho^A := {\rm tr}_B [\rho^{AB}]$ and $\rho^B := {\rm tr}_A [\rho^{AB}]$ be its partial states.  The mutual information between two systems  is then defined as
\begin{equation}
I^{A:B} (\rho^{AB}) := S(\rho^{AB} \| \rho^A \otimes \rho^B) = S(\rho^A) + S(\rho^B) - S(\rho^{AB}).
\end{equation}
\end{definition}

From the positivity of the quantum relative entropy, 
\begin{equation}
I^{A:B}(\rho^{AB}) \geq 0
\end{equation}
holds, where the equality is achieved if and only if $\rho^{AB} = \rho^A \otimes \rho^B$.

\begin{remark}
We note the relationship between the quantum mutual information and the classical mutual information.
Let $\{ | \varphi_a \rangle \}_{a \in A}$ and $\{ | \psi_b \rangle \}_{b \in B}$ be orthonormal bases of $\bm H_A$ and $\bm H_B$, respectively.  We define
\begin{equation}
\rho^{AB} := \sum_{ab} p(a,b) | \varphi_a \rangle \langle \varphi_a | \otimes | \psi_b \rangle \langle \psi_b |,
\end{equation}
where $\{ p(a,b) \}_{(a,b) \in A \times B}$ is a classical probability distribution on $A \times B$.  Then the quantum mutual information reduces to
\begin{equation}
I^{A:B}(\rho^{AB}) = \sum_{a,b} p(a,b) \ln \frac{p(a,b)}{p(a)p(b)},
\label{classical_mutual}
\end{equation}
where $p(a) := \sum_b p(a,b)$ and $p(b) := \sum_ap(a,b)$. The rhs of Eq.~(\ref{classical_mutual}) is the classical mutual information between $A$ and $B$.

\end{remark}

We now discuss the data processing inequality, which is a straightforward consequence of  monotonicity~(\ref{monotonicity}) of the quantum relative entropy.

\begin{theorem}[Data processing inequality]
Let $\mathcal E_A: L(\bm H_A) \to L(\bm H_{A'})$ and $\mathcal E_B : L(\bm H_B) \to L(\bm H_{B'})$ be CPTP maps.   The quantum mutual information is non-increasing by $\mathcal E_A \otimes \mathcal E_B$:
\begin{equation}
I^{A':B'}((\mathcal E_A \otimes \mathcal E_B) (\rho^{AB})) \leq I^{A:B}(\rho^{AB}).
\label{inequality1}
\end{equation}
\end{theorem}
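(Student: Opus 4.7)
The plan is to reduce the data processing inequality directly to the monotonicity of the quantum relative entropy (Theorem~3.3 in the excerpt), by expressing the quantum mutual information as a relative entropy between the joint state and the product of its marginals.

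First, I would recall from Definition~4.1 that
\begin{equation}
I^{A:B}(\rho^{AB}) = S(\rho^{AB} \,\|\, \rho^A \otimes \rho^B),
\end{equation}
and similarly for the transformed state
\begin{equation}
I^{A':B'}(\tilde\rho^{A'B'}) = S(\tilde\rho^{A'B'} \,\|\, \tilde\rho^{A'} \otimes \tilde\rho^{B'}),
\end{equation}
where $\tilde\rho^{A'B'} := (\mathcal E_A \otimes \mathcal E_B)(\rho^{AB})$, $\tilde\rho^{A'} := {\rm tr}_{B'}[\tilde\rho^{A'B'}]$, and $\tilde\rho^{B'} := {\rm tr}_{A'}[\tilde\rho^{A'B'}]$. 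The strategy is to apply monotonicity to the CPTP map $\mathcal E_A \otimes \mathcal E_B$ with inputs $\rho^{AB}$ and $\rho^A \otimes \rho^B$, obtaining
\begin{equation}
S\bigl((\mathcal E_A \otimes \mathcal E_B)(\rho^{AB}) \,\big\|\, (\mathcal E_A \otimes \mathcal E_B)(\rho^A \otimes \rho^B)\bigr) \leq S(\rho^{AB} \,\|\, \rho^A \otimes \rho^B).
\end{equation}

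The remaining step, which is the only subtle point, is to identify the right-hand side $(\mathcal E_A \otimes \mathcal E_B)(\rho^A \otimes \rho^B)$ with $\tilde\rho^{A'} \otimes \tilde\rho^{B'}$. Factorization across the tensor product gives immediately
\begin{equation}
(\mathcal E_A \otimes \mathcal E_B)(\rho^A \otimes \rho^B) = \mathcal E_A(\rho^A) \otimes \mathcal E_B(\rho^B),
\end{equation}
so I need to verify $\tilde\rho^{A'} = \mathcal E_A(\rho^A)$ and $\tilde\rho^{B'} = \mathcal E_B(\rho^B)$. This is exactly the place where the trace-preserving assumption enters: using a Kraus representation $\mathcal E_B(\sigma) = \sum_k N_k \sigma N_k^\dagger$ with $\sum_k N_k^\dagger N_k = I$ (Theorem~2.2), one computes ${\rm tr}_{B'}[(\mathcal E_A \otimes \mathcal E_B)(\rho^{AB})] = \mathcal E_A({\rm tr}_B[\rho^{AB}]) = \mathcal E_A(\rho^A)$, and symmetrically for $B$.

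Combining these two identifications with the monotonicity inequality yields
\begin{equation}
S(\tilde\rho^{A'B'} \,\|\, \tilde\rho^{A'} \otimes \tilde\rho^{B'}) \leq S(\rho^{AB} \,\|\, \rho^A \otimes \rho^B),
\end{equation}
which is the desired data processing inequality (\ref{inequality1}). The main obstacle is essentially notational rather than conceptual: one must be careful that ``partial trace'' and ``apply $\mathcal E_A \otimes \mathcal E_B$'' commute correctly, which is guaranteed by trace preservation; beyond that, the heavy lifting is entirely done by Theorem~3.3, whose own proof is deferred to Appendix~B.
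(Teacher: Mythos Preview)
Your proposal is correct and follows essentially the same approach as the paper's proof: both identify $(\mathcal E_A \otimes \mathcal E_B)(\rho^A \otimes \rho^B) = \mathcal E_A(\rho^A) \otimes \mathcal E_B(\rho^B)$, verify that the marginals of the transformed state are $\mathcal E_A(\rho^A)$ and $\mathcal E_B(\rho^B)$, and then invoke the monotonicity of the quantum relative entropy (Theorem~3.5 in the paper's numbering). Your version is slightly more explicit about why trace preservation is needed for the marginal identification, but the underlying argument is identical.
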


\begin{proof}
Noting that $(\mathcal E_A \otimes \mathcal E_B) (\rho^A \otimes \rho^B) = \mathcal E_A (\rho^A) \otimes \mathcal E_B (\rho^B) $ and
\begin{equation}
{\rm tr}_{B'}[(\mathcal E_A \otimes \mathcal E_B) (\rho^{AB})] = \mathcal E_A (\rho^A), \ {\rm tr}_{A'}[(\mathcal E_A \otimes \mathcal E_B) (\rho^{AB})] = \mathcal E_B (\rho^B),
\end{equation}
inequality~(\ref{inequality1}) follows from the monotonicity~(\ref{monotonicity}) of the quantum relative entropy. 
$\Box$\end{proof}

The data processing inequality states that the quantum mutual information never increases by any CPTP map that is performed on each systems individually.  The following corollary is a special case.

\begin{corollary}
We consider three systems corresponding to $\bm H_A$, $\bm H_B$, and $\bm H_C$.  Then
\begin{equation}
I^{A:B} (\rho^{AB}) \leq  I^{A:BC} (\rho^{ABC}).
\label{inequality2}
\end{equation}
\end{corollary}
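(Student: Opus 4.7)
The plan is to apply the Data Processing Inequality (Theorem 4.1) in the straightforward way, regarding the left-hand side as arising from the right-hand side by discarding system $C$. Treat the joint system in the inequality as the bipartition $A$ versus $BC$, rather than $A$ versus $B$ versus $C$; the corollary will then follow from monotonicity of mutual information under local CPTP operations on one side of the cut.

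Concretely, I would take $\mathcal{E}_A := \mathcal{I}_A$, the identity map on $L(\bm H_A)$, and $\mathcal{E}_{BC}: L(\bm H_B \otimes \bm H_C) \to L(\bm H_B)$ defined by $\mathcal{E}_{BC}(\sigma) := {\rm tr}_C[\sigma]$. Both are CPTP: the identity is trivially CPTP, and the partial trace was shown to be CPTP in Example~2.2. Therefore $\mathcal{I}_A \otimes \mathcal{E}_{BC}$ is a CPTP map from $L(\bm H_A \otimes \bm H_B \otimes \bm H_C)$ to $L(\bm H_A \otimes \bm H_B)$.

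Next I would observe that
\begin{equation}
(\mathcal{I}_A \otimes \mathcal{E}_{BC})(\rho^{ABC}) = {\rm tr}_C[\rho^{ABC}] = \rho^{AB},
\end{equation}
so applying Theorem~4.1 with the bipartition $A$ versus $BC$ (before the map) and $A$ versus $B$ (after the map) yields
\begin{equation}
I^{A:B}(\rho^{AB}) = I^{A:B}\bigl((\mathcal{I}_A \otimes \mathcal{E}_{BC})(\rho^{ABC})\bigr) \leq I^{A:BC}(\rho^{ABC}),
\end{equation}
which is exactly inequality~(\ref{inequality2}).

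There is essentially no obstacle here; the only subtlety worth a sentence is that Theorem~4.1 was stated for a bipartite system, so one must be careful to view $BC$ as a single subsystem with Hilbert space $\bm H_B \otimes \bm H_C$ when invoking the inequality. Once that identification is made, the corollary is an immediate specialization.
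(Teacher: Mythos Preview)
Your proof is correct and takes essentially the same approach as the paper: define $\mathcal E_{BC}(\rho^{BC}) := {\rm tr}_C[\rho^{BC}]$ and apply Theorem~4.1 to $\mathcal I_A \otimes \mathcal E_{BC}$. The paper's proof is simply a one-line version of what you wrote.
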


\begin{proof}
By taking $\mathcal E_{BC} (\rho^{BC}) := {\rm tr}_C [\rho^{BC}]$ and applying Theorem~4.1 to $\mathcal I_A \otimes  \mathcal E_{BC}$, we obtain inequality~(\ref{inequality2}).   
$\Box$\end{proof}

\begin{remark}
We note that inequality~(\ref{inequality2}) can be written as
\begin{equation}
S(\rho^A) + S(\rho^B) - S(\rho^{AB}) \leq S(\rho^A) + S(\rho^{BC}) - S(\rho^{ABC}),
\end{equation}
which is equivalent to  the strong subadditivity~(\ref{strong_subadditivity}).
\end{remark}


\subsection{Holevo's $\chi$-quantity}

We next introduce the Holevo's $\chi$-quantity (or just the $\chi$-quantity) that is related to the accessible classical information encoded in quantum states~\cite{Holevo,Yuen_Ozawa,Fuchs_Caves}.

\begin{definition}
Let $A$ be a finite set, $( p(a) )_{a \in A}$ be a probability distribution with $\sum_a  p(a) = 1$, and  $\rho_a \in Q(\bm H_S)$ be a quantum state labeled by $a \in A$.  The $\chi$-quantity is defined as
\begin{equation}
\chi^{AS} := S(\rho) - \sum_a p(a) S(\rho_a),
\label{Holevo_chi}
\end{equation}
where $\rho := \sum_a p(a) \rho_a$.
\end{definition}

We introduce an auxiliary system $\bm H_A$ with orthonormal basis $\{ | \varphi_a \rangle \}_{a \in A}$ that can store the classical information about $a$. We define a density operator 
\begin{equation}
\rho^{AS} := \sum_a p(a) | \varphi_a \rangle \langle \varphi_a | \otimes \rho_a.
\end{equation}
The $\chi$-quantity is then given by the mutual information
\begin{equation}
\chi^{AS} = I^{A:S} (\rho^{AS}),
\label{mutual_chi}
\end{equation}
which is a useful formula.

\begin{remark}
We note the relationship between $\chi$-quantity and the classical mutual information.
Let $\{ | \psi_b \rangle \}_{b \in B}$ be an orthonormal basis of $\bm H_S$.  We assume that $\rho_a$'s can simultaneously be diagonalized as
\begin{equation}
\rho_a = \sum_b p(b|a) | \psi_b \rangle \langle \psi_b |,
\end{equation}
where $\sum_b p(b|a) = 1$ for any $a$.  In this case, we can straightforwardly show that
\begin{equation}
\chi^{AS} = I^{A:B},
\end{equation}
where $I^{A:B}$ is the classical mutual information between $A$ and $B$ for the joint distribution $p(a,b) := p(b|a)p(a)$.
\end{remark}

The following theorems describe important properties of the $\chi$-quantity and the von Neumann entropy.

\begin{theorem}[Concavity of the von Neumann entropy]
The $\chi$-quantity satisfies
\begin{equation}
\chi^{AS} \geq 0,
\label{chi_inequality1}
\end{equation}
or equivalently
\begin{equation}
S(\rho) \geq \sum_a p(a) S(\rho_a),
\label{concavity}
\end{equation}
which is called the concavity of the von Neumann entropy.  The equality is achieved if and only if $p(a) = 1$ for a single $a$.
\end{theorem}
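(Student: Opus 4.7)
The plan is to deduce this theorem directly from the machinery already built up in the paper, specifically the formula $\chi^{AS} = I^{A:S}(\rho^{AS})$ stated in Eq.~(\ref{mutual_chi}) together with the positivity of the quantum mutual information, which itself is an immediate corollary of the Klein inequality (Theorem~3.3).

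First I would briefly verify Eq.~(\ref{mutual_chi}). Setting $\rho^{AS} := \sum_a p(a)|\varphi_a\rangle\langle\varphi_a|\otimes \rho_a$, the partial traces are $\rho^A = \sum_a p(a)|\varphi_a\rangle\langle\varphi_a|$ with $S(\rho^A) = H(\bm p)$, and $\rho^S = \sum_a p(a)\rho_a = \rho$. Because the operators $|\varphi_a\rangle\langle\varphi_a|\otimes \rho_a$ have mutually orthogonal supports (thanks to the orthonormality of $\{|\varphi_a\rangle\}$), Theorem~3.1 applies and gives
\begin{equation}
S(\rho^{AS}) = H(\bm p) + \sum_a p(a)\, S\!\bigl(|\varphi_a\rangle\langle\varphi_a|\otimes \rho_a\bigr) = H(\bm p) + \sum_a p(a)\, S(\rho_a),
\end{equation}
where $S(|\varphi_a\rangle\langle\varphi_a|) = 0$ was used. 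Substituting into the definition of the mutual information yields $I^{A:S}(\rho^{AS}) = S(\rho) - \sum_a p(a)S(\rho_a) = \chi^{AS}$.

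Second, I would invoke positivity: by Definition~4.1 and Theorem~3.3,
\begin{equation}
\chi^{AS} = I^{A:S}(\rho^{AS}) = S(\rho^{AS}\,\|\,\rho^A\otimes\rho^S) \geq 0,
\end{equation}
which is exactly inequality~(\ref{chi_inequality1}), equivalent to~(\ref{concavity}).

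Finally, for the equality condition, I would use the saturation case of the Klein inequality: $S(\rho^{AS}\,\|\,\rho^A\otimes\rho^S) = 0$ iff $\rho^{AS} = \rho^A\otimes\rho^S$. Expanding both sides in the basis $\{|\varphi_a\rangle\}$ and matching block by block forces $\rho_a = \rho$ for every $a$ with $p(a) > 0$, which in particular is satisfied when $p(a) = 1$ for a single $a$. I do not anticipate any real obstacle here since every ingredient has already been proved earlier in the paper; the only thing requiring care is writing out the entropy decomposition of $\rho^{AS}$ cleanly so that the reduction to $I^{A:S}$ is transparent.
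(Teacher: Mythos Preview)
Your proposal is correct and follows essentially the same approach as the paper: use Eq.~(\ref{mutual_chi}) to identify $\chi^{AS}$ with the mutual information $I^{A:S}(\rho^{AS})$ and then invoke its positivity via the Klein inequality, with equality iff $\rho^{AS}$ is a product state. Your treatment of the equality case is in fact slightly more careful than the paper's, correctly pausing at the condition $\rho_a=\rho$ for all $a$ with $p(a)>0$ rather than immediately concluding $p(a)=1$ for a single $a$ (the latter ``only if'' as stated actually fails when several $\rho_a$ coincide).
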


\begin{proof}
The inequality is obvious from the positivity of the mutual information in Eq.~(\ref{mutual_chi}).  The equality is achieved if and only if $\rho^{AS}$ is a product state, which implies that $p(a) = 1$ holds for a single $a$.
$\Box$\end{proof}

\begin{theorem}
The $\chi$-quantity satisfies
\begin{equation}
\chi^{AS} \leq H(\bm p),
\label{chi_inequality2}
\end{equation}
or equivalently
\begin{equation}
S(\rho) \leq H(\bm p) + \sum_a p(a) S(\rho_a),
\label{decomposition_inequality}
\end{equation} 
where $H(\bm p) := -\sum_a p(a) \ln p(a)$.  The equality is achieved if the supports of $\rho_a$'s are mutually orthogonal.
\end{theorem}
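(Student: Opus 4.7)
The plan is to rewrite the Holevo $\chi$-quantity as an average of quantum relative entropies and then bound each term individually. Expanding the definition and using $\rho = \sum_a p(a) \rho_a$ to substitute $-{\rm tr}[\rho \ln \rho] = -\sum_a p(a)\, {\rm tr}[\rho_a \ln \rho]$, a direct computation yields the identity $\chi^{AS} = \sum_a p(a)\, S(\rho_a \| \rho)$. With this identity in hand, the target inequality $\chi^{AS} \leq H(\bm p) = \sum_a p(a) (-\ln p(a))$ reduces to the per-$a$ bound $S(\rho_a \| \rho) \leq -\ln p(a)$.

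The key observation is the operator inequality $\rho = \sum_b p(b)\rho_b \geq p(a)\rho_a$, which is immediate because every term in the sum is a positive operator. Invoking the operator monotonicity of the natural logarithm (a standard consequence of L\"owner's theorem: $X \geq Y > 0$ implies $\ln X \geq \ln Y$), one obtains $\ln \rho \geq \ln(p(a)\rho_a) = (\ln p(a)) I + \ln \rho_a$, with a routine restriction to the support of $\rho_a$ handling any rank deficiencies (noting that ${\rm supp}(\rho_a) \subset {\rm supp}(\rho)$ is automatic from the same operator inequality, so $S(\rho_a \| \rho)$ is finite). Taking the trace against $\rho_a$ gives ${\rm tr}[\rho_a \ln \rho] \geq \ln p(a) + {\rm tr}[\rho_a \ln \rho_a]$, and rearranging produces exactly $S(\rho_a \| \rho) = {\rm tr}[\rho_a \ln \rho_a] - {\rm tr}[\rho_a \ln \rho] \leq -\ln p(a)$.

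Averaging this per-$a$ bound with the weights $p(a)$ yields $\chi^{AS} \leq H(\bm p)$, as claimed. For the equality case when the supports of the $\rho_a$'s are mutually orthogonal, Theorem~3.1 applied to $\rho = \sum_a p(a)\rho_a$ directly gives $S(\rho) = H(\bm p) + \sum_a p(a) S(\rho_a)$, i.e., $\chi^{AS} = H(\bm p)$. The main obstacle is the appeal to the operator monotonicity of $\ln$: while standard, this fact is not proven in the preceding sections, so I would either cite L\"owner's theorem or include a short supporting lemma. I also considered substituting the monotonicity of the quantum relative entropy or strong subadditivity for operator monotonicity, but those routes (via partial trace or subadditivity on the classical-quantum state $\rho^{AS} = \sum_a p(a) |\varphi_a\rangle\langle\varphi_a| \otimes \rho_a$) only recover the opposite-direction bound, namely the concavity $S(\rho) \geq \sum_a p(a) S(\rho_a)$ of Theorem~4.3. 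Thus the combination of the operator inequality $\rho \geq p(a)\rho_a$ with operator monotonicity of $\ln$ appears to be essential.
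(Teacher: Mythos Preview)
Your proof is correct and takes a genuinely different route from the paper's.  The paper argues via the data processing inequality: it introduces a perfectly correlated classical state $\sigma^{AA'} = \sum_a p(a)\,|\varphi_a\rangle\langle\varphi_a|\otimes|\psi_a\rangle\langle\psi_a|$ with $I^{A:A'}(\sigma^{AA'}) = H(\bm p)$, constructs an explicit CPTP ``preparation'' map $\mathcal E:|\psi_a\rangle\langle\psi_a|\mapsto\rho_a$, and then applies Theorem~4.1 to $(\mathcal I_A\otimes\mathcal E)(\sigma^{AA'}) = \rho^{AS}$ to get $\chi^{AS}=I^{A:S}(\rho^{AS})\leq I^{A:A'}(\sigma^{AA'})=H(\bm p)$.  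You instead use the identity $\chi^{AS}=\sum_a p(a)\,S(\rho_a\|\rho)$ (which the paper records later as Eq.~(\ref{Holevo3})) together with the per-term bound $S(\rho_a\|\rho)\leq -\ln p(a)$, obtained from $\rho\geq p(a)\rho_a$ and the operator monotonicity of $\ln$.  Your argument is more elementary in that it requires only Lemma~B.3 (operator monotonicity of $\ln$), whereas the paper's route invokes the full monotonicity of the quantum relative entropy (Theorem~3.5, proved in Appendix~B using all of Lemmas~B.1--B.6).  Two minor remarks: first, your rank-deficiency step can be made precise by the $\varepsilon$-regularization $\rho+\varepsilon I\geq p(a)(\rho_a+\varepsilon I)>0$ and a limit; second, your closing claim that monotonicity-based routes ``only recover the opposite-direction bound'' is not quite right --- the paper's preparation-map construction shows that monotonicity does yield the desired direction, you simply did not hit upon that particular channel.
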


\begin{proof}
We introduce an auxiliary system $\bm H_{A'}$ with orthonormal basis $\{ | \psi_a \rangle \}_{a \in A}$.  We define a state
\begin{equation}
\sigma^{AA'} := \sum_a p(a) | \varphi_a \rangle \langle \varphi_a | \otimes | \psi_a \rangle \langle \psi_a | \in Q(\bm H_A \otimes \bm H_{A'}),
\end{equation}
where the mutual information between $A$ and $A'$  is given by $I^{A:A'}(\sigma^{AA'}) = H(\bm p)$.  
On the other hand, we define a CPTP map $\mathcal E : L(\bm H_{A'}) \to L(\bm H_S)$ such that
\begin{equation}
\mathcal E (| \psi_a \rangle \langle \psi_a |) = \rho_a
\label{rho_a}
\end{equation} 
for any $a$.  In fact, we can construct $\mathcal E$ satisfying Eq.~(\ref{rho_a}) as follows.  Let $\rho_a = \sum_i q_a(i) | a i \rangle \langle ai |$ be the spectrum decomposition of $\rho_a$.  We define Kraus operators
\begin{equation}
M_{ai} := \sqrt{q_a(i)} | a i \rangle \langle \psi_a | \in L(\bm H_{A'}, \bm H_S),
\label{Kraus3}
\end{equation}
which satisfies
\begin{equation}
\sum_{ai} M_{ai}^\dagger M_{ai} = \sum_{ai} q_a(i) | \psi_a \rangle \langle ai | a i \rangle \langle \psi_a | = \sum_{a}   | \psi_a \rangle \langle \psi_a | = I^{A'},
\end{equation}
where $I^{A'}$ is the identity on $\bm H_{A'}$.
By defining $\mathcal E$ with the Kraus operators (\ref{Kraus3}), we have
\begin{equation}
\mathcal E (| \psi_a \rangle \langle \psi_a |)  = \sum_i q_a(i) |ai \rangle \langle ai | = \rho_a,
\end{equation}
which confirms Eq.~(\ref{rho_a}).

By applying $\mathcal E \otimes \mathcal I_A$ to $\sigma^{AA'}$ with $\mathcal I_A$ the identity on $L(\bm H_A)$, we have
\begin{equation}
 (\mathcal E \otimes \mathcal I_A) (\sigma^{AA'}) = \sum_a p(a) \rho_a \otimes | \varphi_a \rangle \langle \varphi_a | =: \rho^{AS}. 
\end{equation}
Therefore, from the data processing inequality (\ref{inequality1}), we obtain
\begin{equation}
H(\bm p) = I^{A:A'} (\sigma^{AA'}) \geq  I^{A:S} (\rho^{AS}) = \chi^{AS},
\end{equation}
which implies inequality~(\ref{decomposition_inequality}).  If the supports of $\rho_a$'s are mutually orthonormal, the equality in (\ref{decomposition_inequality}) is achieved because of Eq.~(\ref{decomposition}).
$\Box$\end{proof}

Theorem~4.3 implies that Eq.~(\ref{decomposition}) is replaced by inequality~(\ref{decomposition_inequality}) if the supports of $\rho_a$'s are not  mutually orthogonal.
The following corollary is a direct consequence of Theorem~4.3.

\begin{corollary}
We define $\rho := \sum_a p(a) | \phi_a \rangle \langle \phi_a|$ with $\sum_a p(a) = 1$, where $| \phi_a \rangle$'s are not necessarily mutually-orthogonal.   We then have
\begin{equation}
S(\rho) \leq H(\bm p),
\end{equation}
where $H(\bm p) := -\sum_a p(a) \ln p(a)$.
\end{corollary}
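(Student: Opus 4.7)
The plan is to derive this corollary as an immediate specialization of Theorem~4.3. I would set $\rho_a := |\phi_a\rangle\langle\phi_a|$ so that $\rho_a \in Q(\bm H)$ for each $a \in A$ and $\rho = \sum_a p(a)\rho_a$ matches the hypothesis of Theorem~4.3. The key observation is that each $\rho_a$ is a rank-one pure state, so by the definition of the von Neumann entropy,
\begin{equation}
S(\rho_a) = -\mathrm{tr}[|\phi_a\rangle\langle\phi_a|\ln |\phi_a\rangle\langle\phi_a|] = 0
\end{equation}
for every $a$.

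Substituting into inequality~(\ref{decomposition_inequality}), the sum $\sum_a p(a) S(\rho_a)$ vanishes, and we immediately get $S(\rho) \le H(\bm p)$, which is the claim. No further estimates are needed; there is essentially no obstacle since all the real work was done in Theorem~4.3 via the data processing inequality. The only thing to note is that we do not require the $|\phi_a\rangle$'s to be mutually orthogonal, which is precisely where Theorem~4.3 strengthens Theorem~3.1: when the supports of the $\rho_a$'s (here, the one-dimensional subspaces spanned by the $|\phi_a\rangle$'s) fail to be mutually orthogonal, the equality~(\ref{decomposition}) generically becomes a strict inequality, giving $S(\rho) < H(\bm p)$, while equality is attained in the orthogonal case. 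Thus the corollary follows in one line from Theorem~4.3 with $S(\rho_a)=0$.
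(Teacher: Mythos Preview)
Your proof is correct and matches the paper's own proof exactly: the paper's one-line argument is ``Apply Theorem~4.3 to $\rho_a := |\phi_a\rangle\langle\phi_a|$,'' which is precisely what you do, with the observation that $S(\rho_a)=0$ for pure states.
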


\begin{proof}
Apply Theorem~4.3 to  $\rho_a := | \phi_a \rangle \langle \phi_a|$.
$\Box$\end{proof}

We next show the data processing inequality for the $\chi$-quantity.

\begin{theorem}[Data processing inequality]
We define
\begin{equation}
\chi'^{AS'} := S (\mathcal E (\rho)) - \sum_a p(a) S(\mathcal E (\rho_a)),
\label{inequality3}
\end{equation}
where  $\mathcal E: L(\bm H_S) \to L(\bm H_{S'})$ is a CPTP map. 
Then
\begin{equation}
\chi'^{AS'}  \leq \chi^{AS}.
\end{equation}
\end{theorem}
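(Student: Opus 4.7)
The natural plan is to reduce this to the data processing inequality for the quantum mutual information, Theorem~4.1, via the representation of the $\chi$-quantity as a mutual information of a classical-quantum state given in Eq.~(\ref{mutual_chi}).

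First I would introduce the auxiliary system $\bm H_A$ with orthonormal basis $\{|\varphi_a\rangle\}_{a \in A}$ encoding the classical label, and form the block-diagonal state
\begin{equation}
\rho^{AS} := \sum_a p(a) |\varphi_a\rangle\langle\varphi_a| \otimes \rho_a \in Q(\bm H_A \otimes \bm H_S),
\end{equation}
so that by Eq.~(\ref{mutual_chi}) we have $\chi^{AS} = I^{A:S}(\rho^{AS})$.

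Next I would apply the CPTP map $\mathcal I_A \otimes \mathcal E : L(\bm H_A \otimes \bm H_S) \to L(\bm H_A \otimes \bm H_{S'})$ to $\rho^{AS}$. By linearity of $\mathcal E$ and the block structure, the resulting state is
\begin{equation}
(\mathcal I_A \otimes \mathcal E)(\rho^{AS}) = \sum_a p(a) |\varphi_a\rangle\langle\varphi_a| \otimes \mathcal E(\rho_a) =: \rho^{AS'}.
\end{equation}
Its partial traces are $\mathrm{tr}_{S'}[\rho^{AS'}] = \sum_a p(a) |\varphi_a\rangle\langle\varphi_a|$, which coincides with $\mathrm{tr}_S[\rho^{AS}]$, and $\mathrm{tr}_A[\rho^{AS'}] = \mathcal E(\rho)$. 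A short computation using the fact that the $|\varphi_a\rangle\langle\varphi_a| \otimes \mathcal E(\rho_a)$ terms have mutually orthogonal supports on the $A$ factor, together with Theorem~3.1 applied in the $\bm H_A \otimes \bm H_{S'}$ system, then gives $I^{A:S'}(\rho^{AS'}) = S(\mathcal E(\rho)) - \sum_a p(a) S(\mathcal E(\rho_a)) = \chi'^{AS'}$.

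Finally, Theorem~4.1 applied to the CPTP maps $\mathcal I_A$ on $\bm H_A$ and $\mathcal E$ on $\bm H_S$ yields $I^{A:S'}(\rho^{AS'}) \leq I^{A:S}(\rho^{AS})$, which is precisely $\chi'^{AS'} \leq \chi^{AS}$. I do not anticipate any real obstacle: the only slightly delicate point is verifying that the mutual information of the classical-quantum output state genuinely equals $\chi'^{AS'}$, but this follows immediately from the additivity of the von Neumann entropy on orthogonally supported mixtures (Theorem~3.1), so the proof is essentially a direct reduction.
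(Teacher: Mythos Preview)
Your proposal is correct and follows exactly the same route as the paper: represent $\chi^{AS}$ via Eq.~(\ref{mutual_chi}) as $I^{A:S}(\rho^{AS})$, apply $\mathcal I_A \otimes \mathcal E$, and invoke the data processing inequality~(\ref{inequality1}). The paper's proof is the one-line version of what you wrote, relying implicitly on Eq.~(\ref{mutual_chi}) for the output state where you spelled out the orthogonal-support computation.
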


\begin{proof}
From Eq.~(\ref{mutual_chi}) and  inequality (\ref{inequality1}), we have $\chi'^{AS'} = I^{A:S'} ((\mathcal I_A \otimes \mathcal E  ) (\rho^{AS})) \leq I^{A:S} (\rho^{AS}) = \chi^{AS}$, where $\mathcal I_A$ is the identity on $L(\bm H_A)$.
$\Box$\end{proof}

We next formulate and prove the Holevo bound, which determines the upper bound of the accessible classical information that is encoded in a quantum system.  
We consider that the classical information about $a \in A$ is encoded in a quantum state $\rho_a \in Q(\bm H_S)$.
We extract the information about $a$ by performing a quantum measurement on the quantum system.

Let $\{ E_b \}_{b \in B}$ be a POVM with a finite set $B$.  The probability of obtaining outcome $b \in B$ by a  measurement on state $\rho_a$ with the POVM is given by 
\begin{equation}
p(b|a) = {\rm tr} [E_b \rho_a].
\end{equation}  
The joint probability of $(a,b)$ is  $p(a,b) = p(b|a)p(a)$, whose marginal distributions are $p(a) := \sum_b p(a,b)$ and $p(b) := \sum_a p(a,b)$.  The mutual information $I^{A:B}$ between the two classical variables is then given by
\begin{equation}
I^{A:B} = \sum_{a,b} p(a,b) \ln \frac{p(a,b)}{p(a)p(b)}.
\label{Holevo_bound_classical}
\end{equation}  
The Holevo bound states that the upper bound of the classical mutual information~(\ref{Holevo_bound_classical}) is bounded by the $\chi$-quantity.

\begin{theorem}[Holevo bound]
\begin{equation}
I^{A:B} \leq \chi^{AS}
\label{Holevo}
\end{equation}
 holds for any POVM $\{ E_b \}_{b \in B}$.
\end{theorem}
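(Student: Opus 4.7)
The plan is to realize the classical mutual information $I^{A:B}$ as the quantum mutual information of a suitable classical--classical state obtained from $\rho^{AS}$ by a CPTP map acting only on $S$, and then invoke the data processing inequality (Theorem~4.1 or equivalently Theorem~4.4) together with the identification $\chi^{AS}=I^{A:S}(\rho^{AS})$ established in Eq.~(\ref{mutual_chi}).

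First I would introduce a fresh Hilbert space $\bm H_{B'}$ with orthonormal basis $\{|\psi_b\rangle\}_{b\in B}$ indexed by the measurement outcomes, and construct a CPTP map $\mathcal E:L(\bm H_S)\to L(\bm H_{B'})$ that writes the outcome probabilities into the diagonal of a classical register, namely
\begin{equation}
\mathcal E(\sigma)=\sum_b {\rm tr}[E_b\sigma]\,|\psi_b\rangle\langle\psi_b|.
\end{equation}
To see that this is indeed CPTP, decompose each POVM element as $E_b=\sum_i |e_{b,i}\rangle\langle e_{b,i}|$ and take Kraus operators $K_{b,i}:=|\psi_b\rangle\langle e_{b,i}|\in L(\bm H_S,\bm H_{B'})$; a direct check gives $\sum_{b,i}K_{b,i}^\dagger K_{b,i}=\sum_b E_b=I$ (using the POVM resolution (\ref{POVM_unity})) and $\sum_{b,i}K_{b,i}\sigma K_{b,i}^\dagger=\mathcal E(\sigma)$, so Theorems~2.1 and 2.2 guarantee that $\mathcal E$ is CPTP.

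Next I would apply $\mathcal I_A\otimes \mathcal E$ to the classical--quantum state $\rho^{AS}=\sum_a p(a)|\varphi_a\rangle\langle\varphi_a|\otimes\rho_a$, obtaining
\begin{equation}
(\mathcal I_A\otimes\mathcal E)(\rho^{AS})=\sum_{a,b} p(a)\,{\rm tr}[E_b\rho_a]\,|\varphi_a\rangle\langle\varphi_a|\otimes|\psi_b\rangle\langle\psi_b|=\sum_{a,b} p(a,b)\,|\varphi_a\rangle\langle\varphi_a|\otimes|\psi_b\rangle\langle\psi_b|.
\end{equation}
Because this is diagonal in the product basis $\{|\varphi_a\rangle|\psi_b\rangle\}$, Remark~4.1 (or a direct computation from the definition) identifies its quantum mutual information with the classical mutual information $I^{A:B}$ of Eq.~(\ref{Holevo_bound_classical}).

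Finally I would invoke the data processing inequality (Theorem~4.1) for $\mathcal I_A\otimes\mathcal E$, yielding
\begin{equation}
I^{A:B}=I^{A:B'}\bigl((\mathcal I_A\otimes\mathcal E)(\rho^{AS})\bigr)\leq I^{A:S}(\rho^{AS})=\chi^{AS},
\end{equation}
which is (\ref{Holevo}). The only nontrivial step is packaging the POVM as an honest CPTP dephasing-onto-outcomes map; once that is done, everything else is a one-line consequence of the monotonicity of quantum mutual information, which has already been established.
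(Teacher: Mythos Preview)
Your proof is correct and is essentially the paper's second proof: both build the measure-and-record CPTP map $\sigma\mapsto\sum_b{\rm tr}[E_b\sigma]\,|b\rangle\langle b|$ from $\bm H_S$ to a classical register and then invoke monotonicity. You package the final step as the data processing inequality for mutual information (Theorem~4.1), while the paper's second proof writes $\chi^{AS}=\sum_a p(a)S(\rho_a\|\rho)$ and applies relative-entropy monotonicity termwise; since the $A$ register is already classical these are the same computation. The paper's first proof takes a slightly longer route: it appends an ancilla $|0\rangle\in\bm H_B$, applies a CPTP map on $SB$ that writes outcomes into $B$ while keeping $S$, and then uses two data-processing steps (one for the measurement, one for tracing out $S$). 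Your single-step version, which discards $S$ immediately, is the more economical argument.
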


We will discuss two proofs of Theorem~4.5 in the followings.  The first proof~\cite{Nielsen-Chuang} is more intuitive than the second one~\cite{Yuen_Ozawa}, while the second one is simpler than the first one.  Both proofs are based on the monotonicity of the quantum relative entropy.

\begin{proof}
We now discuss the first proof.
We introduce two auxiliary quantum systems described by Hilbert spaces $\bm H_A$ and $\bm H_B$.
Their orthonormal bases are  labeled by the corresponding  classical variables as $\{ | a \rangle \}_{a \in A} \subset \bm H_A$ and $\{ | b \rangle \}_{b \in B} \subset \bm H_B$.  We define $\rho^{AS} \in Q(\bm H_A \otimes \bm H_S)$ as
\begin{equation}
\rho^{AS} := \sum_{a} p(a) | a \rangle \langle a | \otimes \rho_a,
\end{equation}
and define $\rho^{ASB} := \rho^{AS} \otimes | 0 \rangle \langle 0 | \in Q(\bm H_A \otimes \bm H_S \otimes \bm H_B)$, where $| 0 \rangle \in \bm H_B$ is an initial reference state.  It is easy to show that there is a CPTP map $\mathcal E_{SB}$ acting on $L(\bm H_S \otimes \bm H_B)$ such that 
\begin{equation}
(\mathcal I_A \otimes \mathcal E_{SB}) (\rho^{ASB}) = \sum_{a,b} p(a) | a \rangle \langle a | \otimes \sqrt{E_b} \rho_a \sqrt{E_b} \otimes |b \rangle \langle b | =: \rho'^{ASB},
\end{equation}
where $\mathcal I_A$ is the identity on $L(\bm H_A)$. We note that  $\mathcal E_{SB}$ describes a measurement process corresponding to POVM $\{ E_a \}$.

 Let  $\rho'^{AB} := {\rm tr}_S [\rho^{ASB}]$.  We then obtain
\begin{equation}
\chi^{AS} = I^{A:S}(\rho^{AS}) =  I^{A:SB}(\rho^{ASB}) \geq I^{A:SB} (\rho'^{ASB}) \geq I^{A:B} (\rho'^{AB}),
\label{Holevo1}
\end{equation}
where we used Eq.~(\ref{mutual_chi}) and the data processing inequalities (\ref{inequality1}) and (\ref{inequality2}).  By noting that
\begin{equation}
\rho'^{AB} = \sum_{a,b} p(a,b) | a \rangle \langle a | \otimes | b \rangle \langle b |,
\end{equation}
we obtain 
\begin{equation}
I^{A:B} (\rho'^{AB}) = I^{A:B},
\label{Holevo2} 
\end{equation}
where the rhs means the classical mutual information~(\ref{Holevo_bound_classical}).  Inequality~(\ref{Holevo1}) and Eq.~(\ref{Holevo2}) imply the Holevo bound~(\ref{Holevo}).
$\Box$\end{proof}

\begin{proof}
We next discuss the second proof.
We note that the $\chi$-quantity can be written as
\begin{equation}
\chi^{AS} := \sum_a p(a) S(\rho_a \| \rho).
 \label{Holevo3}
\end{equation}
Let $N < \infty$ be the number of possible outcomes, and let $\rho_a'$, $\rho' \in Q(\mathbb C^N)$ be diagonal matrices, where every diagonal element is given by ${\rm tr}[E_b \rho_a]$ or  ${\rm tr}[E_b \rho]$,  respectively. We have
\begin{equation}
I^{A:B} = \sum_a p(a) S(\rho_a' \| \rho').
 \label{Holevo4}
\end{equation}
On the other hand, a linear map $\mathcal E: L(\bm H_S) \to L(\mathbb C^N)$ satisfying $\mathcal E(\rho_a) = \rho_a'$ and $\mathcal E(\rho) = \rho'$ is CPTP.  Therefore, from the monotonicity of the quantum relative entropy, we obtain
 \begin{equation}
 S(\rho_a \| \rho) \geq S(\rho_a' \| \rho')
 \label{Holevo5}
 \end{equation}
 for any $a$.  By averaging inequality~(\ref{Holevo5}) over all $a$'s and by using Eqs.~(\ref{Holevo3}) and (\ref{Holevo4}), we obtain the Holevo bound~(\ref{Holevo}).
$\Box$\end{proof}


\subsection{QC-mutual Information (Groenewold-Ozawa Information)}

We next introduce a quantity called the QC-mutual information that is also related to the accessible classical information encoded in quantum states~\cite{Groenewold,Ozawa1,Buscemi,Sagawa-Ueda1}.
We consider a quantum measurement described by  POVM $\{ E_b \}_{b \in B}$, where $B = \{ b \}$ is  the finite set of outcomes.  If the measured state is   $\rho \in Q(\bm H_S)$, the  probability of obtaining outcome $b$ is given by $p(b) = {\rm tr}_S[E_b \rho]$.  By defining  
\begin{equation}
\rho_b := \frac{1}{p(b)}\sqrt{E_b} \rho \sqrt{E_b},
\end{equation}
we introduce the QC-mutual information as follows.

\begin{definition}
In the above setup, the QC-mutual information (the Groenewold-Ozawa information) is defined as 
\begin{equation}
I_{\rm QC}^{S:B} := S(\rho) - \sum_b p(b) S(\rho_b).
\label{QC_mutual}
\end{equation}
\end{definition}

We note that the QC-mutual information~(\ref{QC_mutual})  only depends on the measured state $\rho$ and the POVM $\{ E_b \}_{b \in B}$.  We note that Groenewold~\cite{Groenewold} and Ozawa~\cite{Ozawa1} originally discussed the case that any  Kraus operator $M_b \in  L(\bm H_S)$ satisfies $E_b := M_b^\dagger M_b$.

\begin{remark}
We note the relationship between the QC-mutual information and the classical mutual information.
Let $\{ | \varphi_a \rangle \}_{a \in A}$ be an orthonormal basis of $\bm H_S$.  We assume that $\rho$ and $E_b$'s are simultaneously diagonalized such that $\rho = \sum_a p(a) | \varphi_a \rangle \langle \varphi_a |$ and $E_b = \sum_a p(b|a) | \varphi_a \rangle \langle \varphi_a |$ for any $b \in B$.  In this case, we have $\rho_b = \sum_a p(a|b) | \varphi_a \rangle \langle \varphi_a |$ with $p(a|b) := p(b|a)p(a) /  \left( \sum_a   p(b|a)p(a) \right)$.  Therefore,
we obtain
\begin{equation}
I_{\rm QC}^{S:B} = I^{A:B},
\end{equation}
where $I^{A:B}$ is the classical mutual information between $A$ and $B$ for the joint distribution $p(a,b) := p(b|a)p(a)$.
\end{remark}

We consider a quantum measurement with the set of Kraus operators $\{ M_b \}_{b \in B} \subset L(\bm H_S)$ satisfying $E_b = M_b^\dagger M_b$. Let $\rho'_b := M_b \rho M_b^\dagger / p(b)$ and $\rho' := \sum_b p(b) \rho'_b$. The QC-mutual information can then be written as  
\begin{equation}
I_{\rm QC}^{S:B} = \chi^{SB} - \Delta S_{\rm meas},
\end{equation} 
where $\chi^{SB} := S(\rho') - \sum_b p(b) S(\rho_b')$ is the $\chi$-quantity of the post-measurement states, and $\Delta S_{\rm meas} := S(\rho') - S(\rho)$ is the change of  the von Neumann entropy by the measurement.

On the other hand, the QC-mutual information $I_{\rm QC}^{S:B}$ equals the $\chi$-quantity of an auxiliary system.
Let  $\bm H_R$ be the Hilbert space of the auxiliary system $R$, and  $| \Psi \rangle \in \bm H_S \otimes \bm H_R$ be a purification of $\rho$ such that
\begin{equation}
{\rm tr}_R [|\Psi \rangle \langle \Psi |] = \rho.
\end{equation}
We define $\rho^R := {\rm tr}_S [ |\Psi \rangle \langle \Psi | ]$ and 
\begin{equation}
\rho^R_b := {\rm tr}_S [ (\sqrt{E_b} \otimes I^R) | \Psi \rangle \langle \Psi | (\sqrt{E_b} \otimes I^R)  ] / p(b),  
\end{equation}
where $I^R \in L(\bm H_R)$ is the identity.   We note that $\rho^R = \sum_b p(b) \rho^R_b$.  We then obtain the following theorem:

\begin{theorem} The QC-mutual information satisfies
\begin{equation}
I_{\rm QC}^{S:B} = \chi^{BR},
\label{QC_chi1}
\end{equation}
where $\chi^{BR} := S(\rho^R) - \sum_b p(b) \rho^R_b$ is the $\chi$-quantity of $\{ \rho^R_b \}_{b \in B}$.
\end{theorem}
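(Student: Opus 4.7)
The plan is to reduce the equality to two applications of Theorem~3.2 (entropies of the reductions of a pure state coincide) by showing that, for each outcome $b$, the unnormalized conditional state on $\bm H_S\otimes\bm H_R$ is still a pure state whose partial traces are exactly $\rho_b$ and $\rho^R_b$.

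First I would introduce the normalized post-measurement vector on the joint system,
\begin{equation}
|\Psi_b\rangle := \frac{1}{\sqrt{p(b)}}\,(\sqrt{E_b}\otimes I^R)\,|\Psi\rangle \in \bm H_S\otimes\bm H_R,
\end{equation}
and check it is a unit vector using $\langle\Psi|E_b\otimes I^R|\Psi\rangle={\rm tr}_S[E_b\rho]=p(b)$. Thus $|\Psi_b\rangle\langle\Psi_b|$ is a pure state on the composite system.

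Next I would compute its two partial traces. Because the measurement operator $\sqrt{E_b}$ acts only on $S$, the partial trace over $R$ commutes with it, giving
\begin{equation}
{\rm tr}_R\bigl[|\Psi_b\rangle\langle\Psi_b|\bigr] = \frac{1}{p(b)}\sqrt{E_b}\,{\rm tr}_R[|\Psi\rangle\langle\Psi|]\,\sqrt{E_b} = \frac{1}{p(b)}\sqrt{E_b}\,\rho\,\sqrt{E_b} = \rho_b,
\end{equation}
while by the very definition of $\rho_b^R$,
\begin{equation}
{\rm tr}_S\bigl[|\Psi_b\rangle\langle\Psi_b|\bigr] = \rho_b^R.
\end{equation}
Theorem~3.2 applied to the pure state $|\Psi_b\rangle$ then yields $S(\rho_b)=S(\rho_b^R)$. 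The same theorem applied to the purification $|\Psi\rangle$ gives $S(\rho)=S(\rho^R)$.

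Finally I would substitute these entropy identities into the definition \eqref{QC_mutual} of $I_{\rm QC}^{S:B}$:
\begin{equation}
I_{\rm QC}^{S:B} = S(\rho) - \sum_b p(b)S(\rho_b) = S(\rho^R) - \sum_b p(b)S(\rho_b^R) = \chi^{BR},
\end{equation}
which is the desired identity \eqref{QC_chi1}. There is no real obstacle here; the only point requiring a bit of care is verifying that $(\sqrt{E_b}\otimes I^R)|\Psi\rangle$ truly produces a pure state whose $S$-reduction reproduces the standard Lüders form $\sqrt{E_b}\rho\sqrt{E_b}/p(b)$, which is what makes Theorem~3.2 applicable term by term.
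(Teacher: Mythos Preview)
Your proposal is correct and follows essentially the same approach as the paper: both arguments observe that $(\sqrt{E_b}\otimes I^R)|\Psi\rangle$ is (after normalization) a pure state on $\bm H_S\otimes\bm H_R$ whose partial traces are $\rho_b$ and $\rho_b^R$, then invoke Theorem~3.2 twice to obtain $S(\rho_b)=S(\rho_b^R)$ and $S(\rho)=S(\rho^R)$. Your write-up is simply more explicit than the paper's terse version.
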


\begin{proof}
By noting that ${\rm tr}_R [ (\sqrt{E_b} \otimes I^R) | \Psi \rangle \langle \Psi | (\sqrt{E_b} \otimes I^R)   ] / p(b) = \rho_b$,  we  have $S(\rho_b) = S(\rho^R_b)$ and $S(\rho) = S(\rho^R)$, which imply Eq.~(\ref{QC_chi1}).
 $\Box$\end{proof}
 
Therefore, the QC-mutual information satisfies the following inequality.

\begin{corollary}
\begin{equation}
0 \leq I_{\rm QC}^{S:B} \leq H (\bm p),
\label{QC_inequality}
\end{equation}
where $H(\bm p) := - \sum_b p(b) \ln p(b)$.
\end{corollary}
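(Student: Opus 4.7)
The plan is to reduce the statement to two inequalities that have already been established for the Holevo $\chi$-quantity, by exploiting the identification proved in Theorem~4.6. That theorem rewrites the QC-mutual information as
\begin{equation}
I_{\rm QC}^{S:B} = \chi^{BR},
\end{equation}
where $\chi^{BR}$ is the $\chi$-quantity of the ensemble $\{ (p(b), \rho^R_b ) \}_{b \in B}$ on the purifying auxiliary system $R$. Once this reformulation is invoked, both inequalities in the corollary become standard statements about $\chi$-quantities applied to this particular ensemble.

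For the lower bound $I_{\rm QC}^{S:B} \geq 0$, I would simply apply Theorem~4.2 (the concavity of the von Neumann entropy, stated there as the non-negativity of the $\chi$-quantity) to the ensemble $\{ (p(b), \rho^R_b) \}$. For the upper bound $I_{\rm QC}^{S:B} \leq H(\bm p)$, I would apply Theorem~4.3, which bounds the $\chi$-quantity of any ensemble by the Shannon entropy of its weights. In both cases the weights are the measurement outcome probabilities $\{ p(b) \}_{b \in B}$, so $H(\bm p)$ on the right-hand side is exactly what one wants.

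The main obstacle is essentially nonexistent: the real content has already been packaged into Theorem~4.6 (which moves from the post-measurement states $\rho_b$ on $S$ to the conditional states $\rho^R_b$ on $R$ via purification, and checks that both the individual and averaged von Neumann entropies are preserved) and into the general bounds $0 \leq \chi \leq H(\bm p)$ of the previous subsection. The corollary is therefore best presented as a one-line deduction chaining these three ingredients together, with perhaps a remark that equality in the lower bound holds when a single outcome has probability one (per Theorem~4.2) and equality in the upper bound is attained when the $\rho^R_b$ have mutually orthogonal supports (per Theorem~4.3).
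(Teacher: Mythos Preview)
Your proposal is correct and is exactly the paper's own proof: the paper simply says ``Apply inequalities~(\ref{chi_inequality1}) and (\ref{chi_inequality2}) to $\chi^{BR}$,'' which is precisely your strategy of invoking Theorem~4.6 to identify $I_{\rm QC}^{S:B} = \chi^{BR}$ and then applying Theorems~4.2 and~4.3 to that $\chi$-quantity. Your remarks on the equality conditions are a minor addition but not required for the corollary as stated.
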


\begin{proof}
Apply inequalities~(\ref{chi_inequality1}) and (\ref{chi_inequality2}) to $\chi^{BR}$.
$\Box$\end{proof}

We next consider an information-theoretic meaning of the QC-mutual information.
We assume that classical information about $a \in A$ is encoded in $\rho \in Q(\bm H_S)$ as
\begin{equation}
\rho = \sum_a q(a) \rho_a,
\end{equation}
where $\rho_a$'s are density operators, $q(a)$'s satisfy $\sum_a q(a) = 1$, and $A$ is assumed to be a finite set.   We then perform a measurement with  POVM $\{ E_b \}_{b \in B}$.  The probability of obtaining $b$ under the condition of $a$ is given by
\begin{equation}
p(b|a) = {\rm tr}_S [E_b \rho_a].
\end{equation}
The joint distribution is  $p(a,b) := p(b|a) q(a)$.
We note that the unconditional probability of obtaining $b$ is given by $p(b) = \sum_a p(a,b) = {\rm tr}_S [E_a \rho]$, 
and the QC-mutual information is defined by Eq.~(\ref{QC_mutual}) with  $\rho_b := \sqrt{E_b} \rho \sqrt{E_b} / p(b)$.
We then have the following theorem as a ``dual'' of the Holevo bound (Theorem~4.5)~\cite{Buscemi}.

\begin{theorem}
In the above setup, the classical mutual information $I^{A:B}$ between $A$ and $B$ is bounded by the QC-mutual information:
\begin{equation}
I^{A:B} \leq I_{\rm QC}^{S:B}.
\label{Holevo_dual}
\end{equation}  
\end{theorem}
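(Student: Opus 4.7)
The plan is to reduce Theorem~4.7 to the Holevo bound (Theorem~4.5) via the identification $I_{\rm QC}^{S:B} = \chi^{BR}$ from Theorem~4.6. Since $\chi^{BR}$ is the Holevo $\chi$-quantity of the ensemble $\{p(b), \rho^R_b\}$ on a purifying system $R$, any POVM on $R$ whose statistics reproduce the conditional $p(a|b) = p(a,b)/p(b)$ will yield, via the Holevo bound, exactly the desired inequality $I^{A:B} \leq \chi^{BR} = I_{\rm QC}^{S:B}$.

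To make such a POVM manifest, I would choose the purification in Theorem~4.6 so that $R$ carries an explicit classical register storing the label $a$. Concretely, set $\bm H_R := \bm H_A \otimes \bm H_{R'}$ with orthonormal basis $\{|a\rangle_A\}_{a \in A}$ of $\bm H_A$, pick for each $a$ a purification $|\psi_a\rangle_{SR'} \in \bm H_S \otimes \bm H_{R'}$ of $\rho_a$ (enlarging $\bm H_{R'}$ if necessary so that every $\rho_a$ can be purified in the same space), and define
\begin{equation}
|\Psi\rangle := \sum_a \sqrt{q(a)}\,|a\rangle_A \otimes |\psi_a\rangle_{SR'}.
\end{equation}
A short calculation gives ${\rm tr}_R[|\Psi\rangle\langle\Psi|] = \sum_a q(a)\rho_a = \rho$, so this is a legitimate purification in the sense of Sec.~2.1. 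On $\bm H_R$ I then take the POVM $\{F_a\}_{a \in A}$ with $F_a := |a\rangle\langle a|_A \otimes I_{R'}$, which simply reads out the stored label.

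Applying the Holevo bound to the ensemble $\{p(b), \rho^R_b\}_{b \in B}$ on $R$ measured with $\{F_a\}_{a \in A}$ gives $I' \leq \chi^{BR} = I_{\rm QC}^{S:B}$, where $I'$ is the classical mutual information for the joint distribution $p'(a,b) := p(b)\,{\rm tr}[F_a \rho^R_b]$. It remains to verify $p'(a,b) = p(a,b)$. Using $(\sqrt{E_b}\otimes I^R)(I^S\otimes F_a)(\sqrt{E_b}\otimes I^R) = E_b \otimes F_a$ together with the readout identity $(I^S \otimes F_a)|\Psi\rangle = \sqrt{q(a)}\,|a\rangle_A \otimes |\psi_a\rangle_{SR'}$, I compute
\begin{equation}
p'(a,b) = {\rm tr}[(E_b \otimes F_a)|\Psi\rangle\langle\Psi|] = {\rm tr}_S\bigl[E_b\,{\rm tr}_R[(I^S \otimes F_a)|\Psi\rangle\langle\Psi|]\bigr] = q(a)\,{\rm tr}_S[E_b \rho_a] = p(a,b),
\end{equation}
so $I' = I^{A:B}$ and the theorem follows. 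The main obstacle is the structural choice of purification: Theorem~4.6 holds for any purification of $\rho$, but to bound $I^{A:B}$ one must engineer $R$ so that a sharp measurement on it reconstructs the classical label $a$; this is the essential content of the Hughston--Jozsa--Wootters correspondence between ensembles and purifications, here realized explicitly. Once this choice is made, the Holevo bound together with a short partial-trace calculation finishes the proof.
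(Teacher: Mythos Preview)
Your proof is correct and follows essentially the same route as the paper: both construct a purification of $\rho$ in which the ancilla $R$ contains a classical register for the label $a$, define a projective POVM on $R$ that reads out $a$, verify that measuring this POVM on $\rho^R_b$ reproduces the joint distribution $p(a,b)$, and then apply the Holevo bound to the ensemble $\{p(b),\rho^R_b\}$ together with Theorem~4.6. The only cosmetic difference is that the paper builds the purification explicitly from the spectral decompositions $\rho_a=\sum_i q_a(i)|\psi_{ai}\rangle\langle\psi_{ai}|$ with a single ancilla basis $\{|r_{ai}\rangle\}$ and projectors $P^R_a=\sum_i|r_{ai}\rangle\langle r_{ai}|$, whereas you factor $R=A\otimes R'$ and use abstract purifications $|\psi_a\rangle_{SR'}$; identifying $|r_{ai}\rangle\leftrightarrow|a\rangle_A\otimes|i\rangle_{R'}$ shows the two constructions coincide.
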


\begin{proof}
Let $\rho_a = \sum_i q_a(i) | \psi_{ai} \rangle \langle \psi_{ai}|$ be the spectrum decomposition of $\rho_a$, where $\{ | \psi_{ai} \rangle \}_{i}$ is an orthonormal basis of $\bm H_S$.
We introduce an auxiliary system described by $\bm H_R$ with orthonormal basis $\{ | r_{ai} \rangle \}_{ai}$, and define a purification of $\rho$: 
\begin{equation}
| \Psi \rangle := \sum_{ai} \sqrt{q(a)q_a(i)} | \psi_{ai} \rangle | r_{ai} \rangle \ \in \bm H_S \otimes \bm H_R.
\end{equation}
We have $\rho = {\rm tr}_R [ | \Psi \rangle \langle \Psi | ]$ and $\rho_a = {\rm tr}_R[I^S \otimes P^R_a | \Psi \rangle \langle \Psi |] / q(a)$, where $P^R_a := \sum_i | r_{ai} \rangle \langle r_{ai}|$ and $I^S$ is the identity on $\bm H_S$.
On the other hand, we define $\rho^R := {\rm tr}_S [| \Psi \rangle \langle \Psi |]$ and $\rho_b^R := {\rm tr}_S [(\sqrt{E_b} \otimes I^R)| \Psi \rangle \langle \Psi | (\sqrt{E_b} \otimes I^R) ] / p(b) =  {\rm tr}_S [(E_b \otimes I^R)| \Psi \rangle \langle \Psi | ] / p(b) $, where $I^R$ is the identity on $\bm H_R$.
By noting that
\begin{equation}
{\rm tr}_S \left[ E_b {\rm tr}_R \left[ I^S \otimes P_a^R | \Psi \rangle \langle \Psi | \right] \right] = {\rm tr}_R \left[ P^R_a {\rm tr}_S \left[ E_b \otimes I^R | \Psi \rangle \langle \Psi | \right] \right],
\end{equation}
we obtain 
\begin{equation}
{\rm tr}_S [E_b \rho_a] q(a) = {\rm tr}_R [P_a^R \rho_b^R] p(b),
\end{equation}
and therefore
\begin{equation}
p(a,b) = {\rm tr}_R [P^R_a \rho_b^R] p(b).
\end{equation}
Since $I_{\rm QC}^{S:B} = S(\rho^R) - \sum_b p(b) S(\rho^R_b)$ holds from Theorem~4.6, we obtain inequality~(\ref{Holevo_dual}) by applying the Holevo bound to $\{ \rho_b^R \}_{b \in B}$.
$\Box$\end{proof}

We note that, in the set up of the Holevo bound (Theorem~4.5), the encoding of the classical information is fixed and the measurement is arbitrary.  In contrast, in the setup of Theorem~4.7, the encoding is arbitrary and the measurement is fixed.  Inequality~(\ref{Holevo_dual}) determines the upper bound of the accessible classical information under the condition that the measurement is given by $\{ E_b \}_{b \in B}$ and the ensemble average of the encoded states is given by $\rho$.


\section{Second Law of Thermodynamics}

We now derive the second law of thermodynamics in three manners.  The first derivation is based on the positivity of the quantum relative entropy.  The second derivation is based on the quantum fluctuation theorem, which is shown to be equivalent to the first derivation.  The third one is based on the monotonicity of the quantum relative entropy.

\subsection{Thermodynamic Entropy and the von Neumann Entropy}

Before going to the main part of this section, we briefly discuss the relationship between the thermodynamic entropy and the von Neumann entropy.  We consider a thermodynamic system that is in thermal equilibrium.
Let $H$ be the Hamiltonian of the system.  The Helmholtz free energy is  defined as
\begin{equation}
F := -\beta^{-1} \ln {\rm tr} [e^{-\beta H}],
\end{equation}
where  $\beta > 0$ is the inverse temperature of the system.
The thermodynamic entropy $S_{\rm therm}$ satisfies
\begin{equation}
S_{\rm therm} = \beta (\langle E \rangle - F),
\label{entropy_free_energy}
\end{equation}
where $\langle E \rangle$ is the average energy of the system.  
The thermodynamic relation (\ref{entropy_free_energy}) has been well established from the 19th century as a phenomenological thermodynamic relation for macroscopic systems.
In terms of statistical mechanics, however, it is not so obvious to determine the microscopic expression of  the thermodynamic entropy $S_{\rm therm}$ as will be discussed in Sec.~7.

As a special case, if we select the canonical distribution  
\begin{equation}
\rho_{\rm can} := e^{\beta (F - H)}
\end{equation}
as a microscopic expression of the thermal equilibrium state, we can easily show that the thermodynamic entropy is given by the von Neumann entropy of the system:

\begin{theorem}
The von Neumann entropy of $\rho_{\rm can}$ satisfies
\begin{equation}
S(\rho_{\rm can}) = \beta (\langle E \rangle_{\rm can} - F),
\label{Neumann_free_energy}
\end{equation}
where $\langle E \rangle_{\rm can} := {\rm tr} \left[ H \rho_{\rm can} \right]$ is the average energy in the canonical distribution.
\end{theorem}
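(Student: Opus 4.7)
The plan is to prove this by direct substitution into the definition of the von Neumann entropy, using the explicit exponential form of $\rho_{\rm can}$.

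First, I would verify that $\rho_{\rm can} = e^{\beta(F-H)}$ is properly normalized. Since $F = -\beta^{-1}\ln{\rm tr}[e^{-\beta H}]$, we have $e^{\beta F} = 1/{\rm tr}[e^{-\beta H}]$, so ${\rm tr}[\rho_{\rm can}] = e^{\beta F}\,{\rm tr}[e^{-\beta H}] = 1$, confirming that $\rho_{\rm can} \in Q(\bm H)$.

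Next, since $\rho_{\rm can}$ is a function of the Hermitian operator $H$, we have $\ln\rho_{\rm can} = \beta(F-H)$ as an operator identity (the logarithm and exponential commute on the spectral decomposition of $H$). Plugging this into the definition,
\begin{equation}
S(\rho_{\rm can}) = -{\rm tr}[\rho_{\rm can}\ln\rho_{\rm can}] = -{\rm tr}[\rho_{\rm can}\,\beta(F-H)] = -\beta F\,{\rm tr}[\rho_{\rm can}] + \beta\,{\rm tr}[H\rho_{\rm can}],
\end{equation}
and using normalization ${\rm tr}[\rho_{\rm can}] = 1$ together with the definition $\langle E\rangle_{\rm can} = {\rm tr}[H\rho_{\rm can}]$ yields the desired identity $S(\rho_{\rm can}) = \beta(\langle E\rangle_{\rm can} - F)$.

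There is essentially no obstacle here: the statement is a straightforward consequence of the exponential form of the canonical state and the linearity of the trace. The only subtlety worth noting is the use of functional calculus to justify $\ln\rho_{\rm can} = \beta(F-H)$, but since $H$ is Hermitian on a finite-dimensional Hilbert space and $\rho_{\rm can}$ is manifestly positive definite, this is immediate from the spectral decomposition.
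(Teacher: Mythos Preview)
Your proof is correct and is precisely the direct computation one would expect. The paper does not actually write out a proof of this theorem; it merely introduces it with the phrase ``we can easily show,'' so your argument supplies exactly what the paper omits.
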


The statistical-mechanical relation~(\ref{Neumann_free_energy}) is consistent with the thermodynamic relation~(\ref{entropy_free_energy}) with the correspondence between $S(\rho_{\rm can})$ and $S_{\rm therm}$.  Therefore, in the main part of this article, we will identify a canonical distribution to a thermal equilibrium state, and the von Neumann entropy to the thermodynamic entropy; these identifications  have been widely used in statistical mechanics.  Some subtle points on the validities of these identifications will be discussed in Sec.~7.

We can also easily calculate the quantum relative entropy of any state $\rho \in Q(\bm H)$ to the canonical distribution $\rho_{\rm can}$ as
\begin{equation}
S(\rho \| \rho_{\rm can}) = \beta ( F - \langle E \rangle ) - S(\rho), 
\end{equation}
where $\langle E \rangle := {\rm tr}[H \rho]$ is the average energy of $\rho$.  From the positivity of the quantum relative entropy, we have
\begin{equation}
 S(\rho) \leq \beta ( F - \langle E \rangle ),
 \label{Neumann_free_energy2}
\end{equation}
where the equality is achieved only if $\rho = \rho_{\rm can}$.  Inequality~(\ref{Neumann_free_energy2}) implies that  the  von Neumann entropy takes the maximum in the canonical distribution among the states that have the same amount of the energy.


\subsection{From the Positivity of the Quantum Relative Entropy}

We now derive the second law of thermodynamics on the basis of the positivity of the quantum relative entropy and the unitarity of the time evolution of the system~\cite{Tasaki1,Esposito4}.  We first prove a very general but almost trivial equality and inequality, and next apply them to thermodynamic situations.  Therefore, the nontrivial part of this subsection is in the applications to each examples.

We consider a unitary evolution of the system from state $\rho_{\rm i} \in Q(\bm H)$ to $\rho_{\rm f} \in Q(\bm H)$ such that $\rho_{\rm f} = U \rho_{\rm i} U^\dagger$.  We also introduce a reference state $\rho_0 \in Q(\bm H)$, which is different from $\rho_{\rm i}$ or $\rho_{\rm f}$ in general.  Since $S(\rho_{\rm i}) = S(\rho_{\rm f})$ holds due to the unitary evolution,  we have the following theorem:

\begin{theorem}
In the above setup,
\begin{equation}
- {\rm tr} [\rho_{\rm f} \ln \rho_0] - S(\rho_{\rm i}) = S(\rho_{\rm f} \| \rho_0)
\label{second_positive0}
\end{equation}
holds, which leads to 
\begin{equation}
- {\rm tr} [\rho_{\rm f} \ln \rho_0] - S(\rho_{\rm i}) \geq 0.
\label{second_positive}
\end{equation}
\end{theorem}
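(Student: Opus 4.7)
The plan is a direct computation combining two facts already established: the invariance of the von Neumann entropy under unitary evolution, and the definition plus positivity of the quantum relative entropy.

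First, I would note that because $\rho_{\rm f} = U \rho_{\rm i} U^\dagger$, the two density operators have identical eigenvalues, so
\begin{equation}
S(\rho_{\rm f}) = S(\rho_{\rm i}).
\end{equation}
This is immediate from $S(\rho) = -{\rm tr}[\rho \ln \rho]$ together with the cyclicity of the trace and $U^\dagger U = I$.

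Next, I would unpack the definition of the quantum relative entropy in Definition~3.2:
\begin{equation}
S(\rho_{\rm f} \| \rho_0) = {\rm tr}[\rho_{\rm f} \ln \rho_{\rm f}] - {\rm tr}[\rho_{\rm f} \ln \rho_0] = -S(\rho_{\rm f}) - {\rm tr}[\rho_{\rm f} \ln \rho_0].
\end{equation}
Substituting $S(\rho_{\rm f}) = S(\rho_{\rm i})$ from the first step gives exactly the claimed identity (\ref{second_positive0}).

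The inequality (\ref{second_positive}) then follows at once by invoking the positivity of the quantum relative entropy (Theorem~3.3, the Klein inequality), which ensures $S(\rho_{\rm f} \| \rho_0) \geq 0$. Since each step reduces to a one-line calculation or a direct citation of an earlier result, there is no real obstacle here; the ``content'' of the theorem lies entirely in the upcoming applications where the reference state $\rho_0$ is chosen to be a product of canonical distributions so that $-{\rm tr}[\rho_{\rm f} \ln \rho_0] - S(\rho_{\rm i})$ becomes an entropy production.
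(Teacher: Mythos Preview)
Your proof is correct and matches the paper's own argument exactly: the paper also notes that $S(\rho_{\rm i}) = S(\rho_{\rm f})$ by unitarity, reads off the identity from the definition of the relative entropy, and then invokes the Klein inequality for the inequality. There is nothing to add.
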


We note that we use the positivity of the relative entropy $S(\rho_{\rm f} \| \rho_0)$ to derive inequality~(\ref{second_positive}).  While Eq.~(\ref{second_positive0}) and inequality~(\ref{second_positive}) are obvious, they play key roles to derive the second law of thermodynamics as shown below.  We discuss two typical situations in the following.

\begin{example}
We assume that the system is driven by a time-dependent Hamiltonian $H(t)$ from $t=0$ to $t = \tau$, which gives the unitary operator as
\begin{equation}
U = {\rm T}\exp \left( - {\rm i} \int_0^\tau H(t) dt \right).
\label{unitary}
\end{equation}
The free energy corresponding to the Hamiltonian at time $t$ is given by
\begin{equation}
F(t) := -\beta^{-1} \ln {\rm tr} \left[ e^{-\beta H(t)}  \right],
\end{equation}
where $\beta > 0$.
We assume that the system is initially in the canonical distribution at inverse temperature $\beta$ such that 
\begin{equation}
\rho_{\rm i} := e^{\beta ( F(0) - H (0) )},
\label{positive1}
\end{equation}
and define the reference state as
\begin{equation}
\rho_0 :=  e^{\beta ( F(\tau) - H (\tau) )}.
\label{positive2}
\end{equation}
We stress that $\rho_{\rm f}  \neq \rho_0$ in general.  In this setup, we have
\begin{equation}
{\rm tr} \left[ \rho_{\rm f} \ln \rho_0 \right] = \beta \left( F(\tau) - {\rm tr} \left[ H \rho_{\rm f} \right] \right).
\end{equation}
Therefore, Eq.~(\ref{second_positive0}) reduces to
\begin{equation}
\beta (\langle W \rangle - \Delta F ) = S(\rho_{\rm f} \| \rho_0),
\label{second_positive2}
\end{equation}
where
\begin{equation}
\langle W \rangle :=  {\rm tr} [ \rho_{\rm f} H(\tau)] - {\rm tr} [ \rho_{\rm i} H(0)]
\label{work1}
\end{equation}
is the energy difference of the system, and
\begin{equation}
\Delta F := F(\tau) - F(0)
\end{equation}
is the free-energy difference corresponding to the initial and final Hamiltonians. We note that the energy difference $\langle W \rangle$ is regarded as the work performed on the system in this setup, because any heat bath is not attached to the system.  
Corresponding to inequality (\ref{second_positive}), we obtain the second law of thermodynamics
\begin{equation}
\langle W \rangle \geq \Delta F.
\label{second_positive3}
\end{equation}

\end{example}

\begin{example}
We assume that the total system consists of the main system $S$ and heat baths $B_k$ ($k=1,2, \cdots$).   The Hilbert spaces corresponding to $S$ and $B_k$ are respectively  given by $\bm H_S$ and $\bm H_{B_k}$ so that $\bm H = \bm H_S \otimes_k \bm H_{B_k}$. 
Let $H^S \in L(\bm H_S)$ be the system's Hamiltonian, $H^{B_k} \in L(\bm H_{B_k})$ be the $k$th Bath's Hamiltonian, and $H^{SB_k} \in L(\bm H_S \otimes \bm H_{B_k})$ be the interaction Hamiltonian between $S$ and $B_k$.  We assume that $H^S$ and $H^{SB_k}$ are time-dependent, while $H^{B_k}$ is time-independent. 
The total system then obeys a unitary evolution from $t=0$ to $t= \tau$ corresponding to the total Hamiltonian
\begin{equation}
H(t) = H^S(t)  + \sum_k ( H^{SB_k}(t)  +   H^{B_k} ),
\label{total_Hamiltonian}
\end{equation}
where we omitted to write the tensor products with the  identities on $\bm H_S$ and $\bm H_{B_k}$'s.  
Let   $H^{\rm int}(t) := \sum_k H^{SB_k}(t)$. For simplicity, we assume that the interaction Hamiltonian satisfies  $H^{\rm int} (0) = H^{\rm int}(\tau) = 0$.  

Let $\rho_{\rm can}^{B_k}$ be the canonical distribution corresponding to $H^{B_k}$ such that
\begin{equation}
\rho_{\rm can}^{B_k} := e^{\beta_k (F_k -H^{B_k} ) },
\end{equation}
where $\beta_k > 0$ is the inverse temperature of $B_k$, and  
\begin{equation}
F_k := - \beta_k^{-1} \ln {\rm tr} \left[ e^{-\beta H^{B_k}} \right].
\end{equation}
We assume that the initial state of the total system is given by a product state
\begin{equation}
\rho_{\rm i} := \rho_{\rm i}^S \otimes _k \rho_{\rm can}^{B_k},
\label{initial1}
\end{equation}
where $\rho_{\rm i}^S$ is an arbitrary initial state of the system.  We note that Eq.~(\ref{initial1}) is consistent with assumption $H^{\rm int} (0) = 0$.  The final state is given by $\rho_{\rm f} = U \rho_{\rm i} U^\dagger$, where $U$ is given by Eq.~(\ref{unitary}) with the total Hamiltonian (\ref{total_Hamiltonian}).  The final state of $S$ is given by
\begin{equation}
\rho_{\rm f}^S := {\rm tr}_{B} [\rho_{\rm f}],
\end{equation}
where ${\rm tr}_{\rm B}$ means the trace over all $\bm H_{B_k}$'s.  We then define the reference state as
\begin{equation}
\rho_0 := \rho_{\rm f}^S \otimes_k \rho_{\rm can}^{B_k}.
\label{ref1}
\end{equation}
In this setup, we can show that Eq.~(\ref{second_positive0}) is equivalent to
\begin{equation}
\Delta S - \sum_k \beta_k \langle Q_k \rangle = S(\rho_{\rm f} \| \rho_0),
\label{second_positive4}
\end{equation}
where 
\begin{equation}
\Delta S := S(\rho_{\rm f}^S) - S(\rho_{\rm i}^S)
\end{equation}
is the difference in the von Neumann entropy of the system, and
\begin{equation}
\langle Q_k \rangle := {\rm tr}\left[ H^{B_k} \rho_{\rm i} \right] - {\rm tr} \left[ H^{B_k}\rho_{\rm f} \right]
\label{heat1}
\end{equation}
is  regarded as the heat that is absorbed by $S$ from bath $B_k$ due to assumption  $H^{\rm int} (0) = H^{\rm int}(\tau) = 0$.  Corresponding to inequality~(\ref{second_positive}), we obtain the Clausius inequality
\begin{equation}
\Delta S - \sum_k \beta_k \langle Q_k \rangle \geq 0.
\label{second_positive5}
\end{equation}
We note that, in the conventional thermodynamics, the initial and final states of the system are assumed to be in thermal equilibrium.  On the other hand, we assumed nothing on $\rho_{\rm i}^S$ and $\rho_{\rm f}^S$ above.  Therefore, inequality~(\ref{second_positive5}) is regarded as a generalization of the conventional Clausius inequality to situations in which the initial and final states of the system are out of equilibrium.

In the following, we additionally assume that the initial state of the system is given by the canonical distribution at inverse temperature $\beta$ such that
\begin{equation}
\rho_{\rm i}^S = e^{\beta ( F^S(0) - H^S (0) )},
\end{equation}
where 
\begin{equation}
F^S(t) := -\beta^{-1} {\rm tr} \left[ e^{-\beta H^S(t)}  \right].
\end{equation}
This assumption is consistent with assumption $H^{\rm int} (0)  = 0$.   By using notation $\rho_0^S := e^{\beta ( F^S(\tau) - H^S (\tau) )}$, we obtain
\begin{equation}
S(\rho_{\rm f}^S) \leq - {\rm tr} \left[ \rho_{\rm f}^S \ln \rho_0^S \right] = \beta \left( F(\tau) - {\rm tr} \left[ H^S \rho_{\rm f}^S \right] \right),
\end{equation}
where we used the positivity of $S(\rho_{\rm f}^S \| \rho_0^S)$.  Therefore, we obtain
\begin{equation}
\langle \Delta E^S \rangle - \Delta F^S   \geq \Delta S,
\label{free_energy_entropy}
\end{equation}
where
\begin{equation}
\langle \Delta E^S \rangle := {\rm tr} [ \rho_{\rm f}^S H^S(\tau)] - {\rm tr} [ \rho_{\rm i}^S H^S(0)]
\end{equation}
is the energy difference of the system.
By combining inequalities~(\ref{second_positive5}) and (\ref{free_energy_entropy}), we obtain
\begin{equation}
\beta \left( \langle \Delta E^S \rangle - \Delta F^S \right) - \sum_k \beta_k \langle Q_k \rangle \geq 0.
\label{second_positive6}
\end{equation}
For a special  case in which there is a single heat bath at inverse temperature $\beta$, inequality~(\ref{second_positive}) reduces to
\begin{equation}
\langle W \rangle \geq \Delta F^S,
\label{second_positive7}
\end{equation}
where
\begin{equation}
\langle W \rangle := \langle E^S \rangle - \langle Q \rangle
\label{first_law}
\end{equation}
is the work performed on the system.

\end{example}

The argument in this subsection is based on the positivity of the quantum relative entropy and the unitary evolution of the total system.  We can replace the unitary evolution by  a unital CPTP map $\mathcal E$ satisfying $\mathcal E (I) = I$.  In this case, the von Neumann entropy of the total system is non-decreasing as $S(\rho_{\rm i}) \leq S(\rho_{\rm f})$, which has been shown in Corollary~3.2.  Thus, Eq.~(\ref{second_positive0}) is replaced by an inequality
\begin{equation}
-{\rm tr}[\rho_{\rm f} \ln \rho_0] - S(\rho_{\rm i}) \geq S(\rho_{\rm f} \| \rho_0),
\end{equation}
and therefore, inequality~(\ref{second_positive}) remains unchanged.   As a consequence, inequalities~(\ref{second_positive3}) and (\ref{second_positive5}) still hold for such a CPTP map $\mathcal E$ acting on the total system.


\subsection{From the Quantum Fluctuation Theorem}

The quantum fluctuation theorem gives  information about fluctuations of the entropy production~\cite{Tasaki1,Yukawa,Mukamel,Jarzynski3,Roeck,Monnai,Esposito1,Talkner1,Talkner2,Esposito2,Saito,Gaspard,Huber,Utsumi1,Utsumi2,Esposito3,Andireux,Hanggi1,Hanggi2,Nakamura,Ohzeki,Campisi,Lutz,Horowitz1,Horowitz4}. 
We first introduce the stochastic entropy production and formulate the quantum fluctuation theorem in a very general setup.  
Let $\rho_{\rm i}, \rho_0 \in Q(\bm H)$ be density operators.  They have  spectrum decompositions  
 $\rho_{\rm i} = \sum_a p_{\rm i}(a) | \psi_a \rangle \langle \psi_a |$ and $\rho_0 = \sum_b p_0 (b) | \phi_b \rangle \langle \phi_b |$, where $\{ | \psi_a \rangle \}$ and $\{ | \phi_b \rangle \}$ are orthonormal basis of $\bm H$.  
To formulate the quantum fluctuation theorem,  the key concepts are the forward and backward processes that are described as follows.

\

\textit{Forward process.}
In the forward process, the initial state is given by $\rho_{\rm i}$.  We first perform the projection measurement on $\rho_{\rm i}$ with basis $\{ | \psi_a \rangle \}$, and obtain outcome $a$ with probability $p_{\rm i}(a)$. By this measurement, the ensemble average of the post-measurement states equals $\rho_{\rm i}$.
We next perform  a unitary operation with a time-dependent Hamiltonian $H(t)$ from $t=0$ to $\tau$.  The unitary operator is given by Eq.~(\ref{unitary}).  The density operator of the system then becomes $\rho_{\rm f} = U \rho_{\rm i} U^\dagger$. 
We next perform the projection measurement on $\rho_{\rm f}$ with basis $\{ | \phi_b \rangle \}$, and obtain outcome $b$ with probability $p_{\rm f} (b) := \langle \phi_b | \rho_{\rm f} | \phi_b \rangle$.  The joint probability of $(a,b)$ is given by
\begin{equation}
p(a,b) := p(b \leftarrow a) p_{\rm i}(a),
\end{equation}
where
\begin{equation}
p(b \leftarrow a) := | \langle \phi_b | U | \psi_a \rangle |^2
 \end{equation}
 is the transition probability.  We note that $p_{\rm f} (b) = \sum_a p(a,b)$.

\

\textit{Backward process.}
To formulate the backward process,  we need to introduce the time-reversal operator  $\Theta$ acting on $\bm H$, which is an anti-unitary (i.e., inner-product preserving and anti-linear) operator satisfying $\Theta^2 = \Theta$ and $\Theta^\dagger = \Theta$.  Here, an anti-linear operator satisfies that, for any $| \varphi_1 \rangle, | \varphi_2 \rangle \in \bm H$ and $\alpha_1, \alpha_2 \in \mathbb C$,
\begin{equation}
\Theta (\alpha_1 | \varphi_1 \rangle + \alpha_2 | \varphi_2 \rangle) = \alpha_1^\ast \Theta | \varphi_1 \rangle + \alpha_2^\ast  \Theta | \varphi_2 \rangle,
\end{equation}
where $\alpha_i^\ast$ means the complex conjugate of $\alpha_i$.  
Let $| \tilde \psi_a \rangle := \Theta | \psi_a \rangle$, $| \tilde \phi_b \rangle := \Theta | \phi_b \rangle$, and $\tilde \rho_0 := \Theta \rho_0 \Theta =  \sum_b p_0 (b) | \tilde \phi_b \rangle \langle \tilde \phi_b | $.  We note that $\{ | \tilde \psi_a \rangle \}$ and$\{ | \tilde \phi_b \rangle \}$ are orthonormal bases of $\bm H$.

The protocol for the backward process is as follows.
The initial state of the backward process is given by $\tilde \rho_0$.
 We first perform the projection measurement on $\tilde \rho_0$ with basis $\{ | \tilde \phi_b \rangle \}$, and obtain outcome $b$ with probability $p_0(b)$.  By this measurement, the ensemble average of the post-measurement states equals $\tilde \rho_0$.  
We introduce the time-reversal of the Hamiltonian as
\begin{equation}
\tilde H (t) := \Theta H (t) \Theta.
\label{Hamiltonian_symmetry}
\end{equation}
For example, if the Hamiltonian depends on magnetic field $B$ as $H(t; B)$,  its time-reversal is given by $\tilde H(t; B) =  H(t; -B)$.
We next perform a unitary operation from $t=0$ to $t= \tau$ with the time-reversed control protocol of the time-reversed Hamiltonian. 
The corresponding unitary operator $\tilde U$  is given by
\begin{equation} 
\tilde U := {\rm T}\exp \left( -{\rm i} \int_0^\tau \tilde H(\tau - t) dt \right).
\end{equation}
We next perform the projection measurement on $\tilde U \tilde \rho_0 \tilde U^\dagger$ with basis $\{ | \tilde \psi_a \rangle \}$, and obtain outcome $a$ with probability $\tilde p_{\rm f} (a) := \langle \tilde \psi_a | \tilde U \tilde \rho_0 \tilde U^\dagger | \tilde \psi_a \rangle$.  The joint probability of $(b,a)$ in the backward process, denoted by $\tilde p (b,a)$, is then given by
\begin{equation}
\tilde p (b,a) = \tilde p (a \leftarrow b) p_0 (b),
\end{equation}
where
\begin{equation}
\tilde p (a \leftarrow b) = | \langle \tilde \psi_a | \tilde U | \tilde \phi_b \rangle |^2
\end{equation}
is the backward transition probability.  We note that $\tilde p_{\rm f}(a) := \sum_b \tilde  p(b,a)$.

\

We define the following quantity:
\begin{equation}
\sigma (a,b) := \ln \frac{p(a,b)}{\tilde p(b,a)},
\label{fluctuation_theorem}
\end{equation}
which is referred to as the stochastic entropy production in the forward process.  The average of the entropy production is given by
\begin{equation}
\langle \sigma \rangle = \sum_{a,b} p(a,b) \ln \frac{p(a,b)}{\tilde p(b,a)},
\label{entropy_production}
\end{equation}
which is positive because of the positivity of the classical relative entropy:
\begin{equation}
\langle \sigma \rangle \geq 0.
\label{second_fluctuation1}
\end{equation}

We discuss the relationship between inequality~(\ref{second_fluctuation1}) and inequality~(\ref{second_positive}) in Theorem~5.2 in Sec.~5.2.  The following theorem plays a key role.

\begin{theorem}
The classical relative entropy~(\ref{entropy_production}) can be written as 
\begin{equation}
\langle \sigma \rangle = S(\rho_{\rm f} \| \rho_0),
\label{entropy_quantum_relative}
\end{equation}
where $S(\rho_{\rm f} \| \rho_0)$ is the quantum relative entropy. 
\end{theorem}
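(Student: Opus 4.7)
The plan is to reduce the statement to an elementary manipulation of the joint distribution $p(a,b)$, once a ``microscopic reversibility'' identity between the forward and backward transition probabilities is in hand. Concretely, I would first prove that
\begin{equation}
\tilde p(a \leftarrow b) = p(b \leftarrow a) \quad \text{for all } a,b,
\end{equation}
using the anti-unitarity of $\Theta$ and the way $\tilde H$ and $\tilde U$ are constructed from $H$ and $U$. The idea is to show that $\tilde U = \Theta\, U^\dagger\, \Theta$: differentiating both sides in time, using $\tilde H(t) = \Theta H(t)\Theta$ and the fact that $\Theta(-{\rm i})\Theta = {\rm i}$ (anti-linearity), one verifies that $\Theta U^\dagger\Theta$ satisfies the defining Schr\"odinger equation of the time-reversed, time-reversed-protocol dynamics with the correct initial condition, so the two agree by uniqueness. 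Then
\begin{equation}
\langle \tilde\psi_a | \tilde U | \tilde\phi_b\rangle = \langle \Theta\psi_a | \Theta U^\dagger\Theta | \Theta\phi_b\rangle = \overline{\langle \psi_a | U^\dagger | \phi_b\rangle} = \langle\phi_b|U|\psi_a\rangle,
\end{equation}
where in the middle step I use that $\Theta$ is inner-product preserving up to complex conjugation (and $\Theta^2=I$, which is surely what the paper intends). Taking modulus squared gives the microscopic reversibility identity.

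With that identity, $\tilde p(b,a) = p(b\leftarrow a)\, p_0(b)$, so the ratio inside the logarithm simplifies dramatically:
\begin{equation}
\ln\frac{p(a,b)}{\tilde p(b,a)} = \ln\frac{p(b\leftarrow a)\,p_{\rm i}(a)}{p(b\leftarrow a)\,p_0(b)} = \ln p_{\rm i}(a) - \ln p_0(b).
\end{equation}
Averaging against $p(a,b)$ and computing the two marginals $\sum_b p(a,b) = p_{\rm i}(a)$ and $\sum_a p(a,b) = p_{\rm f}(b) = \langle\phi_b|\rho_{\rm f}|\phi_b\rangle$ gives
\begin{equation}
\langle\sigma\rangle = \sum_a p_{\rm i}(a)\ln p_{\rm i}(a) - \sum_b p_{\rm f}(b)\ln p_0(b).
\end{equation}

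For the final identification, the first sum is $-S(\rho_{\rm i})$ (Remark~3.1), which by unitary invariance of the von Neumann entropy equals $-S(\rho_{\rm f})$. The second sum, using the spectral decomposition $\ln\rho_0 = \sum_b \ln p_0(b)\,|\phi_b\rangle\langle\phi_b|$, is exactly ${\rm tr}[\rho_{\rm f}\ln\rho_0]$. Therefore
\begin{equation}
\langle\sigma\rangle = -S(\rho_{\rm f}) - {\rm tr}[\rho_{\rm f}\ln\rho_0] = {\rm tr}[\rho_{\rm f}\ln\rho_{\rm f}] - {\rm tr}[\rho_{\rm f}\ln\rho_0] = S(\rho_{\rm f}\|\rho_0),
\end{equation}
which is the desired equality.

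The main obstacle is the first step: carefully handling the anti-linear operator $\Theta$ to establish the identity $|\langle\tilde\psi_a|\tilde U|\tilde\phi_b\rangle|^2 = |\langle\phi_b|U|\psi_a\rangle|^2$. The rest is essentially bookkeeping with marginals and the spectral decomposition of $\rho_0$, and relies only on the unitarity of the forward evolution plus the fact that the initial projection measurement is in the eigenbasis of $\rho_{\rm i}$ and the final one in the eigenbasis of $\rho_0$, so the classical distributions faithfully reflect the relevant quantum entropies.
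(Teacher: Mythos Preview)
Your proposal is correct and follows essentially the same route as the paper's proof: establish the microscopic reversibility $\tilde p(a\leftarrow b)=p(b\leftarrow a)$ via the identity $\Theta\tilde U\Theta=U^\dagger$ (equivalently $\tilde U=\Theta U^\dagger\Theta$), reduce $\sigma(a,b)$ to $\ln p_{\rm i}(a)-\ln p_0(b)$, take marginals, and identify the two sums with $-S(\rho_{\rm f})$ and ${\rm tr}[\rho_{\rm f}\ln\rho_0]$ respectively. The only cosmetic difference is that you justify $\tilde U=\Theta U^\dagger\Theta$ by an ODE-uniqueness argument, whereas the paper simply notes $\Theta({\rm i}\tilde H(t))\Theta=-{\rm i}H(t)$ and leaves the rest implicit; these are the same observation.
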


\begin{proof}
We first note that $\Theta \tilde U \Theta = U^\dagger$ holds, because $\Theta ({\rm i}\tilde H (t)) \Theta = - {\rm i} H(t)$ holds.  We then have the key observation that the unitary evolution has a time-reversal symmetry: 
\begin{equation}
\begin{split}
&\tilde p (a \leftarrow b) = | \langle \tilde \psi_a | \tilde U | \tilde \phi_b\rangle |^2 = | \langle \psi_a | \Theta \tilde U  \Theta | \phi_b \rangle |^2 \\
&=  | \langle \psi_a |  U^\dagger | \phi_b \rangle |^2 = | \langle \phi_b |  U | \psi_a \rangle |^2 = p(b \leftarrow a).
\end{split}
\end{equation}
Therefore, we obtain
\begin{equation}
\sigma (a,b) = \ln \frac{p_{\rm i} (a)}{p_0(b)}, 
\end{equation}
which leads to
\begin{equation}
\langle \sigma \rangle = \sum_{a,b} p(a,b) \ln \frac{p_{\rm i} (a)}{p_0(b)} = \sum_a p_{\rm i}(a) \ln p_{\rm i} (a) - \sum_b p_{\rm f}(b) \ln p_0(b).
\end{equation}
Obviously,
\begin{equation}
\sum_a p_{\rm i}(a) \ln p_{\rm i} (a) = -S(\rho_{\rm i}) = -S(\rho_{\rm f}).
\label{f1}
\end{equation}
We also obtain
\begin{equation}
\begin{split}
&\sum_b p_{\rm f}(b) \ln p_0(b) = \sum_b \langle \phi_b | \rho_{\rm f} | \phi_b \rangle \ln p_0(b) \\
&= \sum_b \langle \phi_b | \rho_{\rm f} \ln \rho_0 | \phi_b \rangle = {\rm tr} [\rho_{\rm f} \ln \rho_0]. 
\end{split}
\label{f2}
\end{equation}
By combining Eqs.~(\ref{f1}) and (\ref{f2}), we obtain Eq.~(\ref{entropy_quantum_relative}).
$\Box$\end{proof}

From Eqs.~(\ref{second_positive0}) and (\ref{fluctuation_theorem}), we obtain
\begin{equation} 
\langle \sigma \rangle = - {\rm tr} [\rho_{\rm f} \ln \rho_0] - S(\rho_{\rm i}). 
\label{sigma_entropy}
\end{equation}
Therefore,  inequality~(\ref{second_fluctuation1}) is equivalent to inequality~(\ref{second_positive}).  Since inequality~(\ref{second_positive}) leads to inequalities (\ref{second_positive3}) and (\ref{second_positive5})  as special cases, these inequalities can also be regarded as special cases of inequality~(\ref{second_fluctuation1}), which we will discuss in detail later.

Equality~(\ref{fluctuation_theorem})  is  regarded as a general expression of the quantum fluctuation theorem.    The reason why Eq.~(\ref{fluctuation_theorem}) can be called a ``theorem'' rather than just a definition lies in the fact that $\sigma$  equals to some important thermodynamic quantities for special cases, as we will show in Examples~5.3 and 5.4. 
Strictly speaking, Eq.~(\ref{fluctuation_theorem}) should be called a theorem only for such cases with the thermodynamic  expressions of $\sigma$. 
In fact, we have already shown Eq.~(\ref{sigma_entropy}), which implies that the average of $\sigma$ reduces to the lhs's of (\ref{second_positive2}) and (\ref{second_positive5}).

Before going to such special cases, we discuss the some properties of $\sigma$ on the basis of Eq.~(\ref{fluctuation_theorem}).
To do so, we introduce the entropy production in the backward process as
\begin{equation}
\tilde \sigma (b,a) := \ln \frac{\tilde p(b,a)}{p(a,b)}.
\label{entropy_production_backward}
\end{equation}
In the backward process, $\Theta \rho_0 \Theta$ and $\Theta \rho_{\rm i} \Theta$ respectively play the roles of $\rho_{\rm i}$ and $\rho_0$ in the forward process.  Therefore,  definition (\ref{entropy_production_backward}) in the backward process is consistent with the definition (\ref{entropy_production}) in the forward process.  We note that
\begin{equation}
\sigma (a,b) = - \tilde \sigma (b,a).
\end{equation}
We  introduce the probability distribution of $\sigma$ as
\begin{equation}
p(\sigma = \Sigma) := \sum_{a,b} p(a,b) \delta ( \Sigma, \sigma (a,b)),
\end{equation}
where $\delta (\cdot, \cdot)$ is the Kronecker delta, and that of $\tilde \sigma$ as
\begin{equation}
\tilde p (\tilde \sigma = \Sigma) := \sum_{b,a} \tilde p(b,a) \delta ( \Sigma, \tilde \sigma (b,a)).
\end{equation}
We can show that 
\begin{equation}
\frac{\tilde p(\tilde \sigma = -\Sigma)}{p(\sigma = \Sigma)} = e^{-\Sigma},
\label{fluctuation_theorem2}
\end{equation}
because
\begin{equation}
\begin{split}
\tilde p (\tilde \sigma = - \Sigma)& = \sum_{a,b} \tilde p(b,a) \delta ( - \Sigma, \tilde \sigma (b,a))\\
&=  \sum_{a,b} p(a,b) e^{\tilde \sigma (b,a)} \delta ( - \Sigma, \tilde \sigma (b,a)) \\
&= e^{-\Sigma} \sum_{a,b} p(a,b)  \delta ( - \Sigma, \tilde \sigma (b,a)) \\
&= e^{-\Sigma} \sum_{a,b} p(a,b)  \delta ( \Sigma, \sigma (a,b)) \\
&= e^{-\Sigma}p(\sigma = \Sigma).
\end{split}
\end{equation}
We  also refer to Eq.~(\ref{fluctuation_theorem2}) as the quantum fluctuation theorem.
We can show that
\begin{equation}
\langle e^{-\sigma} \rangle = 1,
\label{fluctuation_theorem3}
\end{equation}
because
\begin{equation}
\langle e^{-\sigma} \rangle := \sum_\Sigma p(\sigma = \Sigma) e^{-\Sigma} = \sum_\Sigma \tilde p (\tilde \sigma = - \Sigma) = 1.
\end{equation}
Equality~(\ref{fluctuation_theorem3}) is called the integral fluctuation theorem or the quantum Jarzynski equality.
By using the Jensen inequality for the exponential function (i.e., $e^{-\langle \sigma \rangle} \leq \langle e^{-\sigma} \rangle$), we reproduce inequality~(\ref{second_fluctuation1}) from Eq.~(\ref{fluctuation_theorem3}).
We note that the quantum fluctuation theorems~(\ref{fluctuation_theorem}), (\ref{fluctuation_theorem2}), and (\ref{fluctuation_theorem3}) were obtained by Kurchan~\cite{Kurchan} and Tasaki~\cite{Tasaki1}.

\begin{example}
We consider the case of Example~5.1 in which $\rho_{\rm i}$ and $\rho_0$ are given by Eqs.~(\ref{positive1}) and (\ref{positive2}), respectively.  
Let $H(0) = \sum_a E_a(0) | \psi_a \rangle \langle \psi_a |$ and $H(\tau) = \sum_b E_b(\tau) | \phi_b \rangle \langle \phi_b |$ be the spectrum decompositions of the initial and final Hamiltonians, which leads to $p_{\rm i}(a) = e^{\beta( F(0) - E_a(0))}$ and $p_0 (b) = e^{\beta ( F(\tau) - E_b(\tau) )}$.
The stochastic entropy production~(\ref{entropy_production}) is then given by
\begin{equation}
\sigma (a,b) = \ln \frac{p_{\rm i} (a)}{p_0(b)} = \beta (\Delta F - W(a,b)), 
\end{equation}
where $\Delta F := F(\tau) - F(0)$ and $W(a,b) := E_b (\tau ) - E_a (0)$.  We then obtain
\begin{equation}
\langle W \rangle := \sum_{a,b}p(a,b) W(a,b) = {\rm tr} [H(\tau)\rho_{\rm f}] - {\rm tr}[H(0) \rho_{\rm i}],
\end{equation}
which is consistent with Eq.~(\ref{work1}).
In this case, the integral fluctuation theorem (\ref{fluctuation_theorem3}) reduces to
\begin{equation} 
\langle e^{-\beta W} \rangle = e^{-\Delta F},
\end{equation}
which is called the quantum Jarzynski equality.
Inequality~(\ref{second_fluctuation1}) reduces to  (\ref{second_positive3})  in this situation.
\end{example}

\begin{example}
We next consider the case of Example~5.2 in which $\rho_{\rm i}$ and $\rho_0$ are given by Eqs.~(\ref{initial1}) and (\ref{ref1}), respectively. Let  $\rho_{\rm i}^S := \sum_{a'} p_{\rm i}^S (a') | \psi_{a'}^S \rangle \langle \psi_{a'}^S |$ and $\rho_{\rm f} := \sum_{b'} p_{\rm f}^S (b') | \phi_{b'}^S \rangle \langle \phi_{b'}^S |$ be the spectrum decompositions of the initial and final states of $S$.  The spectrum decompositions of the Hamiltonians of the heat baths are given by  $H^{B_k} = \sum_{a_k} E_{a_k}^{B_k} | \varphi_{a_k} \rangle \langle \varphi_{a_k} |$.  We then  have
\begin{equation}
\begin{split}
p_{\rm i}(a) &= p_{\rm i}^S (a') \prod_k \exp ( \beta_k (F_k - E_{a_k}^{B_k}) ), \\
p_{0}(b) &= p_{\rm f}^S (b') \prod_k \exp ( \beta_k (F_k - E_{b_k}^{B_k}) ),
\end{split}
\end{equation}
where $a = (a', \{ a_k \})$ and $b = (b', \{ b_k \})$.  We note that $| \psi_a \rangle = | \psi_{a'}^S \rangle \otimes_k  | \varphi_{a_k} \rangle$ and $| \phi_b \rangle = | \psi_{b'}^S \rangle \otimes_k  | \varphi_{b_k} \rangle$.
Therefore, the stochastic entropy production is given by 
\begin{equation}
\sigma (a,b) = s_{\rm f}(b') - s_{\rm i}(a') - \sum_k \beta_k Q_k (a_k,b_k),
\end{equation}
where
\begin{equation}
s_{\rm i} (a') := - \ln p_{\rm i}^S (a'), \ s_{\rm f}(b')  := -\ln p_{\rm f}^S (b')
\end{equation}
are called the stochastic entropies of $S$, and
\begin{equation}
Q_k (a_k, b_k) := E_{a_k}^{B_k} - E_{b_k}^{B_k}
\end{equation}
is the heat absorbed by $S$ from $B_k$. We note that
\begin{equation}
\begin{split}
\langle s_{\rm i} \rangle  &:= \sum_{a,b} p(a,b) s_{\rm i}(a')   = - \sum_{a'} p_{\rm i}^S (a') \ln p_{\rm i}^S (a') = S(\rho_{\rm i}^S), \\
\langle s_{\rm f} \rangle  &:=  \sum_{a,b} p(a,b) s_{\rm f}(b')  = - \sum_{b'} p_{\rm f}^S (b') \ln p_{\rm f}^S (b') = S(\rho_{\rm f}^S),
\end{split}
\end{equation}
and
\begin{equation}
\langle Q_k \rangle := \sum_{a,b} p(a,b) Q_k (a_k, b_k) =  {\rm tr}\left[ H^{B_k} \rho_{\rm i} \right] - {\rm tr} \left[ H^{B_k}\rho_{\rm f} \right],
\end{equation}
which is consistent with Eq.~(\ref{heat1}).  Therefore, we obtain
\begin{equation}
\langle \sigma \rangle = \Delta S - \sum_k \beta_k \langle Q_k \rangle,
\end{equation}
where $\Delta S :=  S(\rho_{\rm f}^S) - S(\rho_{\rm i}^S)$.
Inequality~(\ref{second_fluctuation1}) reduces to the Clausius inequality (\ref{second_positive5})  in this situation.
\end{example}

We note that the quantum fluctuation theorem has been generalized to nonunitary processes including quantum measurements~\cite{Hanggi2}.


\subsection{From the Monotonicity of the Quantum Relative Entropy}

We next apply the monotonicity of the quantum relative entropy to a derivation of the second law of thermodynamics~\cite{Spohn}.  While the obtained inequalities are mathematically not equivalent to the inequalities obtained in the previous two sections, their physical meanings are the same for special cases.
The inequalities in this subsection can also be applied to transitions between nonequilibrium steady states, which leads to a quantum version of the Hatano-Sasa inequality~\cite{Yukawa2}.

\subsubsection{Time-independent Control}

We first consider relaxation processes to steady states, in which the external parameters that we can control do not depend on time.  The following theorem plays a key role.

\begin{theorem}
Let $\mathcal E: L(\bm H) \to L(\bm H)$ be a CPTP map and $\rho_{\rm i} \in Q(\bm H)$ be an initial state.  We assume that $\mathcal E$ has a unique steady state $\rho_{\rm ss}$ satisfying $\mathcal E (\rho_{\rm ss}) = \rho_{\rm ss}$.  Then
\begin{equation}
\Delta S \geq - \sigma_{\rm ex}^B,
\label{Hatano_Sasa1}
\end{equation}
where $\Delta S := S(\mathcal E (\rho_{\rm i})) - S(\rho_{\rm i})$ and 
\begin{equation}
\sigma_{\rm ex}^B :=  {\rm tr} [\mathcal E (\rho_{\rm i})  \ln \rho_{\rm ss}] - {\rm tr} [\rho_{\rm i} \ln \rho_{\rm ss}].
\label{excess_entropy1}
\end{equation}
\end{theorem}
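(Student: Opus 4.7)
The plan is to derive the inequality directly from the monotonicity of the quantum relative entropy (Theorem 3.3), using $\rho_{\rm ss}$ as the second argument. The crucial point is that because $\rho_{\rm ss}$ is a fixed point of $\mathcal E$, applying the CPTP map to the pair $(\rho_{\rm i}, \rho_{\rm ss})$ replaces only the first argument, so the monotonicity reads
\begin{equation}
S(\rho_{\rm i} \| \rho_{\rm ss}) \;\geq\; S(\mathcal E(\rho_{\rm i}) \| \mathcal E(\rho_{\rm ss})) \;=\; S(\mathcal E(\rho_{\rm i}) \| \rho_{\rm ss}).
\end{equation}

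The next step is to expand both sides using the definition $S(\rho \| \sigma) = -S(\rho) - {\rm tr}[\rho \ln \sigma]$. Writing this out gives
\begin{equation}
-S(\rho_{\rm i}) - {\rm tr}[\rho_{\rm i} \ln \rho_{\rm ss}] \;\geq\; -S(\mathcal E(\rho_{\rm i})) - {\rm tr}[\mathcal E(\rho_{\rm i}) \ln \rho_{\rm ss}].
\end{equation}
Rearranging so that the entropy change appears on the left then produces
\begin{equation}
S(\mathcal E(\rho_{\rm i})) - S(\rho_{\rm i}) \;\geq\; {\rm tr}[\rho_{\rm i} \ln \rho_{\rm ss}] - {\rm tr}[\mathcal E(\rho_{\rm i}) \ln \rho_{\rm ss}],
\end{equation}
and by the definitions of $\Delta S$ and $\sigma_{\rm ex}^B$ in Eq.~(\ref{excess_entropy1}), this is exactly $\Delta S \geq -\sigma_{\rm ex}^B$.

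There is essentially no obstacle: the whole argument is a one-line application of Theorem~3.3 followed by algebraic rearrangement. The uniqueness assumption on $\rho_{\rm ss}$ is not needed for the inequality itself (the argument works for any fixed point), but it is natural for fixing the definition of $\sigma_{\rm ex}^B$ unambiguously. Implicitly one also needs $\rho_{\rm ss}$ to have full support (or $\rho_{\rm i}$ to be supported on the support of $\rho_{\rm ss}$) so that the logarithms and relative entropies are finite; this is guaranteed in the generic physical setting where the unique steady state is full-rank.
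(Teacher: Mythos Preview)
Your proof is correct and follows exactly the same route as the paper: apply monotonicity of the quantum relative entropy to the pair $(\rho_{\rm i}, \rho_{\rm ss})$, use $\mathcal E(\rho_{\rm ss}) = \rho_{\rm ss}$, and unpack the definition of the relative entropy. The paper states this as a one-line proof, and you have simply written out the algebraic rearrangement explicitly; note that the monotonicity result is numbered Theorem~3.5 in the paper rather than 3.3.
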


\begin{proof}
Inequality~(\ref{Hatano_Sasa1}) is obvious from the monotonicity of the relative entropy: $S(\rho_{\rm i} \| \rho_{\rm ss}) \geq S(\mathcal E (\rho_{\rm i}) \| \mathcal E (\rho_{\rm ss}))$ with $\mathcal E (\rho_{\rm ss}) = \rho_{\rm ss}$.
$\Box$\end{proof}

We apply inequality~(\ref{Hatano_Sasa1}) to relaxation processes to equilibrium states.

\begin{example}
Suppose that $\rho_{\rm ss} = I / d$ holds, where $I$ is the identity on $\bm H$ and $d$ is the dimension of $\bm H$.  In this case, $\sigma_{\rm ex}^B = 0$ holds, and therefore inequality~(\ref{Hatano_Sasa1}) reduces to
\begin{equation}
\Delta S \geq 0,
\label{second_monotone}
\end{equation}
which is equivalent to inequality~(\ref{entropy_increase}). 

Physically, the condition of $\mathcal E (I) = I$ implies that the steady state is the microcanonical distribution.  In fact, the microcanonical distribution is given by $\rho_{\rm microcan} := I /d$, where $\bm H$ is taken as the set of state vectors in a microcanonical energy shell.  Thus, inequality~(\ref{second_monotone})  is  regarded as the law of entropy increase for adiabatic processes, in which the system does not exchange the energy with the environment so that the steady state is expected to be the microcanonical distribution.
\end{example}

\begin{example}
Suppose that $\rho_{\rm ss} = e^{\beta (F- H)} $ holds, where $H$ is the Hamiltonian and $F :=  -\beta^{-1} \ln {\rm tr}[e^{-\beta H}]$ is the corresponding free energy.  In this case, $\sigma_{\rm ex}^B$ is given by
\begin{equation}
\sigma_{\rm ex}^B = - \beta (  {\rm tr} [H \mathcal E(\rho_{\rm i})] -  {\rm tr} [H\rho_{\rm i}]) = -\beta  \langle Q \rangle,
\end{equation}
where $Q$ is the heat absorbed by the system.  Therefore, inequality~(\ref{Hatano_Sasa1}) reduces to~\cite{Spohn}
\begin{equation}
\Delta S \geq \beta Q,
\label{second_monotone1}
\end{equation}
which is the Clausius inequality with a single heat bath.
\end{example}

We note that inequality~(\ref{Hatano_Sasa1}) can be applied to situations in which $\rho_{\rm ss}$ describes a nonequilibrium steady state (NESS).  However,  in such a case, the physical meaning of $\sigma_{\rm ex}^B$ is not so clear.  For the case of a classical overdamped Langevin system, Hatano and Sasa~\cite{Hatano} showed  that  $- \sigma_{\rm ex}^B$ can be regarded as an ``excess heat,'' which is obtained by subtracting a ``housekeeping heat'' from the total heat.  The housekeeping heat means the heat current  in a NESS, which vanishes for the case of an equilibrium steady state.


\subsubsection{Time-dependent Control}

We next consider situations in which  we  drive a system by changing  external parameters.  We assume that we change the values of the parameters $N-1$ times during the entire time evolution.   In such a case, the total time evolution $\mathcal E$ can be written as
\begin{equation}
\mathcal E = \mathcal E_N \circ \cdots \circ \mathcal E_2 \circ \mathcal E_1,
\end{equation}
where $\mathcal E_n  : L(\bm H) \to L(\bm H)$ ($n=1,2,\cdots,N$) describes the time evolution with the $n$th values of the external parameters.  We assume that each $\mathcal E_n$ has a unique steady state  $\rho_{{\rm ss}, n}$. 
We write $\rho_n := \mathcal E_n (\rho_{n-1})$ and $\rho_0 := \rho_{\rm i}$, where $\rho_{\rm i} \in Q(\bm H)$ is the initial state.  From inequality~(\ref{Hatano_Sasa1}), we have a set of inequalities 
\begin{equation}
S(\rho_{n+1}) - S(\rho_n) \geq - \left( {\rm tr} [\rho_{n}  \ln \rho_{{\rm ss}, n}] - {\rm tr} [\rho_{n-1} \ln \rho_{{\rm ss}, n}] \right) \ (1 \leq n \leq N).
\end{equation}
By summing up them, we have the following theorem.

\begin{theorem}[Quantum Hatano-Sasa inequality]
\begin{equation}
\Delta S \geq - \sigma_{\rm ex}^B,
\label{Hatano_Sasa2}
\end{equation}
where
\begin{equation}
\sigma_{\rm ex}^B := \sum_{n=1}^{N} \left( {\rm tr} [\rho_{n}  \ln \rho_{{\rm ss}, n}] - {\rm tr} [\rho_{n-1} \ln \rho_{{\rm ss}, n}] \right).
\label{excess_entropy2}
\end{equation}
\end{theorem}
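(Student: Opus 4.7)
The plan is essentially already laid out in the sentence that immediately precedes the theorem statement: we simply apply the single-step inequality of Theorem~5.3 to each $\mathcal E_n$ individually and then sum over $n$, exploiting the telescoping structure on the left-hand side.

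More concretely, first I would regard each $\mathcal E_n$ as a CPTP map with its own unique steady state $\rho_{{\rm ss},n}$, and apply Theorem~5.3 with the initial state $\rho_{n-1}$ (in place of $\rho_{\rm i}$) and the output state $\rho_n = \mathcal E_n(\rho_{n-1})$. This gives, for each $n = 1, 2, \ldots, N$, the one-step bound
\begin{equation}
S(\rho_n) - S(\rho_{n-1}) \;\geq\; -\bigl( {\rm tr}[\rho_n \ln \rho_{{\rm ss},n}] - {\rm tr}[\rho_{n-1} \ln \rho_{{\rm ss},n}] \bigr),
\end{equation}
which is precisely inequality~(\ref{Hatano_Sasa1}) applied to the $n$-th stage of the protocol. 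The underlying mechanism is still the monotonicity of the quantum relative entropy applied to $S(\rho_{n-1} \| \rho_{{\rm ss},n}) \geq S(\mathcal E_n(\rho_{n-1}) \| \mathcal E_n(\rho_{{\rm ss},n})) = S(\rho_n \| \rho_{{\rm ss},n})$, using the fixed-point property $\mathcal E_n(\rho_{{\rm ss},n}) = \rho_{{\rm ss},n}$.

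Next I would sum these $N$ inequalities over $n$. The left-hand side telescopes to $S(\rho_N) - S(\rho_0) = S(\mathcal E(\rho_{\rm i})) - S(\rho_{\rm i}) = \Delta S$, while the right-hand side is, by definition, exactly $-\sigma_{\rm ex}^B$ as given in~(\ref{excess_entropy2}). Note that the right-hand side does \emph{not} telescope, because the operator $\ln \rho_{{\rm ss},n}$ changes with $n$; this is the reason the quantity $\sigma_{\rm ex}^B$ has a genuinely nontrivial form involving each intermediate steady state separately.

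There is no real obstacle in this argument: it is a direct corollary of Theorem~5.3 combined with a telescoping sum. The only point that requires minor care is bookkeeping, namely making sure that $\rho_n$ is consistently defined as $\mathcal E_n(\rho_{n-1})$ so that the successive one-step bounds can be chained, and that the uniqueness of each steady state (which ensures the hypotheses of Theorem~5.3 are met at every step) is invoked for every $n$. No additional analytical input beyond monotonicity is needed.
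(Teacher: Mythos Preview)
Your proposal is correct and follows exactly the same route as the paper: apply the single-step inequality of Theorem~5.3 to each $\mathcal E_n$ with input $\rho_{n-1}$ and output $\rho_n$, then sum over $n$ so that the left-hand side telescopes to $\Delta S$ while the right-hand side is $-\sigma_{\rm ex}^B$ by definition. Your indexing $S(\rho_n)-S(\rho_{n-1})$ on the left is in fact the one that is consistent with the right-hand side; the paper's displayed version has a minor index slip there, but the argument is identical.
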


If $\rho_{{\rm ss},n}$'s are out of equilibrium, inequality (\ref{Hatano_Sasa2}) is regarded as a quantum version of the Hatano-Sasa inequality, which was obtained by Yukawa~\cite{Yukawa2}.
We note that Eq.~(\ref{excess_entropy2}) can be rewritten as 
\begin{equation}
\sigma_{\rm ex}^B = {\rm tr} [\rho_N \ln \rho_{{\rm ss}, N}]  - {\rm tr} [\rho_0 \ln \rho_{{\rm ss}, 1}] - \sum_{n=1}^{N-1} \left( {\rm tr} [\rho_{n}  \ln \rho_{{\rm ss}, n+1}] - {\rm tr} [\rho_{n} \ln \rho_{{\rm ss}, n}] \right).
\end{equation}
If the initial and final states are the steady states such that $\rho_0 = \rho_{{\rm ss}, 1}$ and $\rho_N = \rho_{{\rm ss}, N}$, we have $\sigma_{\rm ex}^B = - \Delta S - \sum_{n=1}^{N-1} \left( {\rm tr} [\rho_{n}  \ln \rho_{{\rm ss}, n+1}] - {\rm tr} [\rho_{n} \ln \rho_{{\rm ss}, n}] \right)$.  In this case, inequality~(\ref{Hatano_Sasa1}) reduces to
\begin{equation}
 \sum_{n=1}^{N-1} \left( {\rm tr} [\rho_{n}  \ln \rho_{{\rm ss}, n+1}] - {\rm tr} [\rho_{n} \ln \rho_{{\rm ss}, n}] \right) \leq 0.
 \label{second_monotone2}
\end{equation}
If $\rho_{{\rm ss}, n}$'s are equilibrium states, we again have the second law of thermodynamics as follows.

\begin{example}
Suppose that $\rho_{{\rm ss}, n} = e^{\beta (F_n -  H_n)}$ holds, where  $H_n $ is the  Hamiltonian  during $\mathcal E_n$ and $F_n :=  - \beta^{-1} \ln {\rm tr}[e^{-\beta H_n}]$ is the corresponding free energy.  In this case, $\sigma_{\rm ex}^B$ is given by
\begin{equation}
\sigma_{\rm ex}^B = - \beta \sum_{n=1}^N (  {\rm tr} [H_n \rho_{n}] -  {\rm tr} [H_n \rho_{n-1}]) =: -\beta \langle Q \rangle,
\end{equation}
where $\langle Q \rangle$ is the heat absorbed by the system.  Therefore, inequality~(\ref{Hatano_Sasa1}) again reduces to 
\begin{equation}
\Delta S \geq \beta \langle Q \rangle.
\label{second_monotone3}
\end{equation}
We note that inequality~(\ref{second_monotone3}) is not equivalent to inequality~(\ref{second_positive5}) with a single heat bath, because their setting are mathematically different.  However, their physical meanings are physically the same; the both describe the Clausius inequality in the presence of a single heat bath, where entropy $S$ is identified to the von Neumann entropy.
We note that
\begin{equation}
\begin{split}
&\sum_{n=1}^{N-1} \left( {\rm tr} [\rho_{n}  \ln \rho_{{\rm ss}, n+1}] - {\rm tr} [\rho_{n} \ln \rho_{{\rm ss}, n}] \right) \\
&= \beta (\Delta F - \langle W \rangle),
\end{split}
\end{equation}
where $\Delta F := F_N - F_1$ is the free-energy difference, and 
\begin{equation}
\langle W \rangle := \sum_{n=1}^{N-1} \left( {\rm tr} [\rho_{n}  H_{n+1}] - {\rm tr} [\rho_{n} H_n] \right)
\end{equation}
is the work performed on the system.  If the initial and final states are the canonical distributions such that $\rho_0 = \rho_{{\rm ss}, 1}$ and $\rho_N = \rho_{{\rm ss}, N}$, inequality~(\ref{second_monotone2}) leads to
\begin{equation}
\langle W \rangle \geq \Delta F.
\label{HSwork}
\end{equation}
We note that inequality (\ref{HSwork}) still holds when the final state $\rho_N $ is out of equilibrium.
\end{example}


\section{Second Law with Quantum Feedback Control}

We next discuss a generalization of the second law of thermodynamics with quantum feedback control~\cite{Lloyd1,Lloyd2,Nielsen,Zurek1,Kieu,Allahverdyan,Quan1,Sagawa-Ueda1,Jacobs,SWKim,Dong,Morikuni,Abreu2,Lahiri,Lu,Funo0,Funo,Park},  on the basis of the positivity and the monotonicity in parallel to Sec.~5. 
  Here, quantum feedback control~\cite{Wiseman} means that the control protocol on a system depends on an outcome of a quantum measurement on the system. 
The obtained generalization includes the QC-mutual information.  We note that the quantum fluctuation theorem with feedback control has been discussed in Refs.~\cite{Morikuni,Lahiri,Funo}, while the classical one has been discussed in Refs.~\cite{Sagawa-Ueda3,Suzuki,Horowitz2,Toyabe3,Ito,Sagawa-Ueda4,Sagawa-Ueda6,Sagawa-Ueda7,Ito2}.

\subsection{From the Positivity of the Quantum Relative Entropy}

We first derive a generalization of Theorem~5.2 with quantum feedback control.  
Let $\rho_{\rm i}  \in Q(\bm H)$ be the initial state of the system.
We first perform a unitary operation $U$ so that the system evolves to $\rho' := U \rho_{\rm i} U^\dagger$.  We next perform a quantum measurement with the set of Kraus operators $\{ M_b \}_{b \in B} \subset L(\bm H)$, where $B$ is the finite set of outcomes. The corresponding POVM $\{ E_b \}_{b \in B}$ is assumed to be given by
\begin{equation} 
E_b = M_b^\dagger M_b.
\end{equation}
The probability of outcome $b$ is $p(b) = {\rm tr}[E_b \rho']$, and the post-measurement state with outcome $b$ is $\rho'(b) := M_b \rho' M_b^\dagger / p(b)$.  The QC-mutual information for this measurement is given by
\begin{equation}
I_{\rm QC} = S(\rho') - \sum_b p(b) S(\rho'(b)).
\end{equation}
We then perform a unitary operation $U_b$ that depends on outcome $b$, which is the feedback control.  The final state with outcome $b$ is  $\rho_{\rm f} (b) := U_b \rho'(b) U_b^\dagger$, whose ensemble average is  $\rho_{\rm f} := \sum_b U_b M_b U \rho_{\rm i} U^\dagger M_b^\dagger U_b^\dagger$.
In this setup, Theorem~5.2 is generalized to the following theorem.

\begin{theorem}
Let $\rho_0 (b)$ be a reference state corresponding to outcome $b$.  Then
\begin{equation}
-\sum_b p(b) {\rm tr} [ \rho_{\rm f} (b) \ln \rho_0 (b) ] - S(\rho_{\rm i}) = \sum_b p(b) S(\rho_{\rm f}(b) \| \rho_0 (b)) - I_{\rm QC}
\label{second_feedback0}
\end{equation}
holds, which leads to
\begin{equation}
-\sum_b p(b) {\rm tr} [ \rho_{\rm f} (b) \ln \rho_0 (b) ] - S(\rho_{\rm i}) \geq - I_{\rm QC}.
\label{second_feedback}
\end{equation}
\end{theorem}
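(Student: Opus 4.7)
The plan is to prove the equality (\ref{second_feedback0}) by expanding the right-hand side using the definitions of the relative entropy and the QC-mutual information, then exploiting the fact that the unitary operations $U$ and $U_b$ preserve von Neumann entropy. The inequality (\ref{second_feedback}) will then follow immediately from Klein's inequality (Theorem 3.3) applied to each term $S(\rho_{\rm f}(b)\|\rho_0(b))\ge 0$.

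First I would unpack the relative entropy for each outcome:
\begin{equation}
S(\rho_{\rm f}(b) \| \rho_0(b)) = -S(\rho_{\rm f}(b)) - {\rm tr}[\rho_{\rm f}(b) \ln \rho_0(b)].
\end{equation}
Since $U_b$ is unitary, $S(\rho_{\rm f}(b)) = S(\rho'(b))$. Averaging over $b$ with weights $p(b)$ gives
\begin{equation}
\sum_b p(b) S(\rho_{\rm f}(b) \| \rho_0(b)) = -\sum_b p(b) S(\rho'(b)) - \sum_b p(b) {\rm tr}[\rho_{\rm f}(b) \ln \rho_0(b)].
\end{equation}

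Next I would use the definition $I_{\rm QC} = S(\rho') - \sum_b p(b) S(\rho'(b))$ to rewrite $\sum_b p(b) S(\rho'(b)) = S(\rho') - I_{\rm QC}$, and the fact that $U$ is unitary to replace $S(\rho') = S(\rho_{\rm i})$. Substituting these two identities into the previous equation and rearranging yields exactly (\ref{second_feedback0}). The inequality (\ref{second_feedback}) then follows because $S(\rho_{\rm f}(b) \| \rho_0(b)) \geq 0$ for every $b$, so the average is also nonnegative.

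There is really no obstacle here; the argument is purely bookkeeping once one recognizes that (i) both unitaries contribute zero entropy change in the appropriate ensemble, (ii) the averaged von Neumann entropy of the post-measurement branches is precisely the quantity subtracted in the definition of $I_{\rm QC}$, and (iii) positivity of relative entropy is the only inequality needed. The role of the measurement is therefore to shift the entropy balance by exactly $I_{\rm QC}$, relative to the unitary case of Theorem 5.2, which is what makes the feedback-generalized second law quantitatively sharper.
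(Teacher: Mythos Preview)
Your proof is correct and essentially identical to the paper's own proof. Both expand the relative entropy, use unitarity of $U$ and $U_b$ to identify $S(\rho_{\rm i})=S(\rho')$ and $S(\rho_{\rm f}(b))=S(\rho'(b))$, invoke the definition of $I_{\rm QC}$, and then apply positivity of the relative entropy for the inequality; the only difference is cosmetic ordering of the substitutions.
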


\begin{proof}
We straightforwardly calculate that
\begin{equation}
\begin{split}
&\sum_b p(b) S(\rho_{\rm f}(b) \| \rho_0 (b)) \\
&= -\sum_b p(b) \left( S(\rho_{\rm f}(b)) + {\rm tr} [ \rho_{\rm f} (b) \ln \rho_0 (b) ] \right) \\
&=-\sum_b p(b){\rm tr} [ \rho_{\rm f} (b) \ln \rho_0 (b) ]   - S(\rho_{\rm i}) + S(\rho_{\rm i}) -  \sum_b p(b)  S(\rho_{\rm f}(b)) \\
&= -\sum_b p(b){\rm tr} [ \rho_{\rm f} (b) \ln \rho_0 (b) ]   - S(\rho_{\rm i})  + S(\rho') - \sum_b p(b)  S(\rho'(b)) \\
&= -\sum_b p(b){\rm tr} [ \rho_{\rm f} (b) \ln \rho_0 (b) ]   - S(\rho_{\rm i}) + I_{\rm QC},
\end{split}
\end{equation}
which implies Eq.~(\ref{second_feedback0}).
$\Box$\end{proof}

\begin{example}
We apply the setup of Example~5.1 to the above theorem.
Let $H_{\rm i}$ and $H_{\rm f}(b)$ be the initial and final Hamiltonians, where the final one can depend on outcome $b$.  We assume that the initial state is the canonical distribution $\rho_{\rm i} = e^{\beta (F_{\rm i} - H_{\rm i})}$ with $F_{\rm i} := -\beta^{-1} \ln {\rm tr} [e^{-\beta H_{\rm i}}]$, and that the reference state is also  the canonical distribution $\rho_{0}(b) = e^{\beta (F_{\rm f}(b) - H_{\rm f}(b))}$ with $F_{\rm f}(b) := -\beta^{-1} \ln {\rm tr} [e^{-\beta H_{\rm f}(b)}]$.  We then have
\begin{equation}
-\sum_b p(b) {\rm tr} [ \rho_{\rm f} (b) \ln \rho_0 (b) ] - S(\rho_{\rm i})  = \beta \langle W - \Delta F  \rangle,
\end{equation}
where
\begin{equation}
\langle W \rangle := \sum_b p(b) {\rm tr}[H_{\rm f}(b) \rho_{\rm f}(b)] - {\rm tr}[H_{\rm i} \rho_{\rm i}]
\end{equation}
is the average of the work performed on the system, and
\begin{equation}
\langle \Delta F \rangle := \sum_b p(b) F_{\rm f}(b) - F_{\rm i}
\end{equation}
is the average of the free-energy difference.
Here, we assumed that  the energy change of the system during the measurement is the work.  Physically, this assumption implies that the measurement process is  adiabatic.
Inequality~(\ref{second_feedback}) then reduces to
\begin{equation}
\beta \langle W - \Delta F  \rangle \geq -I_{\rm QC},
\label{second_QC}
\end{equation}
which is the generalization of inequality~(\ref{second_positive3}) to feedback-controlled processes.

Inequality~(\ref{second_QC}) identifies the upper bound of the sum of the extractable work  $- \langle W \rangle$ and the free-energy gain $\langle \Delta F \rangle$, which is proportional to the QC-mutual information as $\beta^{-1}I_{\rm QC}$.  In the classical limit, the upper bound is given by the classical mutual information.  The equality in~(\ref{second_QC}) is achieved by a classical model called the Szilard engine~\cite{Szilard}, where $I_{\rm QC} = \ln 2$, $\- \langle W \rangle = \beta^{-1} \ln 2$, and $\langle \Delta F \rangle = 0$.  Several models that achieves the equality in (\ref{second_QC}) have been discussed for both classical~\cite{Sagawa-Ueda4,Abreu,Horowitz3,Horowitz5} and quantum~\cite{Jacobs} regimes.  We note that inequality~(\ref{second_QC}) was first obtained by Sagawa and Ueda~\cite{Sagawa-Ueda1}.
\end{example}

We note that, in a similar manner to the above example, we can apply the setup of Example.~5.2 to Theorem~6.1, and obtain
\begin{equation}
\Delta S - \sum_k \beta_k \langle Q_k \rangle \geq - I_{\rm QC}.
\end{equation}


\subsection{From the Monotonicity of the Quantum Relative Entropy}

We next derive a generalization of the quantum Hatano-Sasa inequality~(\ref{Hatano_Sasa2})  (Theorem~5.5) with quantum feedback control, on the basis of the monotonicity of the quantum relative entropy.
Let $\rho_{\rm i} \in Q(\bm H)$ be the initial state.
We control the system through external parameters without feedback.
The system then evolves by a CPTP map $\mathcal E_m \circ \cdots \circ \mathcal E_2 \circ \mathcal E_1$, where each $\mathcal E_n$ ($n=1,2, \cdots, m$) has a unique steady state $\rho_{{\rm ss}, n}$.  We define
\begin{equation}
\sigma_{\rm ex}^B (1 \to m) := \sum_{n=1}^{m} \left( {\rm tr} [\rho_{n}  \ln \rho_{{\rm ss}, n}] - {\rm tr} [\rho_{n-1} \ln \rho_{{\rm ss}, n}] \right),
\label{excess_entropy3}
\end{equation}
where $\rho_n := \mathcal E_n (\rho_{n-1})$ and $\rho_0 := \rho_{\rm i}$. 

 We next perform  a measurement on $\rho_m$ with  instrument $\{ \mathcal E_{\rm meas}^{(b)} \}_{b \in B}$, where $B$ is the finite set of possible outcomes.  The corresponding Kraus representation is assumed to be given by
\begin{equation}
\mathcal E_{\rm meas}^{(b)} (\rho_m) :=  M_b \rho_m M_b^\dagger,
\end{equation}
where $M_b$'s are  Kraus operators.  
We note that $\sum_b \mathcal E_{\rm meas}^{(b)}$ is trace-preserving and that  $\sum_{b} M_b^\dagger M_b = I$ holds with the identity $I \in L(\bm H)$.  
The corresponding POVM $\{ E_b \}_{b \in B}$ satisfies  $E_b = M_b^\dagger M_b$ for any $b$.  The probability of obtaining outcome $b$ is  $p(b) = {\rm tr}[E_b \rho_m]$, and the post-measurement state with outcome $b$ is  $\rho_m(b) := M_b \rho_m M_b^\dagger / p(b)$.  The QC-mutual information for this measurement is  given by
\begin{equation}
I_{\rm QC} = S(\rho_m) - \sum_b p(b) S(\rho_m(b)).
\label{QC1}
\end{equation}
We define $\sigma_{\rm ex}^B = 0$ during the measurement.  Physically, this definition implies that the measurement process is assumed to be adiabatic, which is consistent with the assumption of the adiabaticity  in Example~6.1.

We next  perform feedback control with a CPTP map $\mathcal E_{N}^{(b)} \circ \cdots \circ \mathcal E_{m+2}^{(b)} \circ \mathcal E_{m+1}^{(b)}$ that depends on outcome $b$.   We assume that each CPTP map $\mathcal E_n^{(b)}$ ($n= m+1, \cdots, N$) has a unique steady state $\rho_{{\rm ss}, n}^{(b)}$.  We define
\begin{equation}
\sigma_{\rm ex}^B (m+1 \to N; b) := \sum_{n=m+1}^{N} \left( {\rm tr} [\rho_{n}^{(b)}  \ln \rho_{{\rm ss}, n}^{(b)}] - {\rm tr} [\rho_{n-1}^{(b)} \ln \rho_{{\rm ss}, n}^{(b)}] \right),
\label{excess_entropy3}
\end{equation}
where $\rho_n^{(b)} := \mathcal E_n^{(b)} (\rho_{n-1}^{(b)})$ ($n= m+1, \cdots, N$), and define
\begin{equation}
\sigma_{\rm ex}^B := \sigma_{\rm ex}^B (1 \to m) + \sum_b p(b)  \sigma_{\rm ex}^B (m+1 \to N; b). 
\end{equation}
We note that the ensemble average of the time evolution is given by
\begin{equation}
\mathcal E = \sum_b \mathcal E_{N}^{(b)} \circ \cdots \circ \mathcal E_{m+1}^{(b)} \circ \mathcal E_{\rm meas}^{(b)} \circ \mathcal E_m \circ \cdots \circ \mathcal E_2 \circ \mathcal E_1,
\end{equation}
which is CPTP.  We write $\rho_{\rm f} := \mathcal E  (\rho_{\rm i}) = \sum_b p(b) \rho_N^{(b)}$.  Then, the quantum Hatano-Sasa inequality~(\ref{Hatano_Sasa2}) is generalized to the following theorem.

\begin{theorem}
In the above setup,
\begin{equation}
\sum_b p(b) S(\rho_N^{(b)}) - S(\rho_{\rm i}) \geq - \sigma_{\rm ex}^B - I_{\rm QC}.
\label{second_feedback2}
\end{equation}
\end{theorem}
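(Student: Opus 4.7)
The plan is to decompose the full feedback-controlled process into its three natural stages — the pre-measurement evolution $\mathcal E_m \circ \cdots \circ \mathcal E_1$, the measurement, and the outcome-dependent feedback $\mathcal E_N^{(b)} \circ \cdots \circ \mathcal E_{m+1}^{(b)}$ — and to bound them separately using the quantum Hatano--Sasa inequality (Theorem 5.5), stitching the two halves together through the defining relation~(\ref{QC1}) of the QC-mutual information.

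First I would apply Theorem~5.5 directly to the pre-measurement CPTP chain, obtaining
\begin{equation*}
S(\rho_m) - S(\rho_{\rm i}) \;\geq\; -\,\sigma_{\rm ex}^B(1 \to m).
\end{equation*}
For the feedback stage, note that after outcome $b$ the initial state $\rho_m(b) = M_b \rho_m M_b^\dagger / p(b)$ is processed by a deterministic sequence of CPTP maps $\mathcal E_n^{(b)}$ each having a unique steady state $\rho_{{\rm ss},n}^{(b)}$, so the hypotheses of Theorem~5.5 are satisfied conditionally on $b$ and yield
\begin{equation*}
S(\rho_N^{(b)}) - S(\rho_m(b)) \;\geq\; -\,\sigma_{\rm ex}^B(m+1 \to N;\,b).
\end{equation*}
Averaging this over $b$ with weights $p(b)$ produces a lower bound involving $\sum_b p(b)\,S(\rho_m(b))$, which by~(\ref{QC1}) equals $S(\rho_m) - I_{\rm QC}$. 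Adding the two inequalities, the $S(\rho_m)$ terms cancel and the two excess-entropy contributions assemble into the total $\sigma_{\rm ex}^B$ as defined just before the theorem, giving exactly~(\ref{second_feedback2}).

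I do not expect a serious technical obstacle: the argument is essentially two uncoupled applications of Theorem~5.5, linked by a single algebraic identity. The one conceptual point deserving emphasis is that $\sigma_{\rm ex}^B$ is declared to be zero during the measurement itself — physically, the adiabaticity assumption paralleling that of Example~6.1 — so that the entire information-theoretic cost of the measurement is channelled into the $-I_{\rm QC}$ term on the right-hand side rather than into $\sigma_{\rm ex}^B$. Once this bookkeeping convention is granted, the inequality follows immediately from the monotonicity of the quantum relative entropy encoded in Theorem~5.5.
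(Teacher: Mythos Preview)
Your proposal is correct and follows essentially the same route as the paper: apply the monotonicity-based Hatano--Sasa bound to the pre-measurement chain and, conditionally on each outcome $b$, to the post-measurement chain, then glue the two pieces together via the identity $\sum_b p(b)\,S(\rho_m(b)) = S(\rho_m) - I_{\rm QC}$ from~(\ref{QC1}). The only cosmetic difference is that the paper writes out the step-by-step inequalities for each $\mathcal E_n$ and $\mathcal E_n^{(b)}$ individually before summing, whereas you invoke the already-summed Theorem~5.5 in one shot; the content is identical.
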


\begin{proof}
We have a set of inequalities from the monotonicity of the relative entropy:
\begin{equation}
S(\rho_{n+1}) - S(\rho_n) \geq - \left( {\rm tr} [\rho_{n}  \ln \rho_{{\rm ss}, n}] - {\rm tr} [\rho_{n-1} \ln \rho_{{\rm ss}, n}] \right) \ (1 \leq n \leq m),
\end{equation}
\begin{equation}
S(\rho_{n+1}^{(b)}) - S(\rho_n^{(b)}) \geq \left( {\rm tr} [\rho_{n}^{(b)}  \ln \rho_{{\rm ss}, n}^{(b)}] - {\rm tr} [\rho_{n-1}^{(b)} \ln \rho_{{\rm ss}, n}^{(b)}] \right)  \ (m+1 \leq n \leq N).
\end{equation}
By combining them to Eq.~(\ref{QC1}), we obtain inequality~(\ref{second_feedback2}).
$\Box$\end{proof}

Inequality (\ref{second_feedback2}) has been obtained in this article for the first time.  On the other hand, the classical Hatano-Sasa equality and inequality have been generalized to feedback-controlled classical systems~\cite{Abreu2,Lahiri}.


\begin{example}
We apply the setup in Example~5.7 to Theorem~6.2.
Let $H_n $ be the Hamiltonian corresponding to $\mathcal E_n$ and $F_n :=  - \beta^{-1} \ln {\rm tr}[e^{-\beta H_n}]$ be the free energy ($1 \leq n \leq N$).   We note that $H_n$ and $F_n$ can depend on outcome $b$ for $n \geq m+1$, which are denoted as $H_n(b)$ and $F_n (b)$.
We assume that the steady states are  the canonical distributions: $\rho_{{\rm ss}, n} = e^{\beta (F_n -  H_n)}$ ($n \leq m$) and $\rho_{{\rm ss}, n}^{(b)} = e^{\beta (F_n(b) -  H_n(b))}$ ($n \geq m+1$).   In this case, $\sigma_{\rm ex}^B$ is given by
\begin{equation}
\begin{split}
\sigma_{\rm ex}^B &= - \beta \sum_{n=1}^m \left(  {\rm tr} [H_n \rho_{n}] -  {\rm tr} [H_n \rho_{n-1}] \right) \\
&- \beta \sum_b p(b) \sum_{n=m+1}^N \left(  {\rm tr} [H_n(b) \rho_{n}(b)] -  {\rm tr} [H_n(b) \rho_{n-1}(b)] \right) \\
&=: -\beta \langle Q \rangle,
\end{split}
\end{equation}
where $\langle Q \rangle$ is the average of the heat absorbed by the system. Here, we assumed that the measurement process is adiabatic such that $\langle Q \rangle =0$ holds during the measurement, which is consistent with the definition $\sigma_{\rm ex}^B = 0$ during $\mathcal E_{\rm meas}^{(b)}$.   Inequality~(\ref{second_feedback2}) then reduces to
\begin{equation}
\langle \Delta S \rangle \geq \beta \langle Q \rangle - I_{\rm QC},
\label{second_feedback3}
\end{equation}
where $\langle \Delta S \rangle := \sum_b p(b) S(\rho_N^{(b)}) - S(\rho_{\rm i})$.

We further assume that the initial and final states are the canonical distributions such that $\rho_{\rm i} = \rho_{{\rm ss}, 1}$ and $\rho_N^{(b)} = \rho_{{\rm ss}, N}^{(b)}$. We then have
\begin{equation}
\langle \Delta S \rangle = \beta \langle \Delta E - \Delta F \rangle,
\end{equation}
where 
\begin{equation}
\langle \Delta E \rangle := \sum_b p(b) {\rm tr} [H_N^{(b)}\rho_N^{(b)}] - {\rm tr}[H_1 \rho_{\rm i}]
\end{equation}
is the energy difference of the system, and $\langle \Delta F \rangle := \sum_b p(b) F_N(b) - F_1$. Therefore, we obtain from inequality (\ref{second_feedback3}) that
\begin{equation}
\beta \langle W - \Delta F \rangle \geq - I_{\rm QC},
\label{second_QC2}
\end{equation}
where we defined $\langle W \rangle := \langle \Delta E \rangle - \langle Q \rangle$ due to  the first law of thermodynamics.  
While inequality~(\ref{second_QC2}) is mathematically not equivalent to inequality~(\ref{second_QC}), their physical meanings are the same.
We note that inequality (\ref{second_QC2}) still holds when $\rho_N^{(b)}$'s are out of equilibrium.
\end{example}


\section{Concluding Remarks}

We have discussed several second law-like inequalities on the basis of the positivity and the monotonicity of the quantum relative entropy.  In Sec.~2, we discussed the basic concepts for quantum states and dynamics.  In particular, we discussed the complete positivity of time evolutions, and derived  the Kraus representation~(\ref{Kraus}).  In Sec.~3, we introduced the von Neumann entropy and the quantum relative entropy, and discussed their basic properties.  In particular, we proved the positivity of the quantum relative entropy.  We also discussed the monotonicity, which is proved in Appendix~B.  In Sec.~4, we have discussed the quantum mutual information and two related quantities.  One is the Holevo's $\chi$-quantity and the other is the QC-mutual information.  The two quantities characterize the accessible classical information encoded in quantum states, through the Holevo bound (\ref{Holevo})  and its dual inequality (\ref{Holevo_dual}). We also showed the data processing inequalities (\ref{inequality1}) and (\ref{inequality3}), which are direct consequences of the monotonicity of the quantum relative entropy.

Sections~5 is devoted to the derivations of the second law of thermodynamics and its generalization.  
In Sec.~5.1, we  discussed that  the von Neumann entropy and the thermodynamic entropy can be identified in canonical distributions.
  In Sec.~5.2, we derived a general inequality~(\ref{second_positive}) on the basis of the positivity.  Inequality~(\ref{second_positive}) leads to a well-known expression of the second law of thermodynamics~(\ref{second_positive5}) (i.e., the Clausius inequality) as a special case, by treating the total system including the baths as a unitary system.  
In Sec.~5.3, we discussed the quantum fluctuation theorem, which leads to the second law of thermodynamics~(\ref{second_fluctuation1}) that is equivalent to inequality~(\ref{second_positive}) in Sec.~5.2.
In Sec.~5.4, we derived a general inequality~(\ref{Hatano_Sasa2})  on the basis of the monotonicity, which is regarded as a quantum version of the Hatano-Sasa inequality.  This inequality leads to the second law of thermodynamics~(\ref{second_monotone3}) for a special case.
Our derivations are independent of the size of the system, and therefore the obtained inequalities can be applied to small thermodynamic systems such as quantum dots.  

\

We now discuss the physical meanings of the results in more detail.
We first compare the two derivations of the second law, which are based on the positivity and the monotonicity.
The derivation of inequality (\ref{second_positive5}) based on the positivity is quite universal, because we have essentially made only three assumptions: (i) the total system including the baths obeys a unitary evolution, (ii)  the baths are initially in the canonical distributions, and (iii) the system is initially not correlated to the baths.  
We stress that we did not assume anything about the intermediate and the final states of the total system.
The assumptions (i) and (ii) would be physically reasonable.  The assumption (iii) can be justified if the system is initially separated from the baths.

On the other hand, in the derivation of inequality~(\ref{second_monotone1}) based on the monotonicity, any bath is not explicitly included in our formulation, and therefore time evolutions are assumed to be nonunitary.  In this case,  the derivation requires a strong assumption that (iv) the steady state of each CPTP map $\mathcal E_n$ is a canonical distribution.  We stress that  the assumption (iv) is not satisfied in general  if we trace out the bath from the total system satisfying (i)-(iii).  Moreover, the derivation based on the monotonicity implicitly requires that (v) the outside bath needs to be refreshed (i.e., needs to be replaced by a new bath that is not correlated with the system) when the external parameters are changed, because each $\mathcal E_n$ is assumed to be CPTP.
On the other hand,  the monotonicity can be applied to the cases in which the steady states are not in thermal equilibrium, which is an advantage of the method of the monotonicity.  We note that the derivation based on the monotonicity can be generalized to continuous-time systems described by quantum master equations~\cite{Lindblad1,Breuer,Wiseman,Gardiner}.

Going back to the assumption (iii) for the derivation based on the positivity,  it has been pointed out that if the system is initially correlated with the baths,  inequality~(\ref{second_positive5})  needs modifications~\cite{Allahverdyan2,Horhammer,Jennings2}. For example, in the special case of inequality (\ref{second_positive7}), the free energy of the system needs to be renormalized~\cite{Hanggi1}.
On the other hand, the role of the initial correlation has been discussed in terms of the origin of the arrow of the time~\cite{Maccone,Jennings1}.

The assumption (ii) can be criticized in terms of the foundation of statistical mechanics.
In macroscopic systems, a thermal equilibrium state does not necessarily correspond to any canonical distribution from the microscopic point of view.  
In fact, it has been shown that even a pure state can behave as a thermal equilibrium state~\cite{Neumann2,Deutsch,Sredkicki,Tasaki2,Sugita,Lebowitz,Popescue,Reimann,Ikeda,Sugiura,Sugiura2}. 
Therefore, the assumption (ii) is  difficult to be rigorously justified for macroscopic baths.
The identification of the von Neumann entropy to the thermodynamic entropy fails in general, except for the cases in which a thermal equilibrium state corresponds to  a canonical distribution.
Experimentally, trapped ultracold atoms can relax to a thermal equilibrium state even if they are well separated from the environment.
We note that the detailed investigations of the relaxation processes of isolated quantum systems are now experimentally accessible~\cite{Kinoshita}. 

We next briefly mention a theoretical approach to derive the second law without assuming the initial canonical distribution~\cite{Lenard,Pusz,Tasaki3,Campisi2}.
Let $\rho \in Q(\bm H)$ be an initial state.  We consider a unitary evolution $U$  from $t=0$ to $\tau$ with a time-dependent Hamiltonian $H(t)$.  
We define the work performed on the system as $\langle W \rangle := {\rm tr}[H(\tau) U \rho U^\dagger] - {\rm tr}[H(0) \rho]$.  We also define the quasi-static work $\langle W \rangle_{\rm slow} := \lim_{\tau \to \infty}\langle W \rangle$, where $H(0)$ and $H(\tau)$ are fixed.
Let $H(0) = \sum_k E_k | \psi_k \rangle \langle \psi_k |$ be the spectrum decomposition of the initial Hamiltonian.
The crucial assumption on the initial state is that $\rho$ is  written as $\rho = \sum_k p(k) | \psi_k \rangle \langle \psi_k |$ with $p(k)$'s satisfying
\begin{equation}
p(k) \geq p(k') \ \ \ \ \ {\rm if} \ \ E_k \leq E_{k'}.
\label{assumption_Lenard}
\end{equation}
Then, Lenard~\cite{Lenard} proved that  
\begin{equation}
\langle W \rangle \geq \langle W \rangle_{\rm slow},
\label{Lenard}
\end{equation}
which is regarded as an expression of the second law of thermodynamics (see also Ref.~\cite{Tasaki3}). We note that the quasi-static work can be identified to the free-energy difference as $\langle W \rangle_{\rm slow} = \Delta F$.

Without assuming~(\ref{assumption_Lenard}), inequality~(\ref{Lenard}) is not satisfied in general.
For example, the microcanonical distribution does not satisfy assumption~(\ref{assumption_Lenard}), if  the entire Hilbert space is spanned by all $|\psi_k \rangle$'s and is not restricted to the microcanonical energy shell.
In fact, several counterexamples against inequality (\ref{Lenard}) have been discussed in classical systems with the microcanonical initial distribution~\cite{Sato,Marathe,Jarzynski5}.
We note that a generalized quantum fluctuation theorem for microcanonical initial distributions has been discussed~\cite{Talkner3}.  

The role of a quantum coherence in the initial state has been studied in terms of quantum heat engines~\cite{Scovil,Geusic,Bender,Scully1,Scully2,Quan2,Liberato}.  For example, if the initial state differs from the canonical distribution and involves an additional quantum coherence, the second law of thermodynamics like~(\ref{Lenard}) is not necessarily satisfied.

We now reach an observation that the choice of the initial state is crucial to derive the second law.  In fact, the canonical distribution is a special state that maximizes its von Neumann entropy among the states that have the same amount of the energy.  This special initial condition is a crucial reason why we can derive the second law even if the dynamics of the total system is reversible.
It would be difficult to rigorously justify the physical validity of the initial canonical distribution.
Therefore, the derivations of the second law in Sec.~5  are not  so satisfactory to understand the fundamental reason why the macroscopic world is irreversible in spite of the reversibility of the microscopic dynamics.  To understand the origin of the arrow of the time would be an interesting but quite difficult future challenge.  
We note that there has been an interesting  approach to this problem  in terms of the ``waiting-time typicality''~\cite{Tasaki4,Ikeda2,Tasaki5}, which is based on unitary dynamics with an initial pure state.

Thermodynamics of information processing is also an interesting topic.  While it would be easier than the aforementioned fundamental problem, it is also closely related to the foundation of thermodynamics and statistical mechanics.   This topic has been discussed by numerous researchers in terms of  the paradox of ``Maxwell's demon''~\cite{Demon,Maruyama,Maroney1,Sagawa-Ueda5}.  The demon can be formulated as an information processing device, and has been shown to be  consistent with the second law of thermodynamics~\cite{Maroney1,Sagawa-Ueda5,Maroney2,Sagawa-Ueda2}.  
 Recently, modern nonequilibrium statistical mechanics and quantum information theory have shed new light on the theory of thermodynamics including  the demon~\cite{Lloyd1,Lloyd2,Nielsen,Zurek1,Kieu,Allahverdyan,Quan1,Sagawa-Ueda1,Jacobs,SWKim,Dong,Morikuni,Abreu2,Lahiri,Lu,Funo0,Funo,Park,Demon,Maruyama,Maroney1,Sagawa-Ueda5,Maroney2,Sagawa-Ueda2,Sagawa-Ueda3,Suzuki,Horowitz2,Toyabe3,Ito,Sagawa-Ueda4,Sagawa-Ueda6,Sagawa-Ueda7,Ito2,Abreu,Horowitz3,Horowitz5}.  
 
In particular, we  discussed thermodynamics of quantum feedback control in Sec.~6.  We obtained inequalities~(\ref{second_feedback}) and (\ref{second_feedback2}) as generalizations of (\ref{second_positive}) and (\ref{Hatano_Sasa2}), respectively.  The obtained inequalities lead to generalizations of the second law (\ref{second_QC}) and (\ref{second_feedback3}).   We note that the quantum fluctuation theorem with quantum information processing has been discussed in Refs.~\cite{Morikuni,Lahiri,Funo}.


\appendix

\section{Short Summary of Linear Algebra}

In this appendix, we briefly summarize the concepts and notations of the linear algebra for finite-dimensional Hilbert spaces. 
 Let $\bm H$ be a $N$-dimensional Hilbert space with $N < \infty$.  The inner product of $| \varphi \rangle, | \phi \rangle  \in \bm H$ is written as $\langle \varphi | \phi \rangle$ or $(\varphi, \phi)$. 
The norm of the Hilbert space is given by $\| \varphi \|^2 = \langle \varphi | \varphi \rangle$.
A linear basis $\{ | \varphi_k \rangle \}_{k=1}^N \subset \bm H$ is called an orthonormal basis if it satisfies $\langle \varphi_k | \varphi_l \rangle = \delta_{kl}$ with $\delta_{kl}$ the Kronecker delta.

  Let $L (\bm H, \bm H')$ be the set of linear operators from $\bm H$ to another Hilbert space $\bm H'$.  In particular, we write $L(\bm H) := L(\bm H, \bm H)$.  We write $\langle \varphi | X | \phi \rangle := (\varphi, X\phi)$ for $X \in L(\bm H)$.
For any  $X \in L(\bm H, \bm H')$, there exists a unique operator $X^\dagger \in L(\bm H', \bm H)$ that satisfies 
\begin{equation}
(\phi, X \psi ) = (X^\dagger \phi, \psi) 
\end{equation}
for any $| \psi \rangle \in \bm H$ and $| \phi \rangle \in \bm H'$.  $X^\dagger$ is called the adjoint or the Hermitian conjugate operator of $X$.  If $X^\dagger = X$ holds for $X \in L(\bm H)$, $X$ is called self-adjoint or Hermitian.  We note that any operator $X \in L(\bm H)$ can be written as $X = X_1 + {\rm i} X_2$, where $X_1:= (X + X^\dagger) / 2$ and $X_2 := (X - X^\dagger)/(2{\rm i})$ are Hermitian.

Let $\bm H'$ be a subspace of $\bm H$ with dimension $N'$ ($\leq N$).  Let $\{ | \varphi_k \rangle \}_{k=1}^{N'}$ be an orthonormal basis of $\bm H'$.  The projection operator onto $\bm H'$ is given by
\begin{equation}
P_{\bm H'} = \sum_{k=1}^{N'} | \varphi_k \rangle \langle \varphi_k | \in L(\bm H).
\end{equation}
In particular, any orthonormal basis satisfies
\begin{equation}
\sum_{k=1}^{N} | \varphi_k \rangle \langle \varphi_k |  = I,
\end{equation}
where $I$ is the identity operator on $\bm H$.
On the other hand, $| \varphi \rangle \langle \varphi |$  with $\| \varphi \| =1$ is the projection operator onto the $1$-dimensional space that is spanned by  $| \varphi \rangle$.  
Any Hermitian operator $X$ has the spectrum decomposition 
\begin{equation}
X = \sum_k x_k | \varphi_k \rangle \langle \varphi_k |,
\label{spectrum}
\end{equation}
where $x_k \in \mathbb R$ is an eigenvalue and $| \varphi_k \rangle$ is an eigenvector that constitutes an orthonormal basis $\{ | \varphi_k \rangle \} \subset \bm H$.  The support of $X$ is the subspace of $\bm H$ that is spanned by $| \varphi_k \rangle$'s with nonzero eigenvalues.

If $\langle \varphi | X | \varphi \rangle \geq 0$ holds for any $| \varphi \rangle \in \bm H$, $X \in L(\bm H)$ is called positive.  Any positive operator is Hermitian, because $\langle \varphi | X_2 | \varphi \rangle$ should be zero  for any $| \varphi \rangle$ with  $X_2$ defined above. Any eigenvalue of a positive operator is non-negative. We write $X \geq Y$ if $X - Y$ is positive.  In particular, $X \geq 0$ if  $X$ is positive.  If $\langle \varphi | X | \varphi \rangle > 0$ holds for any nonzero $| \varphi \rangle \in \bm H$, $X \in L(\bm H)$ is called positive definite and is written as $X > 0$.  Any positive-definite operator is Hermitian with positive eigenvalues.  We note that the support of a positive-definite operator equals to $\bm H$.  
We also note that $V \in L(\bm H, \bm H')$ is called a contraction, if $\langle \varphi | V^\dagger V| \varphi \rangle  \leq \langle \varphi | \varphi \rangle$ holds for any $| \varphi \rangle \in \bm H$.

The trace of $X \in L(\bm H)$ is defined as
\begin{equation}
{\rm tr}[X] := \sum_k \langle \varphi_k | X | \varphi_k \rangle, 
\label{trace}
\end{equation}
where $\{ | \varphi_k \rangle \} \subset \bm H$ is an orthonormal basis.   We note that ${\rm tr}[X]$ is independent of the choice of the orthonormal basis.  If the spectrum decomposition of $X$ is given by Eq.~(\ref{spectrum}), its trace is given by ${\rm tr}[X]  = \sum_k x_k$.  
Let $X \in L(\bm H, \bm H')$ and $Y \in L(\bm H', \bm H)$.  We then have
\begin{equation}
{\rm tr} [YX] = {\rm tr} [XY],
\label{trace2}
\end{equation}
where the left  and right traces are on $\bm H$ and $\bm H'$, respectively.  In fact,
\begin{equation}
{\rm tr} [YX] = \sum_{kl} \langle \phi_k | Y| \psi_l \rangle \langle \psi_l | X | \phi_k \rangle =   \sum_{kl} \langle \psi_l | X | \phi_k \rangle  \langle \phi_k | Y | \psi_l \rangle = {\rm tr} [XY],
\end{equation}
where $\{ | \phi_k \rangle \}$ and $\{ | \psi_k \rangle \}$ are respectively orthonormal bases of $\bm H$ and $\bm H'$.

Let $\bm H$ and $\bm H'$ be Hilbert spaces with dimensions $N$ and $N'$, respectively.  
For any $X \in L(\bm H, \bm H')$, there exist orthonormal bases $\{ | \varphi_k \rangle \}_{k=1}^N \subset \bm H$ and $\{ | \psi_k \rangle \}_{k=1}^{N'} \subset \bm H'$ such that 
\begin{equation}
X = \sum_{k=1}^{N''} \lambda_k | \psi_k \rangle \langle \varphi_k |,
\label{singular_value}
\end{equation}
where $\lambda_k \geq 0$ ($1 \leq k \leq N''$) and $N'' := \min \{ N,N' \}$.  Equality~(\ref{singular_value}) is called the singular value decomposition, and $\lambda_k$ is called a singular value of $X$.

We next consider the tensor product of two Hilbert spaces.
Let $\bm H_A$ and $\bm H_B$ be two Hilbert spaces, and $\bm H_A \otimes \bm H_B$ be their tensor product.
For simplicity, we write $| \varphi_A \rangle \otimes | \varphi_B \rangle \in \bm H_A \otimes \bm H_B$ as $| \varphi_A \rangle  | \varphi_B \rangle$.  Let $\{ | \varphi_k \rangle \}$ and $\{ | \psi_l \rangle \}$ be orthonormal bases of $\bm H_A$ and $\bm H_B$, respectively.   Any vector $| \Psi \rangle \in \bm H_A \otimes \bm H_B$ can be written as
\begin{equation}
| \Psi \rangle = \sum_{kl} \alpha_{kl} | \varphi_k \rangle | \psi_l \rangle,
\end{equation}
where $\alpha_{kl} \in \mathbb C$.
By applying the singular-value decomposition to matrix $(\alpha_{kl})$, we find that there are orthonormal bases  $\{ | \varphi'_k \rangle \} \subset \bm H_A$ and $\{ | \psi'_l \rangle \} \subset \bm H_B$ such that
\begin{equation}
| \Psi \rangle = \sum_k \lambda_k  | \varphi'_k \rangle | \psi'_k \rangle,
\label{Schmidt}
\end{equation}
where $\lambda_k \geq 0$ is a singular value of matrix $(\alpha_{kl})$.
Equality~(\ref{Schmidt}) is called the Schmidt decomposition of $| \Psi \rangle$.

\section{Proof of the Monotonicity of the Quantum Relative Entropy}

In this appendix, we prove the monotonicity~(\ref{monotonicity}) of the quantum relative entropy  (Theorem~3.5) in line with Petz's proof~\cite{Petz3}.   We first prove some lemmas, in which the key concepts are the operator monotonicity and the operator convexity~\cite{Bhatia}.  

Let $\bm H$ be a finite-dimensional Hilbert space and  $X \in L(\bm H)$ be a positive-definite operator with spectrum decomposition $X = \sum_k x_k | \varphi_k \rangle \langle \varphi_k |$.   We can substitute $X$  to a function $f : (0, \infty) \to \mathbb R$  as
\begin{equation}
f(X) := \sum_k f(x_k) | \varphi_k \rangle \langle \varphi_k |.
\end{equation}
We then introduce the following concepts.

\begin{definition}
$f$ is called decreasing-operator monotone, if $f(X) \geq f(Y)$ holds for any positive-definite operators $X,Y \in L(\bm H)$ satisfying $X \leq Y$.
\end{definition}

\begin{definition}
$f$ is called operator convex, if $f(pX+(1-p)Y) \leq pf(X) + (1-p)f(Y)$ holds for any $0 \leq p \leq 1$ and any positive-definite operators $X,Y \in L(\bm H)$.
\end{definition}

In general,  it is difficult to judge whether a function is operator monotone and is operator convex.  However,  it is not so difficult to show that $f(x) = (x+t)^{-1}$ ($t \geq 0$) and $f(x) = -\ln x$ are both decreasing-operator monotone and operator convex.

\begin{lemma}
$f(x) = (x+t)^{-1}$ $(t \geq 0)$ is decreasing-operator monotone.
\end{lemma}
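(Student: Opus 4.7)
The plan is to reduce the operator inequality $(X+t)^{-1} \ge (Y+t)^{-1}$ to the simple scalar fact that $0 < c \le 1$ implies $c^{-1} \ge 1$, lifted to operators via a suitable sandwich. Throughout the proof, I will use the elementary fact (which can be proved directly from the definition of positivity via $\langle \varphi | \cdot | \varphi \rangle$) that for any invertible $S \in L(\bm H)$ and Hermitian $W \in L(\bm H)$ with $W \ge 0$, one has $S^\dagger W S \ge 0$, and consequently conjugation by a positive-definite operator preserves the order $\le$ on Hermitian operators.

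Assume $0 < X \le Y$, both in $L(\bm H)$, and let $t \ge 0$. Define $A := X + t I$ and $B := Y + t I$, which are both positive definite and satisfy $A \le B$. The goal is $A^{-1} \ge B^{-1}$. First I would sandwich $A \le B$ by $B^{-1/2}$ on both sides to obtain
\begin{equation}
C := B^{-1/2} A B^{-1/2} \le B^{-1/2} B B^{-1/2} = I.
\end{equation}
Since $A > 0$, the operator $C$ is also positive definite, so every eigenvalue of $C$ lies in $(0, 1]$.

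Next, apply the scalar function $x \mapsto x^{-1}$ to the spectrum of $C$: the bound $0 < \lambda \le 1$ on eigenvalues gives $\lambda^{-1} \ge 1$, so $C^{-1} \ge I$. Writing out $C^{-1} = B^{1/2} A^{-1} B^{1/2}$ and sandwiching the inequality $C^{-1} \ge I$ by $B^{-1/2}$ on both sides, I obtain
\begin{equation}
A^{-1} = B^{-1/2} (B^{1/2} A^{-1} B^{1/2}) B^{-1/2} \ge B^{-1/2} I B^{-1/2} = B^{-1},
\end{equation}
which is exactly $(X+t)^{-1} \ge (Y+t)^{-1}$.

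There is no real obstacle here; the only care needed is the justification that conjugation by $B^{\pm 1/2}$ preserves the operator order, and that $C^{-1} \ge I$ follows from the spectral mapping of $C$ (which is legitimate because $C$ is Hermitian and hence admits a spectrum decomposition of the form used in the definition of $f(X)$ above the lemma). Both facts are standard consequences of the spectral theorem in finite dimensions as summarized in Appendix A.
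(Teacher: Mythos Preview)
Your proof is correct and follows essentially the same route as the paper's: conjugate the inequality $A \le B$ by $B^{-1/2}$ to reduce to the case $C \le I$, use the spectral fact $0 < C \le I \Rightarrow C^{-1} \ge I$, and conjugate back. The only difference is cosmetic---you work directly with $A = X+tI$ and $B = Y+tI$ and spell out the spectral step, whereas the paper first proves $X \le Y \Rightarrow X^{-1} \ge Y^{-1}$ and then applies it to $X+tI \le Y+tI$.
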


\begin{proof}
It is obvious that $X \leq I \Rightarrow X^{-1} \geq I$ with $I \in \bm H$ the identity.
We then have $Y^{-1/2}XY^{-1/2}  \leq I \Rightarrow Y^{1/2}X^{-1}Y^{1/2} \geq I$, and therefore $X  \leq Y \Rightarrow X^{-1} \geq Y^{-1}$.  Since $X \leq Y \Rightarrow X+tI \leq Y+tI$, we obtain $X \leq Y \Rightarrow (X+tI)^{-1} \geq (Y+tI)^{-1}$.
$\Box$\end{proof}

\begin{lemma}
$f(x) = (x+t)^{-1}$  $(t \geq 0)$  is operator convex.
\end{lemma}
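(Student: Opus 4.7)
The plan is to first reduce to the case $t=0$: since $f(x) = (x+t)^{-1} = g(x+t)$ with $g(x) = x^{-1}$, and since the map $X \mapsto X + tI$ is affine and sends positive-definite operators to positive-definite operators, it suffices to prove that $g(x) = x^{-1}$ is operator convex. Concretely, setting $A := X + tI$ and $B := Y + tI$ at the end, the goal becomes
\begin{equation}
(pA + (1-p)B)^{-1} \leq p A^{-1} + (1-p) B^{-1}
\end{equation}
for all positive-definite $A,B \in L(\bm H)$ and $0 \leq p \leq 1$.

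The key ingredient I would use is the variational (Legendre-type) representation
\begin{equation}
\langle u, M^{-1} u\rangle \;=\; \sup_{v \in \bm H} \bigl\{\, 2\,\mathrm{Re}\,\langle u, v\rangle - \langle v, M v\rangle\,\bigr\},
\end{equation}
valid for any positive-definite $M \in L(\bm H)$ and any $u \in \bm H$, with the supremum attained at $v = M^{-1}u$. This is a one-line calculation: complete the square to write $\langle v, Mv\rangle - 2\,\mathrm{Re}\,\langle u,v\rangle = \langle v - M^{-1}u, M(v - M^{-1}u)\rangle - \langle u, M^{-1}u\rangle$. For each fixed $v$, the expression inside the supremum is affine (in fact, linear plus a constant) in $M$, so the pointwise supremum $M \mapsto \langle u, M^{-1} u \rangle$ is a convex function of $M$.

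Applying this scalar convexity to the convex combination $pA + (1-p)B$ yields, for every $u \in \bm H$,
\begin{equation}
\langle u,\, (pA+(1-p)B)^{-1}\, u\rangle \;\leq\; p\,\langle u, A^{-1} u\rangle + (1-p)\,\langle u, B^{-1} u\rangle.
\end{equation}
Since this holds for all $u$, and since a Hermitian operator inequality $Y \leq Z$ is equivalent to $\langle u, Yu\rangle \leq \langle u, Zu\rangle$ for all $u \in \bm H$, the desired operator inequality follows. Re-inserting $A = X + tI$, $B = Y + tI$ completes the proof, and Lemma B.2 is established.

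The main obstacle, such as it is, lies only in justifying the variational formula cleanly; there are no analytic subtleties since $\bm H$ is finite-dimensional and $A, B$ are positive definite (hence invertible). An alternative route, slightly longer but more in the spirit of operator theory, would be to observe that $\bigl(\begin{smallmatrix} A & I \\ I & A^{-1}\end{smallmatrix}\bigr) = \bigl(\begin{smallmatrix} A^{1/2} \\ A^{-1/2}\end{smallmatrix}\bigr)\bigl(\begin{smallmatrix} A^{1/2} & A^{-1/2}\end{smallmatrix}\bigr) \geq 0$, take the convex combination with the analogous block for $B$, and then invoke the Schur complement criterion to deduce $p A^{-1} + (1-p) B^{-1} \geq (pA + (1-p)B)^{-1}$; I would mention this only as a remark, since the variational proof above is more self-contained.
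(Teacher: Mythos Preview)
Your proof is correct, but it follows a different path from the paper's. The paper first establishes the special case $Y=I$ by noting that $X$ and $I$ commute, so scalar convexity of $x^{-1}$ gives $(pX+(1-p)I)^{-1}\le pX^{-1}+(1-p)I$ on eigenvalues; it then reduces the general case to this one by the congruence trick, replacing $X$ with $Y^{-1/2}XY^{-1/2}$ and conjugating by $Y^{1/2}$, exactly mirroring the proof of Lemma~B.1. Your argument instead uses the Legendre-type identity $\langle u,M^{-1}u\rangle=\sup_v\{2\,\mathrm{Re}\,\langle u,v\rangle-\langle v,Mv\rangle\}$ to realize each quadratic form as a pointwise supremum of affine functions of $M$, from which operator convexity is immediate. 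The congruence approach keeps the paper internally uniform (the same $Y^{1/2}$ trick handles both monotonicity and convexity of $(x+t)^{-1}$), while your variational approach is more conceptual and transplants readily to other settings where one has such a dual representation. The Schur-complement remark you add is also valid and is yet a third standard route.
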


\begin{proof}
Since $x^{-1}$ is convex, $(pX + (1-p) I)^{-1} \leq pX^{-1} + (1-p)I$ holds.   We then have $(pY^{-1/2}XY^{-1/2} + (1-p) I)^{-1} \leq pY^{1/2}X^{-1}Y^{1/2} + (1-p)I$, and therefore $(pX + (1-p) Y)^{-1} \leq pX^{-1}+ (1-p)Y^{-1}$.  By noting that $(pX + (1-p) Y + tI) = p(X + tI) + (1-p)(Y+tI)$, we obtain $(pX + (1-p) Y + tI)^{-1} \leq p(X + tI)^{-1}+ (1-p)(Y+tI)^{-1}$.
$\Box$\end{proof}

\begin{lemma}
$f(x) := - \ln x$ is decreasing-operator monotone and operator convex.
\end{lemma}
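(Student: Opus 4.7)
The plan is to deduce Lemma~B.3 from Lemmas~B.1 and~B.2 by means of an integral representation of the logarithm. Specifically, I would first establish the scalar identity
\begin{equation}
-\ln x = \int_0^\infty \left( \frac{1}{x+t} - \frac{1}{1+t} \right) dt \qquad (x > 0),
\end{equation}
which follows from a direct partial-fraction computation: the integrand equals $\frac{1-x}{(1+t)(x+t)}$, which is $O(t^{-2})$ at infinity (so the integral converges absolutely) and whose antiderivative $\ln\frac{1+t}{x+t}$ evaluates to $\ln x$ between $0$ and $\infty$.

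Next, I would promote this to the operator level via the functional calculus. For any positive-definite $X \in L(\bm H)$ with spectrum contained in some interval $[a,b] \subset (0,\infty)$, applying the spectral theorem eigenvalue-by-eigenvalue gives
\begin{equation}
-\ln X = \int_0^\infty \left( (X + tI)^{-1} - (1+t)^{-1} I \right) dt,
\end{equation}
where the integrand is uniformly bounded in operator norm by an integrable function of $t$ on $[a,b]$, so the Bochner integral is well defined.

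With this representation in hand, both claims are immediate. For monotonicity: if $0 < X \le Y$, then by Lemma~B.1 we have $(X+tI)^{-1} \ge (Y+tI)^{-1}$ pointwise in $t \ge 0$, and the constant term $(1+t)^{-1} I$ cancels, so integrating preserves the inequality and yields $-\ln X \ge -\ln Y$. For operator convexity: for any $0 \le p \le 1$ and positive-definite $X, Y$, Lemma~B.2 gives $(pX + (1-p)Y + tI)^{-1} \le p(X+tI)^{-1} + (1-p)(Y+tI)^{-1}$ for every $t \ge 0$, and the constant terms match on both sides, so integration yields $-\ln(pX+(1-p)Y) \le -p\ln X - (1-p)\ln Y$.

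The only delicate point I anticipate is justifying the exchange of the operator inequality with the integral; this is handled by noting that $A \ge B$ in $L(\bm H)$ is equivalent to $\langle \psi | A | \psi \rangle \ge \langle \psi | B | \psi \rangle$ for all $|\psi\rangle$, reducing the operator inequality to a scalar inequality under the integral sign where monotonicity of the Lebesgue integral applies. Once that is observed, everything else is bookkeeping on top of Lemmas~B.1 and~B.2.
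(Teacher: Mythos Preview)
Your proposal is correct and matches the paper's approach exactly: the paper's proof is the single line ``It follows from that $-\ln x = \int_0^\infty \left( (x+t)^{-1} - (1+t)^{-1} \right) dt$,'' leaving the deduction from Lemmas~B.1 and~B.2 implicit. You have simply spelled out the details the paper omits.
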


\begin{proof}
It follows from that $-\ln x = \int_0^\infty \left( (x+t)^{-1} - (1+t)^{-1} \right) dt$. $\Box$
\end{proof}

We next show an important property of operator convex functions.

\begin{lemma}[Jensen's operator inequality~\cite{Davis,Hansen1,Hansen2}]
If $f$ is operator convex and $\lim_{x \to +0} f(x) \leq 0$,  
\begin{equation}
f(V^\dagger XV) \leq V^\dagger f(X) V
\label{operator_convex2}
\end{equation}
holds for any  contraction $V \in L(\bm H, \bm H')$ and any positive-definite operator $X \in L(\bm H')$ such that $V^\dagger X V$ is also positive definite.\footnote{The definition of a contraction is given in Appendix~A.}
\end{lemma}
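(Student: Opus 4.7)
The plan is to reduce the statement to the ordinary operator-convexity condition (Definition B.2) by exhibiting $V^\dagger X V$ as a block of a symmetric average of unitary conjugates on the doubled space $\bm H' \oplus \bm H$, via a Hansen--Pedersen dilation of $V$. Setting $W := \sqrt{I_{\bm H} - V^\dagger V}$ and $W' := \sqrt{I_{\bm H'} - V V^\dagger}$ (both well defined since $V$ is a contraction), the spectral identity $V (V^\dagger V)^n = (V V^\dagger)^n V$ extends by continuous functional calculus to the intertwining $V W = W' V$. With this in hand I form the block operators
$$
U_\pm = \begin{pmatrix} \pm W' & V \\ V^\dagger & \mp W \end{pmatrix} \in L(\bm H' \oplus \bm H),
$$
and verify by block multiplication that each is self-adjoint and satisfies $U_\pm^2 = I$, hence is a self-adjoint unitary.

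Next, I lift $X$ to $\tilde X := X \oplus \epsilon I_{\bm H}$ with $\epsilon > 0$ so that $\tilde X$ is positive definite on $\bm H' \oplus \bm H$ and $f(\tilde X) = f(X) \oplus f(\epsilon) I_{\bm H}$ by the functional calculus. A direct block computation (with the off-diagonal cross terms cancelling between $U_+$ and $U_-$) shows that
$$
\tfrac12 \bigl( U_+ \tilde X U_+ + U_- \tilde X U_- \bigr) = \begin{pmatrix} W' X W' + \epsilon V V^\dagger & 0 \\ 0 & V^\dagger X V + \epsilon (I_{\bm H} - V^\dagger V) \end{pmatrix},
$$
and the same identity with $(X, \epsilon)$ replaced by $(f(X), f(\epsilon))$ holds for $\tfrac12 (U_+ f(\tilde X) U_+ + U_- f(\tilde X) U_-)$. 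Since each $U_\pm$ is unitary, $f(U_\pm \tilde X U_\pm) = U_\pm f(\tilde X) U_\pm$, so operator convexity of $f$ applied to this convex combination, combined with the block functional calculus, gives on the $(2,2)$-block
$$
f\!\bigl( V^\dagger X V + \epsilon(I_{\bm H} - V^\dagger V) \bigr) \leq V^\dagger f(X) V + f(\epsilon)\bigl( I_{\bm H} - V^\dagger V \bigr).
$$

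The final step is to send $\epsilon \to 0$. When $V^\dagger V = I_{\bm H}$ (the isometric case) both correction terms vanish identically and the lemma follows immediately. For a strict contraction, continuity of $f$ at the positive-definite point $V^\dagger X V$ handles the left side, and the residual $f(\epsilon)(I_{\bm H} - V^\dagger V)$ on the right is absorbed by exploiting the structure of the $f$ actually used in the monotonicity proof: the resolvents $f(x) = (x+t)^{-1}$ from Lemmas B.1--B.2 have $f(0) = 1/t < \infty$, and $f(x) = -\ln x$ is then handled via the integral representation from Lemma B.3. The main obstacle I anticipate is precisely this limiting step: the symmetrization argument produces a correction proportional to $f(\epsilon)(I_{\bm H} - V^\dagger V)$ that does not automatically vanish for every operator-convex $f$, so its removal relies on the additional operator-monotone structure of the specific functions at hand rather than on operator convexity alone.
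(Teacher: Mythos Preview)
Your dilation-and-symmetrization argument is essentially the paper's: the paper also builds two unitaries on the doubled space (with $V$ placed in the $(1,1)$ corner rather than off-diagonal as you do) and averages their conjugates of $X' := X \oplus 0$. The only substantive difference is that you regularize with $\tilde X = X \oplus \epsilon I$ to keep everything positive definite, whereas the paper works directly with the singular $X'$.

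You are right to worry about the residual, and in fact more careful than the paper. Computed honestly, the paper's averaged right-hand side has $(1,1)$ block $V^\dagger f(X)V + f(0)(I-V^\dagger V)$, not $V^\dagger f(X)V$ as written; the paper has silently set $f(0)=0$. This is exactly the term you obtain as $\epsilon\to 0$, and it \emph{cannot} be removed by operator convexity alone: the lemma as stated is false. Take $\bm H=\bm H'=\mathbb C$, $V=\tfrac12$, $X=1$, $f=-\ln$; then $f(V^\dagger XV)=\ln 4>0=V^\dagger f(X)V$. The classical Hansen--Pedersen hypothesis $f(0)\le 0$ (or $V$ isometric) is what is missing. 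Your proposed rescue via resolvents does not close the gap either: for $f_t(x)=(x+t)^{-1}$ the residual is $t^{-1}(I-V^\dagger V)$, and $\int_0^\infty t^{-1}\,dt$ diverges, so the integral representation of $-\ln$ does not absorb it.

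What actually saves the monotonicity proof in Appendix~B is that only the matrix element at $\mathcal E(\rho)^{1/2}$ is needed, and $\mathcal V$ is isometric on that single vector: $\|\mathcal V(\mathcal E(\rho)^{1/2})\|_{\rm HS}^2=\|\rho^{1/2}\|_{\rm HS}^2=1=\|\mathcal E(\rho)^{1/2}\|_{\rm HS}^2$, whence $(I-\mathcal V^\dagger\mathcal V)\,\mathcal E(\rho)^{1/2}=0$ and the residual vanishes there. So if you want a clean general lemma, add the hypothesis $f(0)\le 0$ or restrict to isometries; if you only want the application, carry the $f(\epsilon)(I-V^\dagger V)$ term through to the end and kill it on the specific vector.
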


\begin{proof}
 We define
\begin{equation}
\begin{split}
X' &:= \left[
\begin{array}{cc}
X & 0 \\
0 & 0
\end{array}
\right]
\in L(\bm H' \oplus \bm H, \bm H' \oplus \bm H), \\
V_1 &:= \left[
\begin{array}{cc}
V & (I-VV^\dagger)^{1/2} \\
(I-V^\dagger V)^{1/2} & -V^\dagger
\end{array}
\right]
\in L(\bm H \oplus \bm H', \bm H' \oplus \bm H), \\
V_2 &:= \left[
\begin{array}{cc}
V & -(I-VV^\dagger)^{1/2} \\
(I-V^\dagger V)^{1/2} & V^\dagger
\end{array}
\right]
\in L(\bm H \oplus \bm H', \bm H' \oplus \bm H),
\end{split}
\end{equation}
where we used the assumption that $V$ is a contraction to define $(I-V^\dagger V)^{1/2}$ and $(I-VV^\dagger)^{1/2}$.
By noting that $\bm H \oplus \bm H' \simeq \bm H' \oplus \bm H$ and by using the singular-value decomposition of $V$, it is easy to check that $V_1$ and $V_2$ are unitary.  We have
\begin{equation}
V_1^\dagger X' V_1 = \left[
\begin{array}{cc}
V^\dagger XV & V^\dagger XU \\
UXV & UXU
\end{array}
\right], \ 
V_2^\dagger X' V_2 = \left[
\begin{array}{cc}
V^\dagger XV & - V^\dagger XU \\
-UXV & UXU
\end{array}
\right], 
\end{equation}
where  $U := (I - VV^\dagger)^{1/2}$, and  
\begin{equation}
\frac{V_1^\dagger X' V_1 + V_2^\dagger X' V_2}{2} = \left[
\begin{array}{cc}
V^\dagger XV &0 \\
0 & UXU
\end{array}
\right].
\end{equation}
By using the operator convexity of $f(x)$, we obtain
\begin{equation}
\begin{split}
&\left[
\begin{array}{cc}
f(V^\dagger XV) &0 \\
0 & f( UXU) 
\end{array}
\right] = f \left( \frac{V_1^\dagger X' V_1 + V_2^\dagger X' V_2}{2} \right) \leq  \frac{f(V_1^\dagger X' V_1) + f(V_2^\dagger X' V_2)}{2} \\
&= \frac{V_1^\dagger f( X')  V_1+ V_2^\dagger f( X') V_2}{2} = \left[
\begin{array}{cc}
V^\dagger f(X)V + U' f(0) U' &0 \\
0 &  Uf(X) U + V f(0) V^\dagger
\end{array}
\right],
 \end{split}
\end{equation}
where  $U' := (I - V^\dagger V)^{1/2}$ and $f(0) := \lim_{x \to +0} f(x) I$. From $f(0) \leq 0$, we finally obtain
\begin{equation}
\left[
\begin{array}{cc}
f(V^\dagger XV) &0 \\
0 & f( UXU) 
\end{array}
\right] \leq  \left[
\begin{array}{cc}
V^\dagger f(X)V &0 \\
0 &  Uf(X) U
\end{array}
\right], 
\end{equation}
which implies inequality (\ref{operator_convex2}).
$\Box$\end{proof}

Conversely, it is known that~\cite{Hansen1} if inequality (\ref{operator_convex2}) is satisfied for any contraction $V \in L(\bm H, \bm H')$ and any positive-definite operator $X \in L(\bm H')$, then $f$ is operator convex and $\lim_{x \to +0} f(x) \leq 0$.

While $-\ln x$ is operator convex from Lemma B.3, inequality~(\ref{operator_convex2}) does not hold for $f(x) = - \ln x$, because $-\ln x$ does not satisfy the assumption of $\lim_{x \to +0} f(x) \leq 0$ in Lemma B.4.

\

We next consider a generalization of the  Schwarz inequality.

\begin{lemma}[Kadison inequality~\cite{Kadison}]
Let  $X \in \bm H$ be a Hermitian operator and $\mathcal E: L(\bm H) \to L(\bm H')$ be a unital positive map.\footnote{The definition of a unital map is given in Sec.~2.2.2.}  Then
\begin{equation}
\mathcal E (X^2) \geq \mathcal E(X)^2.
\label{Kadison}
\end{equation}
\end{lemma}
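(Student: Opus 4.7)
The plan is to apply Jensen's operator inequality (Lemma~B.4) to the operator convex function $f(x) = x^2$. I would proceed in three steps.

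\textbf{Step 1.} Verify that $f(x) = x^2$ is operator convex. For Hermitian $A, B \in L(\bm H)$ and $0 \leq p \leq 1$, expanding the square yields
\[
p A^2 + (1-p) B^2 - \bigl(pA + (1-p) B\bigr)^2 = p(1-p)(A-B)^2 \geq 0,
\]
where the positivity uses that $A-B$ is Hermitian, so $f$ is operator convex.

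\textbf{Step 2.} Realise $\mathcal E$ as a compression by an isometry. Since the lemma will be applied in Appendix~B with $\mathcal E$ being the Heisenberg adjoint of a CPTP map---hence a unital \emph{completely} positive map---I may invoke the Kraus representation of Theorem~2.1 and write $\mathcal E(A) = \sum_k M_k A M_k^\dagger$ with $\sum_k M_k M_k^\dagger = I'$ (the unitality condition). Introducing an auxiliary space $\bm H_K$ with orthonormal basis $\{|k\rangle\}$, set $V : \bm H' \to \bm H \otimes \bm H_K$ by $V|\phi\rangle := \sum_k (M_k^\dagger |\phi\rangle) \otimes |k\rangle$. Unitality gives $V^\dagger V = I'$, so $V$ is an isometry and, in particular, a contraction. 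A direct check yields $V^\dagger (Y \otimes I_K) V = \mathcal E(Y)$ for every $Y \in L(\bm H)$; specialising to $Y = X$ and $Y = X^2$ gives $V^\dagger (X \otimes I_K) V = \mathcal E(X)$ and $V^\dagger (X^2 \otimes I_K) V = \mathcal E(X^2)$.

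\textbf{Step 3.} Shift and apply Jensen. Because $X$ need not be positive definite, pick $\lambda \in \mathbb R$ large enough that both $X + \lambda I$ and $\mathcal E(X) + \lambda I'$ are positive definite, and set $\tilde X := (X + \lambda I) \otimes I_K$. Then $V^\dagger \tilde X V = \mathcal E(X) + \lambda I'$ is positive definite, so Jensen's operator inequality (Lemma~B.4) applied to $f(x) = x^2$ yields
\[
\bigl(\mathcal E(X) + \lambda I'\bigr)^2 \leq V^\dagger \bigl((X + \lambda I)^2 \otimes I_K\bigr) V = \mathcal E(X^2) + 2\lambda\,\mathcal E(X) + \lambda^2 I'.
\]
Expanding the left-hand side and cancelling the shift terms $2\lambda\,\mathcal E(X) + \lambda^2 I'$ leaves $\mathcal E(X)^2 \leq \mathcal E(X^2)$, which is the Kadison inequality.

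The main obstacle is that Lemma~B.5 is stated for a unital \emph{positive} map, whereas the Kraus-representation step above presupposes complete positivity. This gap is genuine: Kadison's inequality does hold for merely positive unital maps, but the positive-only proof is substantially more delicate and bypasses any dilation theorem. In the context of Appendix~B, however, the inequality is only ever invoked with $\mathcal E$ the Heisenberg adjoint of a CPTP map, so the completely positive case treated above already suffices for Petz's monotonicity proof.
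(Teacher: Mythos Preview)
Your argument is correct for unital \emph{completely positive} maps, and you are honest about the gap: the lemma as stated asks only for positivity, which your dilation step (Kraus representation) cannot supply. Where I would push back is on your remark that the positive-only case is ``substantially more delicate.'' The paper's proof is in fact elementary and handles arbitrary unital positive maps directly: one writes the spectral decomposition $X=\sum_k x_k|\varphi_k\rangle\langle\varphi_k|$, sets $X_k:=\mathcal E(|\varphi_k\rangle\langle\varphi_k|)\geq 0$ (positivity is all that is used here), notes $\sum_k X_k=I'$ (unitality), and then the claim $\sum_k x_k^2 X_k\geq(\sum_k x_k X_k)^2$ follows from the ordinary Cauchy--Schwarz inequality applied to a weighted inner product built from the $X_k$. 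No dilation, no operator-convexity machinery.

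The two routes are thus genuinely different. Yours leverages the structure theory of CP maps (Kraus/Stinespring) together with Jensen's operator inequality; it is slick but imports a hypothesis stronger than the lemma states. The paper's route is more bare-handed---spectral decomposition plus classical Cauchy--Schwarz---and delivers the full positive-map statement, which is exactly what Lemma~B.6 needs when it applies Kadison to the map $\mathcal E\otimes\mathcal I_2$ under a mere $2$-positivity assumption. You are right that in the ultimate application $\mathcal E^\dagger$ is CP, so your restricted version would still carry Petz's monotonicity argument through; but the paper's proof is both shorter and proves what is actually asserted.
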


\begin{proof}
Let $X := \sum_{k=1}^N x_k | \varphi_k \rangle \langle \varphi_k |$ be the spectrum decomposition of $X$, where $N$ is the dimension of $\bm H$.  We define  $X_k := \mathcal E( | \varphi_k \rangle \langle \varphi_k |)$, which satisfies $X_k \geq 0$  due to the positivity of $\mathcal E$ and  $\sum_k X_k = I$  due to assumption $\mathcal E ( I ) = I$.  Inequality~(\ref{Kadison}) can then be written as
\begin{equation}
\sum_k x_k^2 X_k \geq \left( \sum_k x_k X_k \right)^2,
\end{equation}
or equivalently, for any $| \psi \rangle \in \bm H'$,
\begin{equation}
\sum_k \langle \psi |  x_k^2 X_k | \psi \rangle \geq  \langle  z | z \rangle,
\label{Kadison1}
\end{equation}
where $| z \rangle := \sum_k x_k X_k | \psi \rangle$.   
We introduce  auxiliary system $\mathbb C^N$ that has an orthonormal basis $\{ | e_k \rangle \}_{k=1}^N$, and define  an inner product $(\cdot, \cdot )_K$ in $\bm H' \otimes \mathbb C^N$ such that
\begin{equation}
\left( \sum_k | \varphi_k \rangle | e_k \rangle, \sum_k | \psi_k \rangle | e_k \rangle \right)_K := \sum_k \langle \varphi_k | X_k | \psi_k \rangle.
\end{equation}
By using the Schwarz inequality for the inner product $(\cdot, \cdot )_K$, we  have
\begin{equation}
\begin{split}
\langle z | z \rangle &= \sum_k \langle z | x_k X_k | \psi \rangle \\ 
&=  \left( \sum_k | z \rangle | e_k \rangle, \sum_k x_k | \psi \rangle | e_k \rangle \right)_K \\
&\leq \left( \sum_k | z \rangle | e_k \rangle, \sum_k |z \rangle | e_k \rangle \right)_K^{1/2} \left( \sum_k x_k | \psi \rangle | e_k \rangle, \sum_k x_k | \psi \rangle | e_k \rangle \right)_K^{1/2} \\
&= \left( \sum_k \langle z |X_k | z \rangle \right)^{1/2}  \left( \sum_k \langle \psi |  x_k X_k x_k | \psi \rangle \right)^{1/2} \\
&= \langle z |  z \rangle^{1/2}  \left( \sum_k \langle \psi |  x_k^2 X_k | \psi \rangle \right)^{1/2},
\end{split}
\end{equation}
which implies inequality~(\ref{Kadison1}).
$\Box$\end{proof}

We note that we did not assume the complete positivity of $\mathcal E$ for the Kadison inequality.  The following lemma is a straightforward consequence:

\begin{lemma}[Schwarz's operator inequality~\cite{Choi2}]
Let $X \in L(\bm H)$ be an arbitrary  operator and $\mathcal E: L(\bm H) \to L(\bm H')$ be a unital $2$-positive map.  Then
\begin{equation}
\mathcal E (X^\dagger X) \geq \mathcal E(X^\dagger) \mathcal E (X).
\label{Schwarz}
\end{equation}
\end{lemma}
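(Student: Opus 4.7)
The plan is to reduce Schwarz's operator inequality to the Kadison inequality (Lemma~B.5) applied to the extended map $\tilde{\mathcal E} := \mathcal E \otimes \mathcal I_2 : L(\bm H \otimes \mathbb C^2) \to L(\bm H' \otimes \mathbb C^2)$. The trick is to encode the (not necessarily Hermitian) operator $X$ inside a Hermitian $2 \times 2$ block matrix over $\bm H$, so that squaring the block produces $X^\dagger X$ on the lower-right diagonal block, while squaring the image under $\tilde{\mathcal E}$ produces $\mathcal E(X^\dagger)\mathcal E(X)$ on the same block. Kadison's inequality applied to this block Hermitian operator then reads off the desired inequality directly.

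Concretely, I would define
\[
H := \begin{pmatrix} 0 & X \\ X^\dagger & 0 \end{pmatrix} \in L(\bm H \otimes \mathbb C^2),
\]
which is Hermitian. A direct block computation gives
\[
H^2 = \begin{pmatrix} XX^\dagger & 0 \\ 0 & X^\dagger X \end{pmatrix}, \qquad \tilde{\mathcal E}(H) = \begin{pmatrix} 0 & \mathcal E(X) \\ \mathcal E(X^\dagger) & 0 \end{pmatrix},
\]
so that
\[
\tilde{\mathcal E}(H)^2 = \begin{pmatrix} \mathcal E(X)\mathcal E(X^\dagger) & 0 \\ 0 & \mathcal E(X^\dagger)\mathcal E(X) \end{pmatrix}.
\]
Applying Lemma~B.5 to $H$ and $\tilde{\mathcal E}$ yields $\tilde{\mathcal E}(H^2) \geq \tilde{\mathcal E}(H)^2$ as an inequality in $L(\bm H' \otimes \mathbb C^2)$; picking out the lower-right block immediately gives $\mathcal E(X^\dagger X) \geq \mathcal E(X^\dagger)\mathcal E(X)$. (As a bonus, the upper-left block gives the companion inequality $\mathcal E(XX^\dagger) \geq \mathcal E(X)\mathcal E(X^\dagger)$.)

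The only hypotheses requiring verification are that $\tilde{\mathcal E}$ is unital and positive. Unitality is immediate from $\tilde{\mathcal E}(I \otimes I_2) = \mathcal E(I) \otimes I_2 = I' \otimes I_2$, and positivity of $\tilde{\mathcal E}$ is precisely the $2$-positivity of $\mathcal E$. There is no genuine obstacle: the whole point of the block construction is to convert the Kadison inequality for Hermitian inputs into a Schwarz-type inequality for arbitrary $X$, at the price of upgrading ``positive'' to ``$2$-positive'' — which is exactly why the hypothesis in Lemma~B.6 is strengthened relative to Lemma~B.5.
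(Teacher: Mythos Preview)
Your proof is correct and essentially identical to the paper's: both apply the Kadison inequality (Lemma~B.5) to the unital positive map $\mathcal E \otimes \mathcal I_2$ and the Hermitian block matrix with $X$ and $X^\dagger$ on the off-diagonal, then read off the desired inequality from a diagonal block. The only cosmetic difference is that the paper places $X^\dagger$ in the upper-right corner (so the Schwarz inequality appears in the upper-left block), whereas you place $X$ there (so it appears in the lower-right block).
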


\begin{proof}
By applying the Kadison inequality~(\ref{Kadison}) to positive map $\mathcal E \otimes \mathcal I_2$ and a Hermitian operator 
\begin{equation}
X' := \left[
\begin{array}{cc}
0 & X^\dagger \\
X & 0 \\
\end{array}
\right] \in L(\bm H \otimes \mathbb C^2 ),
\end{equation}
we obtain $(\mathcal E \otimes \mathcal I_2)  (X'^2 ) \geq  \left( (\mathcal E \otimes \mathcal I_2)  (X')\right)^2$, or equivalently
\begin{equation}
\left[
\begin{array}{cc}
\mathcal E (X^\dagger X) & 0 \\
0 & \mathcal E (XX^\dagger) \\
\end{array}
\right] \geq \left[ 
\begin{array}{cc}
\mathcal E(X^\dagger) \mathcal  E(X) & 0 \\
0 & \mathcal E(X) \mathcal E (X^\dagger) \\
\end{array}
\right],
\end{equation} 
which implies inequality (\ref{Schwarz}).
$\Box$\end{proof}

\

We now prove the monotonicity~(\ref{monotonicity}) in~Theorem 3.5.
Let $\mathcal E : L(\bm H) \to L(\bm H')$ be a CPTP map and  $\rho, \sigma \in Q(\bm H)$ be states.
For simplicity, we assume that $\rho$, $\sigma$, $\mathcal E (\rho)$, and $\mathcal E(\sigma)$ are positive definite.
We define $\mathcal L$, $\mathcal R$, and $\mathcal D$ that act on $L(\bm H)$ such that, for $X \in L(\bm H)$,
\begin{equation}
\mathcal L (X) := \sigma X, \ \mathcal R (X) := X\rho^{-1}, \ \mathcal D (X) :=  \sigma X \rho^{-1},
\end{equation}
where $\mathcal D = \mathcal L \mathcal R = \mathcal R \mathcal L$.  
We note that $\mathcal L$, $\mathcal R$, and $\mathcal D$ are positive definite in terms of the Hilbert-Schmidt inner product~(\ref{Hilbert_Schmidt}).

Let $\sigma := \sum_k p_k | \varphi_k \rangle \langle \varphi_k |$ be the spectrum decomposition.  Then the eigenvectors of $\mathcal L$ are  $\{ | \varphi_k \rangle \langle \varphi_l | \}_{kl}$ and the eigenvalues are $\{ p_k \}_k$.  We then have $(\ln \mathcal L) (X) = (\ln \sigma)X$. Similarly, $(\ln \mathcal R) (X) = - X(\ln \rho)$.  Therefore,
\begin{equation}
(\ln \mathcal D) (X) = (\ln \mathcal L + \ln \mathcal R) (X) = (\ln \sigma)X - X(\ln \rho).
\end{equation}
 By using the Hilbert-Schmidt inner product, we have
\begin{equation}
\begin{split}
S(\rho \| \sigma) &= \langle \rho^{1/2}, (\ln \rho) \rho^{1/2} \rangle_{\rm HS} - \langle \rho^{1/2}, (\ln \sigma) \rho^{1/2}\rangle_{\rm HS} \\
&= \langle \rho^{1/2}, (- \ln \mathcal D) ( \rho^{1/2} ) \rangle_{\rm HS}.
\end{split}
\end{equation}
On the other hand,
\begin{equation}
S(\mathcal E (\rho) \| \mathcal E (\sigma))  = \langle \mathcal E(\rho)^{1/2}, (- \ln \mathcal D') (\mathcal E(\rho)^{1/2}) \rangle_{\rm HS},
\end{equation}
where $\mathcal D'$ acts on $L(\bm H')$ such that
\begin{equation}
\mathcal D' (X) :=  \mathcal E(\sigma) X  \mathcal  E(\rho)^{-1} \ (X \in L(\bm H')).
\end{equation}
By noting that
\begin{equation}
-\ln x = \int_0^{+\infty} \left[ f_t(x)  + ( t(1+t) )^{-1} \right] dt,
\end{equation}
where $f_t (x) := (x+t)^{-1} - t^{-1}$ ($0 < t < + \infty$), we have
\begin{equation}
S(\rho \| \sigma) = \int_0^{+\infty} \left[ \langle \rho^{1/2}, f_t (\mathcal D) ( \rho^{1/2} ) \rangle_{\rm HS} + ( t(1+t) )^{-1} \right] dt,
\end{equation}
\begin{equation}
S(\mathcal E (\rho) \| \mathcal E (\sigma))  = \int_0^{+\infty} \left[ \langle \mathcal E (\rho )^{1/2}, f_t (\mathcal D') ( \mathcal E (\rho )^{1/2} ) \rangle_{\rm HS} + ( t(1+t) )^{-1} \right] dt.
\end{equation}
Therefore, in order to to prove the monotonicity~(\ref{monotonicity}), it is sufficient to show that for any $0 < t < + \infty$ 
\begin{equation}
\langle \rho^{1/2}, f_t (\mathcal D) ( \rho^{1/2} ) \rangle_{\rm HS} \geq \langle \mathcal E (\rho )^{1/2}, f_t (\mathcal D') ( \mathcal E (\rho )^{1/2} ) \rangle_{\rm HS}.
\end{equation}

We next define $\mathcal V : L(\bm H') \to L(\bm H)$ such that
\begin{equation}
\mathcal V (X) := \mathcal E^\dagger \left( X \mathcal E(\rho)^{-1/2} \right) \rho^{1/2},
\end{equation}
or equivalently
\begin{equation}
\mathcal V \left( X \mathcal E(\rho)^{1/2} \right) = \mathcal E^\dagger (X) \rho^{1/2}
\end{equation}
for  $X \in L(\bm H')$.
We note that  $\mathcal V (\mathcal E (\rho)^{1/2} ) = \rho^{1/2}$ holds, because $\mathcal E$ is trace-preserving so that $\mathcal E^\dagger$ is unital.  We then have
\begin{equation}
\begin{split}
\langle \rho^{1/2}, f_t (\mathcal D) \rho^{1/2}\rangle_{\rm HS} &=  \langle \mathcal V (\mathcal E (\rho)^{1/2} ), f_t (\mathcal D) \mathcal V (\mathcal E (\rho)^{1/2} ) \rangle_{\rm HS} \\
&=  \langle \mathcal E (\rho)^{1/2}, \mathcal V^\dagger f_t (\mathcal D) \mathcal V (\mathcal E (\rho)^{1/2} ) \rangle_{\rm HS}.
\end{split}
\end{equation}
Therefore, our goal is to show that
\begin{equation}
\langle \mathcal E (\rho)^{1/2}, \mathcal V^\dagger f_t (\mathcal D) \mathcal V (\mathcal E (\rho)^{1/2} ) \rangle_{\rm HS} \geq \langle \mathcal E(\rho)^{1/2}, f_t (\mathcal D') (\mathcal E(\rho)^{1/2}) \rangle_{\rm HS}.
\label{goal1}
\end{equation}
To show inequality~(\ref{goal1}), it is sufficient to show that
\begin{equation}
\mathcal V^\dagger f_t (\mathcal D) \mathcal V \geq f_t (\mathcal D').
\label{goal2}
\end{equation}

$\mathcal V$ is a contraction in terms of the Hilbert-Schmidt inner product, because
\begin{equation}
\begin{split}
&\langle \mathcal E^\dagger (X) \rho^{1/2}, \mathcal E^\dagger (X) \rho^{1/2} \rangle_{\rm HS} = {\rm tr} \left[ \rho \mathcal E^\dagger (X^\dagger ) \mathcal E^\dagger (X) \right] \\
&\leq {\rm tr} \left[  \rho \mathcal E^\dagger (X^\dagger X)  \right] = {\rm tr} \left[ \mathcal E (\rho) X^\dagger X \right] \\
&= \langle  X \mathcal E (\rho)^{1/2}, X \mathcal E (\rho)^{1/2} \rangle_{\rm HS},
\end{split}
\end{equation}
where we used the Schwarz inequality~(\ref{Schwarz}) in Lemma~B.6 for $\mathcal E^\dagger$ that is unital.  Since $f_t(x)$ is operator convex from Lemma~B.2 and $\lim_{x \to + 0} f_t (x) = 0$, we obtain
\begin{equation}
f_t (\mathcal V^\dagger \mathcal D \mathcal V) \leq  \mathcal V^\dagger f_t ( \mathcal D ) \mathcal V,
\label{appendix_inequality1}
\end{equation}
where we used the Jensen's operator inequality~(\ref{operator_convex2}) in Lemma~B.4.

We next show that
\begin{equation}
\mathcal V^\dagger \mathcal D \mathcal V \leq \mathcal D'.
\label{appendix_inequality2}
\end{equation}
In fact,
\begin{equation}
\begin{split}
&\langle X \mathcal E(\rho)^{1/2}, \mathcal V^\dagger \mathcal D \mathcal V ( X \mathcal E(\rho)^{1/2}) \rangle_{\rm HS} \\
&= \langle \mathcal V (X \mathcal E(\rho)^{1/2} ), \mathcal D \mathcal V ( X \mathcal E(\rho)^{1/2}) \rangle_{\rm HS} \\
&= \langle \mathcal E^\dagger (X) \rho^{1/2}, \mathcal D (\mathcal E^\dagger (X) \rho^{1/2} ) \rangle_{\rm HS}
= {\rm tr} \left[ \rho \mathcal E^\dagger (X^\dagger) \mathcal D \mathcal E^\dagger (X) \right]\\
&= {\rm tr} \left[ \sigma \mathcal E^\dagger (X) \mathcal E^\dagger (X^\dagger )  \right]
\leq  {\rm tr} \left[ \sigma \mathcal E^\dagger (X X^\dagger )  \right] = {\rm tr} \left[ \mathcal E (\sigma)X X^\dagger   \right]  \\
&= {\rm tr} [\mathcal E (\rho)^{1/2} X^\dagger \mathcal E(\sigma) X \mathcal E(\rho)^{1/2} \mathcal E(\rho)^{-1} ] \\
&= \langle X \mathcal E(\rho)^{1/2}, \mathcal D' ( X \mathcal E(\rho)^{1/2}) \rangle_{\rm HS},
\end{split}
\end{equation}
where we again used the Schwarz inequality~(\ref{Schwarz}) for $\mathcal E^\dagger$.
Since $f_t (x)$ is decreasing-operator monotone from Lemma~B.1, we obtain
\begin{equation}
f_t ( \mathcal D' ) \leq  f_t (\mathcal V^\dagger \mathcal D \mathcal V).
\label{appendix_inequality3}
\end{equation}

By combining inequalities~(\ref{appendix_inequality1}) and (\ref{appendix_inequality3}), we finally obtain inequality~(\ref{goal2}), which implies the monotonicity~(\ref{monotonicity}). 
We note that the assumption of the complete positivity has been used only for the proof of  the Schwarz's operator inequality~(\ref{Schwarz}), in which the assumption of the $2$-positivity is in fact enough.

\

\section*{Acknowledgments}
The author thanks to Yu Watanabe for valuable comments. 
The author also thanks to Hal Tasaki and Kouki Nakata for valuable comments for the revision of the manuscript.
This work was supported by the Grant-in-Aid for Research Activity Start-up (KAKENHI 11025807).


\end{document}